\documentclass[a4paper,reqno,UKenglish]{amsart}
\pdfoutput=1

\usepackage{enumitem}
\usepackage[english]{babel}
\usepackage{xcolor}
\usepackage{amsmath,amssymb,amsthm}
\usepackage{tikz}\usetikzlibrary{arrows}
\usepackage{tikz-cd}
\usepackage{stmaryrd}
\usepackage{extarrows}
\usepackage{pbox}
\usepackage{mathtools}

\usepackage[breaklinks=true]{hyperref}
\definecolor{CustomGreen}{rgb}{0.00,0.50,0.00}
\hypersetup{
    colorlinks = true,
    urlcolor = CustomGreen,
    linkcolor= CustomGreen,
    citecolor= CustomGreen,
    filecolor= CustomGreen,
}

\tikzcdset{
    relay arrow/.default = 10pt,
    relay arrow/.style = {
        rounded corners,
        to path = {
            \pgfextra{
                \def\sourcecoordinate{\pgfpointanchor{\tikztostart}{center}}
                \def\targetcoordinate{\pgfpointanchor{\tikztotarget}{center}}
                \pgfmathanglebetweenpoints{\sourcecoordinate}{\targetcoordinate}
                \edef\tempangle{\pgfmathresult}
                \pgftransformrotate{\tempangle}
                \pgfmathifthenelse{#1>0}{\tempangle+90}{\tempangle-90}
                \pgfcoordinate{tempcoord}{\pgfpointanchor{\tikztostart}{\pgfmathresult}}
            }
            (tempcoord)
            -- ([yshift=#1]tempcoord)
            -- ([yshift=#1]tempcoord-|\tikztotarget.center)\tikztonodes
            --(\tikztotarget)
        }
    }
}

\makeatletter
\def\Ddots{\mathinner{\mkern1mu\raise\p@
\vbox{\kern7\p@\hbox{.}}\mkern2mu
\raise4\p@\hbox{.}\mkern2mu\raise7\p@\hbox{.}\mkern1mu}}
\makeatother

\providecommand{\noopsort}[1]{}

\theoremstyle{plain}
\newtheorem{theorem}{Theorem}
\newtheorem{corollary}[theorem]{Corollary}
\newtheorem{lemma}[theorem]{Lemma}
\newtheorem{proposition}[theorem]{Proposition}

\theoremstyle{definition}
\newtheorem{definition}[theorem]{Definition}
\newtheorem{remark}[theorem]{Remark}

\newtheorem*{claim*}{Claim}
\newtheorem*{fact*}{Fact}


\newcommand{\fin}{\mathbf{FinSet}}

\newcommand{\kri}{\mathsf{Kri}}
\newcommand{\rkri}{\mathsf{RKri}}
\newcommand{\Si}{\Sigma}
\newcommand{\EM}{\mathrm{EM}}
\newcommand{\A}{\mathcal A}
\newcommand{\B}{\mathcal B}
\newcommand{\Tkn}{\mathcal{T}_{k,n}}

\renewcommand{\epsilon}{\varepsilon}
\renewcommand{\phi}{\varphi}
\newcommand{\N}{\mathbb{N}}
\newcommand{\G}{\mathcal{G}} 
\renewcommand{\k}{\mathbf{k}} 

\newcommand{\op}{{\mathrm{op}}}
\newcommand{\id}{{\rm id}}
\renewcommand{\tilde}{\widetilde}

\renewcommand{\o}{\overline}
\newcommand{\pit}{\pitchfork}
\newcommand{\im}{\mathrm{im}}

\newcommand\sees{\bowtie}
\newcommand{\up}{{\,\uparrow\,}}
\newcommand{\down}{{\downarrow\,}}
\newcommand{\adj}{{\,\frown\,}}
\newcommand{\tr}[1]{\llbracket #1 \rrbracket}


\newcommand{\St}[2]{{#1}\llbracket #2 \rrbracket}
\newcommand{\Dg}[2]{{#1}\llparenthesis #2 \rrparenthesis}

\newcommand{\mono}{\rightarrowtail}
\newcommand{\epi}{\twoheadrightarrow}
\newcommand{\into}{\hookrightarrow}
\newcommand{\M}{\mathcal{M}}
\newcommand{\E}{\mathcal{E}}

\newcommand{\CC}{\mathbb C}
\newcommand{\Pkn}{\mathbb{P}_{k,n}}
\newcommand{\Pk}{\mathbb{P}_k}
\newcommand{\En}{\mathbb{E}_n}
\newcommand{\Mk}{\mathbb{M}_k}

\newcommand{\CL}{\mathcal C}

\newcommand{\CMLk}{\mathcal C^{\mathrm{ML}}_k}


\newcommand\qtq[1]{\quad\text{#1}\quad}
\newcommand\ee[1]{\enspace{#1}\enspace}

\newcommand{\ef}{Ehrenfeucht--Fra{\"i}ss{\'e}}

\begin{document}
\title{Lov{\'a}sz-Type Theorems and Game Comonads}

\author{Anuj Dawar \and Tom\'a{\v s} Jakl}
\address{Department of Computer Science and Technology\\
University of Cambridge, UK}
\email{anuj.dawar@cl.cam.ac.uk, tj330@cam.ac.uk}

\author{Luca Reggio}
\address{Department of Computer Science\\
University of Oxford, UK}
\email{luca.reggio@cs.ox.ac.uk}
\thanks{
    Research supported in part by  the EPSRC grant EP/T007257/1, and by the European Union's Horizon 2020 research and innovation programme under the Marie Sk{\l}odowska-Curie grant agreement No 837724.
    \\
    \indent
    To appear in the Proceedings of the 36th Annual ACM/IEEE Symposium on Logic in Computer Science (LICS 2021)\relax
}

\keywords{Game comonads, homomorphism counting, $k$-Weisfeiler-Leman equivalence, tree depth, tree width, modal logic}

\maketitle

\begin{abstract}
    Lov\'{a}sz (1967) showed that two finite relational structures $A$ and $B$ are isomorphic if, and only if, the number of homomorphisms from $C$ to $A$ is the same as the number of homomorphisms from $C$ to $B$ for any finite structure $C$. Soon after, Pultr (1973) proved a categorical generalisation of this fact. We propose a new categorical formulation, which applies to any locally finite category with pushouts and a proper factorisation system. As special cases of this general theorem, we obtain two variants of Lov\'{a}sz' theorem: the result by Dvo{\v{r}}{\'a}k (2010) that characterises equivalence of graphs in the $k$-dimensional Weisfeiler--Leman equivalence by homomorphism counts from graphs of tree-width at most $k$, and the result of Grohe (2020) characterising equivalence with respect to first-order logic with counting and quantifier depth $k$ in terms of homomorphism counts from graphs of tree-depth at most $k$.  The connection of our categorical formulation with these results is obtained by means of the game comonads of Abramsky et al.  We also present a novel application to homomorphism counts in modal logic.
\end{abstract}

\maketitle

\section{Introduction}\label{s:intro}

Over fifty years ago, Lov\'{a}sz~\cite{Lovasz1967} proved that two finite graphs (or, more generally, any two finite relational structures) $A$ and $B$ are isomorphic if, and only if, for every finite graph $C$, the number of homomorphisms from $C$ to $A$ is the same as the number of homomorphisms from $C$ to $B$.  While one direction of this equivalence is obvious, the other direction establishes that the isomorphism type of a finite graph $A$ is characterised by an infinite integer vector $V$ indexed by $\mathcal{F}$, the collection of (isomorphism classes of) finite graphs, where for $C \in \mathcal{F}$, $V_C$ is the number of homomorphisms from $C$ to $A$.  This seminal result has led to extensions and investigations in many different directions.  One that concerns us is that by restricting the collection $\mathcal{F}$ to a natural sub-collection, we can often obtain characterisations of natural coarsenings of the relation of isomorphism.  For any subclass $\mathcal{F}'$ of $\mathcal{F}$, say that a pair of graphs $G$ and $H$ is $\mathcal{F}'$-homomorphism equivalent if for any $K$ in $\mathcal{F}'$ the number of homomorphisms from $K$ to $G$ is the same as the number of homomorphisms from $K$ to~$H$. 

For instance, Dvo{\v{r}}{\'a}k~\cite{dvovrak2010recognizing} shows that for any $k$, if we consider the collection $\mathcal{T}_k$ of  all graphs of tree-width at most $k$, then a pair of graphs are $\mathcal{T}_k$-homomorphism equivalent if, and only if, they are equivalent with respect to the $k$-dimensional Weisfeiler--Leman ($k$-WL) equivalence.  The \mbox{$k$-WL} equivalence relations are a widely studied family of approximations of the graph isomorphism relation (see~\cite{Kiefer2020WL} for a recent exposition) with many equivalent characterisations in terms of combinatorics, logic, algebra and linear optimisation.  From our point of view, one important such characterisation is the fact that two graphs $G$ and $H$ are $k$-WL equivalent if, and only if, they are not distinguished by any sentence of $\CL^{k+1}$---first-order logic with counting quantifiers restricted to no more than $k+1$ distinct variables~\cite{CFI92}.  This logical formulation of the equivalence extends naturally to finite structures over any relational vocabulary.

Dvo{\v{r}}{\'a}k's result offers yet another characterisation of $k$-WL equivalence, this time in terms of homomorphism counts from the equally widely studied class of graphs $\mathcal{T}_k$.  An important special case is that of $k=1$.   
Since $\mathcal{T}_1$ is the class of finite forests, and $1$-WL equivalence is the same as \emph{fractional isomorphism} (see~\cite{Ramana1994fractional}), the result also provides an elegant characterisation of equivalence of graphs with respect to the number of homomorphisms from trees.  It is also known that two graphs are co-spectral if, and only if, they are $\mathrm{Cyc}$-homomorphism equivalent, where $\mathrm{Cyc}$ is the class of simple cycles, and that two graphs are \emph{quantum isomorphic} if, and only if, they are $\mathcal{P}$-homomorphism equivalent for the class $\mathcal{P}$ of planar graphs~\cite{manvcinska2020quantum}.  Another recent addition to this collection of results is that of Grohe~\cite{grohe2020counting}.  He shows that, if $\mathcal{D}_n$ is the collection of graphs of \emph{tree-depth} at most $n$, then two graphs are $\mathcal{D}_n$-homomorphism equivalent if, and only if, they are not distinguished by any sentence of $\CL_n$---first-order logic with counting with quantifier depth at most $n$.

This variety of results relating homomorphism counts over restricted classes to approximations of isomorphism, often proved with very different methods, calls for the development of a more general theory.  In the present paper we develop a general categorical result which yields, as example applications, the results of Lov\'{a}sz, Dvo{\v{r}}{\'a}k and Grohe mentioned above.  The connection between our categorical result and the results of Dvo{\v{r}}{\'a}k and Grohe is established using the game comonads of Abramsky et al.~\cite{Abramsky2017b,AbramskyShah2018}.  Specifically, in~\cite{Abramsky2017b} the graded pebbling comonad $\Pk$ on the category of $\sigma$-structures (for a finite relational signature $\sigma$) is introduced and it is shown that the coalgebras for this comonad correspond in a natural way with certain tree decompositions of width at most $k-1$.  In particular, a finite $\sigma$-structure admits a coalgebra structure for this comonad if, and only if, it has tree-width at most $k-1$.  At the same time, it is shown that isomorphism in the Kleisli category corresponding to the comonad (or equivalently in the Eilenberg--Moore category) is exactly indistinguishability in $\CL^k$.  This brings together in one categorical construction the essential elements of Dvo{\v{r}}{\'a}k's theorem.  In an exactly analogous fashion, the {\ef} comonad $\En$ of~\cite{AbramskyShah2018} captures, at the level of coalgebras, the structures of tree-depth at most $n$ and yields a notion of isomorphism corresponding to equivalence in $\CL_n$.  These categorical formulations of the combinatorial parameters tree-width and tree-depth on the one hand, and the logical equivalences with respect to $\CL^k$ and $\CL_n$ on the other, are what enable us to give a single result generalising the three theorems of Lov\'{a}sz, Dvo{\v{r}}{\'a}k and Grohe.

Our general result states that in any locally finite category with pushouts and a proper factorisation system, two objects $m$ and $n$ are isomorphic if, and only if, the number of morphisms from $k$ to $m$ is the same as that from $k$ to $n$ for any object $k$.  This is proved in Section~\ref{s:categorical-Lovasz}.  In Section~\ref{s:applications-fmt}, we use the game comonads to derive the theorems of 
Dvo{\v{r}}{\'a}k and Grohe from this general result and also a combination of the two characterising equivalence in $\CL^k_n$---first-order logic with counting with quantifier depth $n$ and $k$ variables.  In Section~\ref{s:applications-modal}, we apply the machinery we have developed to another game comonad: the graded modal comonad $\Mk$.   This gives a new Lov\'{a}sz-style result for pointed Kripke structures, relating homomorphism counts from synchronization trees of bounded height to equivalence in a modal logic with counting.  Finally, in Section~\ref{s:normal-forms} we draw some conclusions from this about certain normal forms for first-order logic with counting.

\vspace{1ex}
\subsection*{Related work}
Other categorical generalisations of Lov\'asz' theorem have been obtained by Pultr~\cite{pultr1973isomorphism} and Isbell~\cite{isbell1991some}. Cf.\ also Lov\'asz' paper~\cite{Lovasz1972}. In modern terms, Pultr works with a finitely well-powered, locally finite category with (extremal epi, mono) factorisations and Isbell requires the category to be locally finite with a special type of factorisation system (which he calls ``bicategory''). Unlike the results in op.\ cit., our Theorem~\ref{th:Lovasz-categories} does not hinge on a combinatorial counting argument, but rather on an application of the inclusion-exclusion principle, and appears to be better suited for applications to game comonads.

\section{Preliminaries on Game Comonads}\label{s:prelim}
In this section, we recall the necessary material on (game) comonads and their coalgebras. 

\subsection{Categories}
We briefly recall some basic notions of category theory. For a more thorough introduction, see e.g.~\cite{AHS1990} or~\cite{MacLane1998}. 

Let $\A$ be a category, and $f\colon A\to B$ and $g\colon A\to C$ two arrows in $\A$. The \emph{pushout} of $f$ and $g$, if it exists, consists of two arrows $h\colon C\to D$ and $i\colon B\to D$ such that $i\circ f=h\circ g$ and the following universal property is satisfied: For any two arrows $i'\colon B\to E$ and $h'\colon C\to E$ with $i'\circ f=h'\circ g$, there is a unique arrow $\xi\colon D\to E$ satisfying $\xi\circ i=i'$ and $\xi\circ h=h'$.
\[\begin{tikzcd}
    A \arrow{r}{f} \arrow{d}[swap]{g} & B \arrow{d}{i} \arrow[bend left = 30, looseness=1]{ddr}{i'} & \\[0.6em]
C \arrow{r}{h} \arrow[bend right = 30, looseness=1,swap]{drr}{h'} & D \arrow[dashed]{dr}{\xi} & \\
& & E
\end{tikzcd}\]
The \emph{pullback} of two arrows $f\colon A\to C$ and $g\colon B\to C$ in $\A$, if it exists, is the pushout of $f$ and $g$ in the opposite category $\A^\op$ obtained by reversing the direction of the arrows in $\A$.

Pushouts and pullbacks are special instances of the concepts of colimits and limits of diagrams, respectively. For example, the pushout of $f\colon A\to B$ and $g\colon A\to C$ coincides with the colimit of the diagram
\[\begin{tikzcd}[column sep=1.5em]
B & A \arrow{l}[swap]{f} \arrow{r}{g} & C.
\end{tikzcd}\]
For arbitrary colimits, we consider diagrams of any shape.

A functor $F\colon \A\to\B$ is \emph{faithful} if, given any two parallel arrows $f,g\colon A\to A'$ in $\A$, $Ff=Fg$ implies $f=g$. Further, $F$ is \emph{full} if, for any two objects $A$ and $A'$ of $\A$, each arrow $FA\to FA'$ is of the form $Ff$ for some $f\colon A\to A'$. A subcategory $\A$ of a category $\B$ is said to be \emph{full} if the inclusion functor $\A\to \B$ is full.

A fundamental notion of category theory is that of an adjunction. Let $F\colon \A \to \B$ and $G\colon \B \to \A$ be any two functors between categories. We say that $F$ is \emph{left adjoint} to $G$ (equivalently, $G$ is \emph{right adjoint} to $F$), and write $F\dashv G$, provided that for each object $B\in \B$ there is a morphism $\epsilon_B\colon FGB\to B$ satisfying the following universal property: For every $A\in \A$ and morphism $f\colon FA\to B$ in $\B$, there exists a unique morphism $g\colon A\to GB$ in $\A$ such that $\epsilon_B\circ Fg=f$. 
\[\begin{tikzcd}
FGB \arrow{r}{\epsilon_B} & B \\
FA \arrow{ur}[swap]{f} \arrow{u}{Fg}
\end{tikzcd}\]

\subsection{Comonads}\label{s:prelim-comonads}
A \emph{comonad} (\emph{in Kleisli form}) on a category $\A$ is given by:
\begin{itemize}
\item an object map $\CC\colon \A\to\A$;
\item a morphism $\epsilon_A\colon \CC(A)\to A$ for every $A\in\A$;
\item a \emph{coextension operation} associating with any morphism $f\colon \CC(A)\to B$ a morphism $f^*\colon \CC(A)\to \CC(B)$.
\end{itemize}
These data must satisfy the following equations for all morphisms $f\colon \CC(A)\to B$ and $g\colon \CC(B)\to C$:
\[
\epsilon_A^*=\id_{\CC(A)}, \ \ \epsilon_B\circ f^*=f, \ \ (g\circ f^*)^*=g^*\circ f^*.
\]
In particular, we can extend $\CC$ to a functor $\A\to\A$ by setting $\CC(f)\coloneqq(f\circ \epsilon_A)^*$ for every morphism $f\colon A\to B$.\footnote{It is easy to see that, setting $\delta_A\coloneqq \id_{\CC(A)}^*$ for every $A\in \A$, the tuple $(\CC,\epsilon,\delta)$ is a comonad in the more traditional sense, where $\epsilon$ is the counit and $\delta$ the comultiplication. In fact, these two formulations are equivalent.}

We recall next some examples of comonads that play an important role in this paper. Let $\sigma$ be an arbitrary relational signature. The category $\Si$ has as objects $\sigma$-structures (denoted $A,B,\dots$) and as morphisms \emph{$\sigma$-homomorphisms} (sometimes simply called \emph{homomorphisms}), i.e.\ functions $f\colon A\to B$ such that, for all relation symbols ${R\in\sigma}$, $f(R^A)\subseteq R^B$ where $R^A$ and $R^B$ are the interpretations of $R$ in $A$ and $B$, respectively.

For each positive integer $n$, the \emph{\ef} comonad $\En$ on $\Si$ (cf.~\cite{AbramskyShah2018}) is defined as follows. For any $A\in \Si$, the universe of $\En(A)$ is the set $A^{\leq n}$ of all non-empty sequences of length at most $n$. Before defining the interpretations of the relation symbols, let us define the map
\[
\epsilon_A\colon A^{\leq n}\to A, \ \ \epsilon_A[a_1,\ldots, a_l]\coloneqq a_l.
\]
With this notation, for each $R\in\sigma$ of arity $j$, we define $R^{\En(A)}$ to be the set of those tuples $(s_1,\ldots,s_j)$ of sequences which are pairwise comparable in the prefix order, and such that $(\epsilon_A(s_1),\dots,\epsilon_A(s_j))\in R^A$. The coextension operation sends a $\sigma$-homomorphism $f\colon \En(A)\to B$ to the $\sigma$-homomorphism 
\[
f^*\colon \En(A)\to \En(B), \ \ f^*[a_1,\dots,a_j]\coloneqq [b_1,\dots,b_j]
\]
where $b_i\coloneqq f[a_1,\dots,a_i]$ for all $1\leq i\leq j$. It is not difficult to see that these data define a comonad on the category $\Si$.

From the viewpoint of model-comparison games, the elements of $A^{\leq n}$ represent the plays in the $\sigma$-structure $A$ of length at most $n$, and a $\sigma$-homomorphism $\En(A)\to B$ is a strategy for Duplicator in the {\ef} game with $n$ rounds, where Spoiler plays always in $A$ and Duplicator responds in $B$. For more details, cf.~\cite{AbramskyShah2018,AbramskyShah-ext}. Next we recall from~\cite{Abramsky2017b} another comonad on $\Si$, which models pebble games.

For each positive integer $k$, set $\k\coloneqq \{1,\dots,k\}$. Given a $\sigma$-structure $A$, we consider the set $(\k\times A)^+$ of all non-empty sequences of elements of $\k\times A$.   We call a pair $(p,a) \in {\k\times A}$ a \emph{move}.  Whenever $[(p_1,a_1),\dots,(p_l,a_l)]$ is a sequence of moves, $p_i$ is called the \emph{pebble index} of the move $(p_i,a_i)$. As with $\En$, define the map
\[
\epsilon_A\colon (\k\times A)^+\to A, \ \ \epsilon_A[(p_1,a_1),\dots,(p_l,a_l)]\coloneqq a_l
\]
sending a play to its last move. We let $\Pk(A)$ be the $\sigma$-structure with universe $(\k\times A)^+$ and such that, for every relation $R\in\sigma$ of arity $j$, its interpretation $R^{\Pk(A)}$ consists of those tuples of sequences $(s_1,\ldots,s_j)$ such that: (i) the $s_i$ are pairwise comparable in the prefix order; (ii) whenever $s_i$ is a prefix of~$s_{i'}$, the pebble index of the last move in $s_i$ does not appear in the suffix of $s_i$ in $s_{i'}$; and (iii) ${(\epsilon_A(s_1),\dots,\epsilon_A(s_j))\in R^A}$. Finally, the coextension operation is the same, mutatis mutandis, as with $\En$. Again, it is not difficult to see that these data define a comonad $\Pk$ on $\Si$, called the \emph{pebbling comonad}.

The {\ef} comonad $\En$ can be ``combined'' with the pebbling comonad $\Pk$ to obtain a new comonad~$\Pkn$ on $\Si$, cf.~\cite{paine2020pebbling}. For every $A\in\Si$, the universe of $\Pkn(A)$ is the set $(\k\times A)^{\leq n}$ of all non-empty sequences of length at most~$n$ (this corresponds to bounding the length of plays in pebble games). The homomorphisms $\epsilon_A$ and the coextension operation for $\Pkn$ are defined as the restrictions of the corresponding concepts for the pebbling comonad $\Pk$.

\subsection{Coalgebras}\label{s:prelim-coalgebras}
Let $\CC$ be a comonad (in Kleisli form) on a category $\A$. A \emph{coalgebra} for $\CC$ is a pair $(A, A\xrightarrow{\alpha} \CC(A))$ such that $A\in \A$, $\alpha$ is a morphism in $\A$, and the following diagrams commute:
\begin{center}
\begin{tikzcd}[column sep=2.5em, row sep=2.5em]
A \arrow{dr}[swap]{\id_A} \arrow{r}{\alpha} & \CC(A) \arrow{d}{\epsilon_A} \\
{} & A
\end{tikzcd}
\ \ \ \ \ \ 
\begin{tikzcd}[row sep=2.5em]
A \arrow{r}{\alpha} \arrow{d}[swap]{\alpha} & \CC(A) \arrow{d}{\id_{\CC(A)}^*} \\
\CC(A) \arrow{r}{\CC(\alpha)} & \CC(\CC(A))
\end{tikzcd}
\end{center}
We refer to $\alpha$ as a \emph{coalgebra structure} for $A$. In the case of the game comonads $\En,\Pk$, and $\Pkn$, the $\sigma$-structures admitting a coalgebra structure can be characterised in an elegant way in terms of key combinatorial parameters, as we now explain. 

Let $(F,\leq)$ be any poset. For all $x,y\in F$, we write $x\up y$ whenever $x$ and $y$ are comparable, i.e.\ either $x\leq y$ or $y\leq x$, and let $\down x\coloneqq \{y\in F\mid y\leq x\}$. A \emph{forest} is a poset $(F,\leq)$ such that, for all $x\in F$, the set $\down x$ is finite and totally ordered. The \emph{height} of $(F,\leq)$ is $\sup_{x\in F}{|\down x|}$.

If $G=(V,\frown)$ is a graph (where $V$ is the set of vertices and $\frown$ the adjacency relation), a \emph{forest cover} of $G$ consists of a forest $(F,\leq)$ and an injective function $f\colon V\to F$ such that $v\adj v'$ entails $f(v)\up f(v')$ for all $v,v'\in V$.  

Finally, recall that the \emph{Gaifman graph} of a $\sigma$-structure $A$ is $\G_A\coloneqq (A,\frown)$ where $a\adj a'$ if $a\neq a'$ and there exist $R\in \sigma$ and a tuple $(a_1,\dots,a_l)\in R^A$ such that $a=a_i$ and $a'=a_j$ for some $i,j\in\{1,\dots,l\}$. A $\sigma$-structure $A$ has \emph{tree-depth at most $n$} if there is a forest cover $(F,\leq)$ of $\G_A$ of height ${\leq \,} n$.

The following is a direct consequence of \cite[Theorem~17]{AbramskyShah2018}:
\begin{proposition}\label{p:En-coalgebras}
A $\sigma$-structure $A$ admits a coalgebra structure $\alpha\colon A\to \En(A)$ if, and only if, it has tree-depth at most $n$.
\end{proposition}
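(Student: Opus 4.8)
The plan is to unwind the abstract coalgebra axioms into a concrete combinatorial condition and then match that condition with the notion of a forest cover of $\G_A$. First I would record what the two defining diagrams say for a candidate structure map $\alpha\colon A\to\En(A)$. Writing $\alpha(a)=[a_1,\dots,a_l]$, the counit law $\epsilon_A\circ\alpha=\id_A$ says precisely that $a_l=a$, so every sequence in the image of $\alpha$ ends at the element it is assigned to; in particular $\alpha$ has left inverse $\epsilon_A$ and is therefore injective. For the second (coassociativity) law I would compute the two composites explicitly from the coextension formula: one finds $\id_{\En(A)}^*[a_1,\dots,a_l]=[[a_1],[a_1,a_2],\dots,[a_1,\dots,a_l]]$ and $\En(\alpha)[a_1,\dots,a_l]=(\alpha\circ\epsilon_A)^*[a_1,\dots,a_l]=[\alpha(a_1),\dots,\alpha(a_l)]$. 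Comparing these two sequences of sequences componentwise shows that coassociativity holds if and only if $\alpha(a_i)=[a_1,\dots,a_i]$ for each $i$, i.e.\ each proper prefix of $\alpha(a)$ is itself a value of $\alpha$. This ``prefix consistency'' is the heart of the matter.

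For the forward direction, given a coalgebra $\alpha$ I would put a relation on $A$ by declaring $b\leq a$ whenever $\alpha(b)$ is a prefix of $\alpha(a)$. Antisymmetry follows from injectivity of $\alpha$, so this is a partial order, and prefix consistency identifies $\down a$ with the finite chain $\{a_1,\dots,a_l\}$ whenever $\alpha(a)=[a_1,\dots,a_l]$; thus $(A,\leq)$ is a forest of height $\sup_a\lvert\alpha(a)\rvert\leq n$. It then remains to check that the identity $A\to(A,\leq)$ is a forest cover of $\G_A$. This is where the homomorphism property of $\alpha$ enters: if $a\adj a'$, then $a$ and $a'$ occur in a common tuple of some $R^A$, so the requirement that the entries of a tuple in $R^{\En(A)}$ be pairwise comparable in the prefix order forces $\alpha(a)$ and $\alpha(a')$ to be prefix-comparable, that is $a\up a'$. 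Hence $A$ has tree-depth at most $n$.

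For the converse, suppose $(F,\leq)$ is a forest cover of $\G_A$ of height at most $n$ via an injection $f\colon A\to F$. I would first reduce to the case $F=f(A)$ equipped with the induced order: the image is again a forest (down-sets shrink to subsets of chains) of height at most $n$, and the cover property is preserved, so by transporting the order along the bijection $f$ I may assume that $A$ itself carries a forest order of height at most $n$ with $a\adj a'\Rightarrow a\up a'$. Then I define $\alpha(a)\coloneqq[a_1,\dots,a_l]$ to be the increasing enumeration of the chain $\down a$, so that $a_l=a$. Since $\down a_i=\{a_1,\dots,a_i\}$, the counit law and prefix consistency, hence both coalgebra axioms, are immediate; and $\alpha$ is a $\sigma$-homomorphism because the elements of any tuple of $R^A$ are pairwise comparable in the forest order, making their images pairwise prefix-comparable, while the condition on the endpoints is automatic as each $\alpha(c)$ ends at $c$.

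The two constructions are visibly inverse to one another, which yields the stated equivalence. The step I expect to require the most care is the explicit computation of $\id_{\En(A)}^*$ and $\En(\alpha)$ that extracts prefix consistency from the coassociativity diagram, together with the small but necessary reduction to $F=f(A)$ in the converse: it is this reduction that permits the sequences $\alpha(a)$ to be assembled out of elements of $A$ itself rather than of an ambient forest that $A$ need not exhaust.
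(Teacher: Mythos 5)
Your proposal is correct, and it is worth noting that the paper itself contains no proof of this proposition: it is stated as a direct consequence of Theorem~17 of Abramsky--Shah \cite{AbramskyShah2018}, so you have in effect reconstructed the cited argument rather than paralleled an in-text one. Your reconstruction is sound and matches the standard proof. The key computation is right: for $f=\id_{\En(A)}$ the coextension formula gives $\id_{\En(A)}^*[a_1,\dots,a_l]=[[a_1],[a_1,a_2],\dots,[a_1,\dots,a_l]]$, while $\En(\alpha)=(\alpha\circ\epsilon_A)^*$ gives $[\alpha(a_1),\dots,\alpha(a_l)]$, so coassociativity is exactly prefix consistency, and the counit law gives injectivity of $\alpha$ and $a_l=a$. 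Your forward direction correctly uses injectivity both for antisymmetry of the prefix order and to conclude that every prefix of $\alpha(a)$ is of the form $\alpha(a_i)$, so that $\down a=\{a_1,\dots,a_l\}$ is a chain of size $l\leq n$; and the comparability clause in the definition of $R^{\En(A)}$ is precisely what turns the homomorphism property of $\alpha$ into the cover condition on $\G_A$. Your converse correctly performs the reduction to $F=f(A)$ (the same normalisation the paper makes just before introducing pebble forest covers), and the observation that $c\leq c'$ implies $\down c$ is an initial segment of the chain $\down c'$ is exactly what makes the enumeration map $\alpha$ both prefix-consistent and comparability-preserving. Two points you leave implicit but which are immediate: the height bound is what guarantees $\alpha$ lands in $A^{\leq n}$ rather than in longer sequences, and the entries $a_1,\dots,a_l$ of $\alpha(a)$ are automatically distinct (prefixes of different lengths have different $\alpha$-preimages by injectivity), so the height computed from $\lvert\down a\rvert$ really is $\lvert\alpha(a)\rvert$. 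Neither affects correctness.
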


In the same way as $\sigma$-structures that admit a coalgebra structure for $\En$ can be characterised in terms of tree-depth, the existence of a coalgebra structure for the pebbling comonad $\Pk$ corresponds to tree-width. We recall the relevant definitions.

Note that, if $(F,\leq)$ is a forest cover of a graph $G=(V,\frown)$, with injective map $f\colon V\to F$, we can assume without loss of generality that $F=V$ and $f=\id_V$. We shall assume this for the remainder of this section.
A \emph{k-pebble forest cover} of a graph $G=(V,\frown)$ consists of a forest cover $(V,\leq)$ of $G$ and a pebbling function $p\colon V\to \k$ such that, whenever $v\adj v'$ with $v\leq v'$, we have $p(v)\neq p(w)$ for all $v<w\leq v'$.

A $\sigma$-structure $A$ has \emph{tree-width at most $k-1$} if its Gaifman graph $\G_A$ admits a $k$-pebble forest cover.\footnote{It is customary in graph theory to use $k-1$, instead of $k$, so that trees have tree-width $1$.} The following result is a consequence of \cite[Proposition~22]{Abramsky2017b}.
 \begin{proposition}\label{p:Pk-coalgebras}
A $\sigma$-structure $A$ admits a coalgebra structure $\alpha\colon A\to \Pk(A)$ if, and only if, it has tree-width at most $k-1$.
\end{proposition}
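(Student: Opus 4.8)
The plan is to prove Proposition~\ref{p:Pk-coalgebras} by closely mirroring the proof strategy underlying Proposition~\ref{p:En-coalgebras}, since the pebbling comonad $\Pk$ is designed to capture tree-width in exactly the way the {\ef} comonad $\En$ captures tree-depth. The statement asserts an equivalence, so I would establish the two directions separately. The key conceptual step is to observe that a coalgebra structure $\alpha\colon A\to\Pk(A)$ is a map $\alpha\colon A\to(\k\times A)^+$ assigning to each element $a\in A$ a finite sequence of moves, and the two coalgebra laws force this assignment to behave coherently: the counit law $\epsilon_A\circ\alpha=\id_A$ ensures that the last move of $\alpha(a)$ has second component $a$, while the comultiplication law ensures that the sequences are prefix-consistent, so that the images $\alpha(a)$ form a forest under the prefix order. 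First I would unpack these two diagrams into concrete combinatorial conditions on $\alpha$.

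Second, for the forward direction, I would take a coalgebra $\alpha\colon A\to\Pk(A)$ and extract a $k$-pebble forest cover of the Gaifman graph $\G_A$. The forest order on $V=A$ is defined by $a\leq a'$ iff $\alpha(a)$ is a prefix of $\alpha(a')$; the coalgebra laws guarantee that this is a well-defined forest order with $\down a$ finite and totally ordered. The pebbling function $p\colon A\to\k$ is read off as the pebble index of the last move of $\alpha(a)$. The crucial point is that the interpretation of the relations in $\Pk(A)$ via conditions (i)--(iii)---comparability in the prefix order, the non-reuse-of-pebble-index condition, and the projection landing in $R^A$---translates precisely into the requirements that $\alpha$ respects adjacency (giving a forest cover) and that the pebbling function satisfies the non-repetition condition along intervals $v<w\leq v'$. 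Checking that $\alpha$ being a $\sigma$-homomorphism forces comparability of $\alpha(a)$ and $\alpha(a')$ whenever $a\adj a'$ is the heart of this direction.

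Third, for the converse, I would start from a $k$-pebble forest cover $(V,\leq)$ with pebbling function $p$ and build the coalgebra map $\alpha$ by sending each $a\in A$ to the sequence of moves recording its chain of ancestors in the forest, namely $\alpha(a)\coloneqq[(p(a_1),a_1),\dots,(p(a_l),a_l)]$ where $a_1<\cdots<a_l=a$ enumerates $\down a$. I would then verify that $\alpha$ is a $\sigma$-homomorphism---using condition (ii) of the $k$-pebble forest cover to satisfy condition (ii) in the definition of $R^{\Pk(A)}$---and that the two coalgebra laws hold. The main obstacle, and the step requiring the most care, is reconciling the pebble-index bookkeeping: one must check that the non-repetition condition in the forest cover ($p(v)\neq p(w)$ for $v<w\leq v'$) corresponds exactly to the suffix condition (ii) in the definition of $R^{\Pk(A)}$, and that this correspondence is tight in both directions. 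Since all of this follows the template of \cite[Proposition~22]{Abramsky2017b}, I would ultimately appeal to that result, having set up the translation between coalgebra structures and $k$-pebble forest covers so that the cited proposition applies directly.
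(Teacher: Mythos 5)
Your proposal is correct and takes essentially the same route as the paper: the paper offers no independent proof, stating only that the proposition is a consequence of \cite[Proposition~22]{Abramsky2017b}, which is precisely the result you appeal to at the end. Your sketch of the translation (prefix-consistency forced by the comultiplication law, the forest order as the prefix order on $\alpha$-images, the pebbling function as the pebble index of the last move, and the chain-of-ancestors construction for the converse) accurately reproduces the argument underlying that cited result.
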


\begin{remark}
The tree-width of a graph is usually defined in terms of tree decompositions. The definition given above is an equivalent reformulation. For a proof of the fact that a finite graph admits a tree decomposition of width ${< \,} k$ if, and only if, it admits a $k$-pebble forest cover, see \cite[Theorem~19]{AbramskyShah2018}.
\end{remark}

In the same spirit, in the case of the comonad $\Pkn$ we have the following characterisation, cf.~\cite[Theorem~2.14]{paine2020pebbling}.

\begin{proposition}\label{p:Pkn-coalgebras}
A $\sigma$-structure $A$ admits a coalgebra structure $\alpha\colon A\to \Pkn(A)$ if, and only if, it has a $k$-pebble forest cover of height ${\leq \,} n$.
\end{proposition}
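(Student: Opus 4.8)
The plan is to reproduce the coalgebra--forest-cover dictionary underlying Propositions~\ref{p:En-coalgebras} and~\ref{p:Pk-coalgebras}, reading off a $k$-pebble forest cover from a coalgebra structure while additionally tracking the length bound that $\Pkn$ imposes on plays. The guiding observation is that $\Pkn$ is, by definition, the restriction of the pebbling comonad $\Pk$ to sequences of length ${\leq\,} n$; consequently a coalgebra $\alpha\colon A\to\Pkn(A)$ is the same datum as a $\Pk$-coalgebra whose image is contained in $(\k\times A)^{\leq n}$, and under the correspondence of Proposition~\ref{p:Pk-coalgebras} this length bound is exactly a bound on the height of the associated forest cover. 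I would therefore carry out the $\Pk$-argument directly, keeping the extra length constraint in view.

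\textbf{Forward direction.} Given a coalgebra $\alpha$, the counit law $\epsilon_A\circ\alpha=\id_A$ shows that $\alpha$ is a split mono (hence injective) and that the last move of $\alpha(a)$ has the form $(p,a)$; I set $p(a)$ to be this pebble index. The comultiplication law forces \emph{prefix coherence}: writing $\alpha(a)=[(p_1,a_1),\dots,(p_l,a_l)]$, one obtains $\alpha(a_i)=[(p_1,a_1),\dots,(p_i,a_i)]$ for every $i\leq l$. I then define $a\leq a'$ to hold whenever $\alpha(a)$ is a prefix of $\alpha(a')$. Injectivity of $\alpha$ makes this a partial order, and prefix coherence gives $\down a=\{a_1<\dots<a_l\}$, a finite chain; hence $(A,\leq)$ is a forest with $|\down a|=l=|\alpha(a)|\leq n$, so its height is ${\leq\,} n$. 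It remains to check that $(A,\leq)$ together with $p$ is a $k$-pebble forest cover of $\G_A$: whenever $a\adj a'$ in $\G_A$ the two elements occur in a common tuple of some $R^A$, so by the homomorphism property of $\alpha$ the sequences $\alpha(a),\alpha(a')$ occur in a tuple of $R^{\Pkn(A)}$; clause~(i) then yields $a\up a'$ (the forest-cover condition), and clause~(ii) yields $p(a)\neq p(w)$ for all $a<w\leq a'$ when $a\leq a'$ (the pebbling condition).

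\textbf{Converse.} Given a $k$-pebble forest cover $(A,\leq)$ with pebbling function $p\colon A\to\k$ of $\G_A$ of height ${\leq\,}n$, I would enumerate $\down a=\{a_1<\dots<a_l=a\}$ and set $\alpha(a)\coloneqq[(p(a_1),a_1),\dots,(p(a_l),a_l)]$; the height bound guarantees $l\leq n$, so $\alpha(a)\in(\k\times A)^{\leq n}$. The counit and comultiplication laws are immediate from this construction (exactly as in the cases of $\En$ and $\Pk$), so the only substantive point is that $\alpha$ is a $\sigma$-homomorphism: for a tuple $(c_1,\dots,c_j)\in R^A$ the entries are pairwise equal or adjacent in $\G_A$, hence pairwise comparable by the forest-cover condition, giving clause~(i); the pebbling condition gives clause~(ii); and clause~(iii) holds since $\epsilon_A(\alpha(c_m))=c_m$.

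\textbf{Main obstacle.} The genuinely delicate point is the two-way match between clause~(ii) in the definition of $R^{\Pkn(A)}$---that the final pebble index of $\alpha(a)$ does not reappear along the suffix leading to $\alpha(a')$---and the pebbling condition $p(v)\neq p(w)$ for $v<w\leq v'$ of a $k$-pebble forest cover; this is the crux already present in Proposition~\ref{p:Pk-coalgebras}, and it is where the pebble bookkeeping concentrates. By contrast, the only new ingredient relative to Propositions~\ref{p:En-coalgebras} and~\ref{p:Pk-coalgebras}, namely the passage from the length bound on plays to the height bound on the forest, is routine once prefix coherence is in hand, since it reduces to the identity $|\down a|=|\alpha(a)|$.
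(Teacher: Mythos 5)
Your proof is correct. Note that the paper itself does not prove Proposition~\ref{p:Pkn-coalgebras}: it imports it as a consequence of \cite[Theorem~2.14]{paine2020pebbling}, exactly as Propositions~\ref{p:En-coalgebras} and~\ref{p:Pk-coalgebras} are imported from \cite{AbramskyShah2018} and \cite{Abramsky2017b}. What you have written out is precisely the standard coalgebra/forest-cover dictionary underlying those cited results, plus the one new piece of bookkeeping that $\Pkn$ contributes, namely $|{\down a}| = |\alpha(a)| \leq n$, so in substance your route coincides with the proof the paper delegates. Two steps that you assert would need to be computed in a full write-up: (i) prefix coherence $\alpha(a_i)=\alpha(a)_{\leq i}$ is the componentwise reading of the comultiplication law $\id_{\Pkn(A)}^*\circ\alpha=\Pkn(\alpha)\circ\alpha$, using that $\id_{\Pkn(A)}^*$ lists the prefixes of a play while $\Pkn(\alpha)$ applies $\alpha$ to the underlying elements of the play; and (ii) in the converse, the comultiplication law reduces to the observation that $\down a_i$ is the initial segment of the chain $\down a$, which is what makes your $\alpha$ prefix-coherent by construction. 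Finally, your guiding observation that a $\Pkn$-coalgebra is the same datum as a $\Pk$-coalgebra landing in $(\k\times A)^{\leq n}$ is sound, because $\Pkn(A)$ is the induced substructure of $\Pk(A)$ on that universe and the structure maps of $\Pkn$ are restrictions of those of $\Pk$; but since you redo the $\Pk$ argument directly rather than invoking Proposition~\ref{p:Pk-coalgebras} as a black box, nothing in your proof hinges on this remark.
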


Finally, recall that coalgebras for a comonad $\CC$ on $\A$ form themselves a category $\EM(\CC)$, the \emph{Eilenberg--Moore category} of $\CC$. The objects of $\EM(\CC)$ are the coalgebras $(A,\alpha)$ for $\CC$, and morphisms $(A,\alpha)\to (B,\beta)$ in $\EM(\CC)$ are morphisms $h\colon A\to B$ in $\A$ such that $\CC(h)\circ\alpha=\beta\circ h$. There is an obvious forgetful functor 
\[ U^{\CC}\colon \EM(\CC)\to \A \]
which sends a coalgebra $(A,\alpha)$ to $A$. This functor has a right adjoint
\[ F^{\CC}\colon \A\to \EM(\CC)\]
defined as follows: for any $A\in \A$, $F^{\CC}(A)\coloneqq (\CC(A),\delta_A)$ where $\delta_A\coloneqq \id_{\CC(A)}^*$. Further, if $h$ is a morphism in $\A$, we set $F^{\CC}(h)\coloneqq \CC(h)$. 

\vspace{-0.3em}
\[\begin{tikzcd}[column sep=1.5em]
        \EM(\CC) \arrow[yshift=-0.3pt,out=345,in=205]{rr}[swap]{U^{\CC}}
        & {\text{\scriptsize{$\top$}}}
        & \A \arrow[yshift=0.3pt,out=155,in=15]{ll}[swap]{F^{\CC}}
\end{tikzcd}\]

\section{A Categorical Lov{\'a}sz-Type Theorem}\label{s:categorical-Lovasz}

A category is \emph{locally finite} if there are only finitely many morphisms between any two of its objects. A locally finite category $\A$ is said to be \emph{combinatorial} if, for all $m,n\in\A$,
\[
m\cong n \ \Longleftrightarrow \ |\hom_{\A}(k,m)|=|\hom_{\A}(k,n)| \ \ \forall k\in\A.
\]
Thus, Lov{\'a}sz' theorem \cite{Lovasz1967} states that, for any finite relational signature~$\sigma$, the category $\Si_f$ of finite $\sigma$-structures with homomorphisms is combinatorial.
The aim of this section is to prove the following generalisation of Lov{\'a}sz' result.
\begin{theorem}\label{th:Lovasz-categories}
Let $\A$ be a locally finite category. If $\A$ has pushouts and a proper factorisation system, then it is combinatorial.
\end{theorem}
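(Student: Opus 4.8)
The plan is to prove the two implications separately; the forward direction is immediate and the reverse one carries all the weight. If $m\cong n$ via an isomorphism $\phi$, then $g\mapsto \phi\circ g$ is a bijection $\hom_{\A}(k,m)\to\hom_{\A}(k,n)$ for each $k$, so the counts agree. For the converse, assume $|\hom_{\A}(k,m)|=|\hom_{\A}(k,n)|$ for all $k$. Writing $(\E,\M)$ for the proper factorisation system and $\M(k,m)$ for the set of $\M$-morphisms $k\to m$ (these are monomorphisms, by properness), I would first reduce everything to the single implication
\[
|\hom_{\A}(k,m)|=|\hom_{\A}(k,n)|\ \ \forall k \ \Longrightarrow\ |\M(k,m)|=|\M(k,n)|\ \ \forall k,
\]
and then extract an isomorphism from the equality of these \emph{monomorphism} counts.

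The heart of the proof is the displayed implication, and this is where both the factorisation system and the pushouts are used. Every morphism $f\colon k\to m$ factors, uniquely up to isomorphism, as an $\E$-morphism followed by an $\M$-morphism; grouping the morphisms $k\to m$ according to their $\E$-part yields
\[
|\hom_{\A}(k,m)|=\sum_{q}|\M(q,m)|,
\]
where $q$ ranges over the $\E$-quotients of $k$. The term corresponding to the trivial quotient $\id_k$ is exactly $|\M(k,m)|$, and I would isolate it by an inclusion--exclusion (Möbius) inversion over the poset of $\E$-quotients of $k$. Pushouts enter precisely here: the quotients of $k$ appearing in the inversion are the joins of the minimal identifications, computed as pushouts, so that the inversion coefficients are the values of the Möbius function of the quotient poset of $k$ --- and these depend only on $k$, not on the target. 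Consequently the same coefficients govern the expansion of $|\M(k,m)|$ and of $|\M(k,n)|$, and applying the hypothesis term by term gives $|\M(k,m)|=|\M(k,n)|$.

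To finish, put $k=m$: since $\id_m\in\M$ we have $|\M(m,m)|\geq 1$, hence $|\M(m,n)|\geq 1$ and there is a monomorphism $m\to n$ in $\M$; symmetrically there is one $n\to m$. Their composite $g\colon m\to m$ is a monomorphism, and in a locally finite category a monic endomorphism is automatically an isomorphism (its powers must repeat in the finite set $\hom_{\A}(m,m)$, and left-cancelling shows some power equals $\id_m$). As $g$ is an isomorphism, the monomorphism $n\to m$ is a split epimorphism, and a monic split epimorphism is an isomorphism; thus $m\cong n$. The main obstacle I anticipate is making the inclusion--exclusion step rigorous: one must show that the poset of $\E$-quotients of $k$ is finite (so that the Möbius function is defined) and that pushouts genuinely realise the joins needed for the inversion, so that its coefficients are intrinsic to $k$. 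Local finiteness should supply the required finiteness and well-foundedness --- via the observation that epic and monic endomorphisms in a locally finite category are isomorphisms --- while properness of the factorisation system is what guarantees that the $\E$-parts are true quotients and the $\M$-parts true monomorphisms, keeping the counting argument sound.
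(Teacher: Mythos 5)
Your overall skeleton --- reduce equality of hom-counts to equality of $\M$-morphism counts, then conclude via the finite-monoid/iterated-powers argument --- is exactly the paper's strategy, and your endgame is correct. The gap is in the central inversion step. You invert over ``the poset of $\E$-quotients of $k$'', and you correctly flag that this requires the poset to be finite (or at least to have a well-defined M\"obius function); but your claim that local finiteness supplies this is false under the stated hypotheses. Take $\A$ to be the chain $\N\cup\{\infty\}$ viewed as a category (one morphism $x\to y$ iff $x\le y$). This category is locally finite, has pushouts (binary joins), and $(\E,\M)=(\text{all morphisms},\ \text{isomorphisms})$ is a proper factorisation system: in a poset every morphism is both epi and mono, every morphism trivially has the left lifting property against isomorphisms, and, because the chain has a bottom element, $\E^{\pit}$ consists exactly of the isomorphisms. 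Yet the bottom object $0$ has one $\E$-quotient for each object, pairwise non-equivalent, so its quotient poset is the infinite chain itself; the interval from $[0]$ to $[\infty]$ is infinite, so the M\"obius function of this poset is not even defined. Your proposed justification --- that epic and monic endomorphisms in a locally finite category are isomorphisms --- is true but irrelevant here: in the example all endomorphisms are identities and the quotient poset is still infinite.

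This is precisely the point at which the paper departs from the classical Lov\'asz--Pultr counting argument you are reconstructing (Pultr has to assume finite well-poweredness as an extra axiom for exactly this reason). Instead of inverting over \emph{all} quotients, the paper exploits finiteness of $\hom_{\A}(k,m)$ itself: every degenerate element of $\hom_{\A}(k,m)$ factors through some strict quotient, so \emph{finitely many} strict quotients suffice to cover the degenerate elements, and inclusion--exclusion is applied only to the images of these finitely many maps. Pushouts then enter not as ``joins supplying M\"obius coefficients'' but to realise each intersection $\bigcap_{f\in J}\operatorname{Im}(\hom_{\A}(f,m))$ as $\hom_{\A}(p,m)$ for the wide pushout $p$ of the quotients in $J$ --- which is exactly the quantity the hypothesis controls. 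Your argument could be repaired in a similar finite-selection spirit, e.g.\ by restricting the inversion to the finitely many quotient classes that admit an $\M$-morphism into $m$ or into $n$ (these inject into $\hom_{\A}(k,m)\sqcup\hom_{\A}(k,n)$ by uniqueness of $(\E,\M)$-factorisations, and the zeta matrix of the resulting finite poset is unitriangular, hence invertible over $\mathbb{Z}$); note that this forces the index poset to depend on the pair $\{m,n\}$, contrary to your insistence that it depend on $k$ alone. That finite-selection step is the essential missing idea, and it does not follow from the observation about endomorphisms that you offer in its place.
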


\subsection{Proper factorisation systems}
We start by recalling the notion of weak factorisation system.
Given morphisms $e$ and $m$ in $\A$, we say that $e$ has the \emph{left lifting property} with respect to $m$, or that $m$ has the \emph{right lifting property} with respect to $e$ if, for every commutative square as on the left-hand side below,
\begin{equation*}
\begin{tikzcd}
{\bullet} \arrow{d} \arrow{r}{e} & {\bullet} \arrow{d} \\
{\bullet} \arrow{r}{m} & {\bullet}
\end{tikzcd}
\ \ \ \ \ \ \ \ \ \ \ \ \ 
\begin{tikzcd}
{\bullet} \arrow{d} \arrow{r}{e} & {\bullet} \arrow{d} \arrow{dl}[swap, outer sep=-1pt]{d} \\
{\bullet} \arrow{r}{m} & {\bullet}
\end{tikzcd}
\end{equation*}
there is a (not necessarily unique) \emph{diagonal filler}, i.e.\ an arrow $d$ such that the right-hand diagram above commutes. If this is the case, we write $e {\,\pit\,} m$. For any class $\mathcal{F}$ of morphisms in $\A$, let ${}^{\pit}\mathcal{F}$ (respectively $\mathcal{F}^{\pit}$) be the class of morphisms having the left (respectively right) lifting property with respect to every morphism in $\mathcal{F}$.

\begin{definition}\label{def:weak-f-s}
A pair of classes of morphisms $(\E,\M)$ in a category $\A$ is a \emph{weak factorisation system} provided it satisfies the following conditions:
\begin{enumerate}
\item every morphism $f$ in $\A$ can be written as $f = m \circ e$ with $e\in \E$ and $m\in \M$;
\item $\E={}^{\pit}\M$ and $\M=\E^{\pit}$.
\end{enumerate}
A \emph{proper factorisation system} is a weak factorisation system $(\E,\M)$ such that all morphisms in $\E$ are epimorphisms and all morphisms in $\M$ are monomorphisms.\footnote{It is an easy observation that any proper factorisation system is an \emph{orthogonal} factorisation system, meaning that the diagonal fillers are unique.}
\end{definition}

We describe two different proper factorisation systems in the category $\Si$ of $\sigma$-structures, for a fixed relational signature~$\sigma$. Each homomorphism of $\sigma$-structures $f\colon A\to B$ factors, as a function, through its set-theoretic image $\tilde{A}$:
\[
A \to \tilde{A} \to B.
\] 
There are two natural ways to turn $\tilde{A}$ into a $\sigma$-structure: we can equip it with the structure induced by either $B$, or $A$. In the first case, for every $R\in\sigma$ of arity $n$, we set $R^{\tilde{A}}\coloneqq R^{B}\cap \tilde{A}^n$ and, in the second, we set $R^{\tilde{A}}\coloneqq f(R^A)\cap \tilde{A}^n$. 

These two ways of turning $\tilde{A}$ into a $\sigma$-structure yield two different weak factorisation systems $(\E,\M)$ in the category $\Si$.\footnote{These are the (epi, regular mono) and (regular epi, mono) factorisation systems, respectively. Cf.~\cite[pp.~200--201]{AR1994}.} 
For example, the first one yields the weak factorisation system where
$\E=\{\text{surjective homomorphisms}\}$ and $\M=\{\text{strong/induced embeddings}\}$.
Both factorisation systems are proper. Moreover, since the image of a finite $\sigma$-structure under a homomorphism is also finite, these factorisation systems restrict to the locally finite category $\Si_f$ of finite $\sigma$-structures.

The category $\Si$ also has pushouts. Given two homomorphisms of $\sigma$-structures $f\colon A\to B$ and $g\colon A\to C$, their pushout $D$ is computed as the quotient of the disjoint sum of $\sigma$-structures $B+C$ by the least equivalence relation $\sim$ such that $f(a) \sim g(a)$ for every $a\in A$.
Equivalently, $D$ is obtained by equipping $D'$, the pushout of the functions $f$ and $g$ in the category of sets, depicted in the following diagram, 
\[
    \begin{tikzcd}
        A \rar{f}\dar[swap]{g} & B \dar[dashed]{f'} \\
        C \rar[dashed]{g'} & D' 
    \end{tikzcd}
\]
with the smallest relational structure that turns $f'$ and $g'$ into homomorphisms.
  Using the same idea, it is not difficult to see that $\Si$ is cocomplete, i.e.\ it has all colimits, cf.~\cite[p.~201]{AR1994}.

By the previous description of pushouts, it is clear that the pushout in $\Si$ of finite $\sigma$-structures is again a finite $\sigma$-structure. It follows that $\Si_f$ has all pushouts, and thus it satisfies the assumptions of Theorem~\ref{th:Lovasz-categories}. This justifies our claim that the latter result is a generalisation of Lov{\'a}sz' theorem.

\vspace{1ex}
We state next some well known properties of weak and proper factorisation systems (cf., e.g.,~\cite{freyd1972categories} or~\cite{riehl2008factorization}):
\begin{lemma}\label{l:factorisation-properties}
Let $(\E,\M)$ be a weak factorisation system in $\A$. The following hold:
\begin{enumerate}[label=\textup{(}\textbf{\alph*}\textup{)}]
\item\label{compositions} $\E$ and $\M$ are closed under compositions;
\item\label{isos} $\E\cap\M=\{\text{isomorphisms}\}$;
\item\label{pushouts} the pushout in $\A$ of an $\E$-morphism along any morphism, if it exists, is again in $\E$.
\end{enumerate}
Moreover, if $(\E,\M)$ is proper, the following hold:
\begin{enumerate}[label=\textup{(}\textbf{\alph*}\textup{)}]\setcounter{enumi}{3}
\item\label{cancellation-e} $g\circ f\in \E$ implies $g\in\E$;
\item\label{cancellation-m} $g\circ f\in\M$ implies $f\in\M$.
\end{enumerate}
\end{lemma}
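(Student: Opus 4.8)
The plan is to derive every item purely from the lifting-property characterisation $\E={}^{\pit}\M$ and $\M=\E^{\pit}$, invoking properness (that $\E$-maps are epis and $\M$-maps are monos) only for the two cancellation laws \ref{cancellation-e} and \ref{cancellation-m}. Since ${}^{\pit}\M=\bigcap_{m\in\M}{}^{\pit}\{m\}$, and an intersection of composition-closed classes is composition-closed, for \ref{compositions} it suffices to fix $m\in\M$ and show ${}^{\pit}\{m\}$ is closed under composition. Given $e_1\colon A\to B$ and $e_2\colon B\to C$ in ${}^{\pit}\{m\}$ and a lifting problem for $e_2\circ e_1$ against $m\colon X\to Y$ (top $f\colon A\to X$, bottom $g\colon C\to Y$), I would solve it in two stages: first lift $e_1$ against $m$ using the square with bottom $g\circ e_2$ to obtain $d_1\colon B\to X$, then lift $e_2$ against $m$ using the square whose top is $d_1$ to obtain $d_2\colon C\to X$; a direct check yields $d_2\circ(e_2\circ e_1)=f$ and $m\circ d_2=g$. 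The dual argument, via $\M=\E^{\pit}$, handles $\M$.

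For \ref{isos}, isomorphisms lie in both classes because an iso $i$ supplies the filler $f\circ i^{-1}$ against any square, so it has both lifting properties against all morphisms and in particular lands in ${}^{\pit}\M=\E$ and $\E^{\pit}=\M$. Conversely, if $f\in\E\cap\M$ then $f\pit f$, and applying this to the square with identities on top and bottom (so $f\circ\id=\id\circ f$) produces a two-sided inverse, so $f$ is an iso. For the pushout stability \ref{pushouts}, given $e\in\E$, a morphism $g$ with the same domain, and the pushout square with legs $e'$ and $g'$, I would take a lifting problem for $e'$ against some $m\in\M$ (top $u$, bottom $v$), transfer it to a lifting problem for $e$ (with top $u\circ g$ and bottom $v\circ g'$), solve the latter by $e\pit m$ to get $w\colon B\to X$ with $w\circ e=u\circ g$ and $m\circ w=v\circ g'$, and then invoke the universal property of the pushout: since $u$ and $w$ agree on $A$, they induce a unique $d$ with $d\circ e'=u$ and $d\circ g'=w$. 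That $d$ solves the original problem follows by precomposing $m\circ d$ and $v$ with the jointly epimorphic injections $e',g'$.

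The genuinely new ingredient is properness, used in \ref{cancellation-e} and \ref{cancellation-m}, and this is where I expect the only real care to be needed. For \ref{cancellation-e}, I would factor $g=m\circ e$ with $e\in\E$, $m\in\M$, and show $m$ is an iso. Since $g\circ f\in\E$ and $m\in\M$, the square with top $e\circ f$, left $g\circ f$, right $m$ and bottom $\id$ admits a diagonal $s$ with $m\circ s=\id$; thus $m$ is a split epi, and being a mono by properness it is an iso. Then $g=m\circ e$ is a composite of $\E$-maps (isos lie in $\E$ by \ref{isos}), so $g\in\E$ by \ref{compositions}. Item \ref{cancellation-m} is exactly dual: factor $f=m\circ e$, and use $g\circ f\in\M$ and $e\in\E$ to obtain a retraction $d$ with $d\circ e=\id$, so $e$ is a split mono which, being an epi by properness, is an iso, whence $f=m\circ e\in\M$. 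The main obstacle throughout is bookkeeping—orienting each lifting square correctly and verifying commutativity—together with the two places where a split epi (respectively split mono) must be upgraded to an iso; this mono/epi upgrade is precisely what properness buys us.
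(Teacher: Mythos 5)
Your proposal is correct in every step: the two-stage filler for compositions, the identity-square retract argument for \ref{isos}, the transfer-and-joint-epi argument for \ref{pushouts}, and the split-epi/split-mono upgrades via properness for \ref{cancellation-e} and \ref{cancellation-m} all check out. The paper itself offers no proof of this lemma, simply citing Freyd--Kelly and Riehl's notes as sources of these ``well known properties''; your argument is precisely the standard lifting-property proof found in those references, so there is nothing to flag.
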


\subsection{Proof of Theorem~\ref{th:Lovasz-categories}}
For the remainder of this section, we fix a category $\A$ admitting a proper factorisation system $(\E,\M)$. $\M$-morphisms will be denoted by $\mono$, and $\E$-morphisms by $\epi$.

An \emph{$\E$-pushout square} in $\A$ is a pushout square consisting entirely of $\E$-morphisms. We are interested in functors $\A^\op\to\fin$, where $\fin$ is the category of finite sets and functions between them, that turn $\E$-pushout squares in $\A$ into pullbacks in $\fin$. Recall that pullbacks in $\fin$ admit the following explicit description: a commutative square in $\fin$ is a pullback if, and only if,
\[\begin{tikzcd}
A \arrow{r}{f} \arrow{d}[swap]{g} & B \arrow{d}{i} \\
C \arrow{r}{h} & D
\end{tikzcd} 
\ \ \ \ \
\pbox{18em}{\relax\ifvmode\centering\fi
$\forall b\in B, \forall c\in C, \  \text{if} \ i(b)=h(c) \ \text{then}$\\ 
$\exists! a\in A. \ (f(a)=b \text{ and } g(a)=c).$} 
\]
If $h$ and $i$ are injective then, by identifying $B$ and $C$ with subsets of $D$, the pullback $A$ can be identified with $B\cap C$.

The following lemma shows that the hom-functors involved in the homomorphism counting theorem send pushout squares to pullbacks. (This is a consequence of the more general fact that representable functors preserve all limits that exist in their domain, cf.~\cite[pp.~116-117]{MacLane1998}, but we provide here also a direct elementary proof.) Recall that, for any object $n$ of a locally finite category $\B$, the functor $\hom_{\B}(-,n)\colon \B^\op\to\fin$ sends a morphism $f\colon a\to b$ in $\B$ to the function $-\circ f\colon \hom_{\B}(b,n)\to \hom_{\B}(a,n)$.
\begin{lemma}\label{l:hom-lf-limits}
For any locally finite category $\B$ and $n\in\B$, the functor
\[ \hom_{\B}(-,n)\colon \B^\op\to\fin\]
sends all pushout squares that exist in $\B$ to pullbacks in $\fin$.
\end{lemma}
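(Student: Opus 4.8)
The plan is to verify directly that $\hom_{\B}(-,n)$ sends a pushout square to a pullback, using the explicit description of pullbacks in $\fin$ recalled just before the lemma. So suppose we have a pushout square in $\B$
\[\begin{tikzcd}
A \arrow{r}{f} \arrow{d}[swap]{g} & B \arrow{d}{i} \\
C \arrow{r}{h} & D
\end{tikzcd}\]
with the universal property stated in the Preliminaries. Applying $\hom_{\B}(-,n)$ and reversing arrows (since the functor is contravariant), the candidate pullback square in $\fin$ has apex $\hom_{\B}(D,n)$, together with the two legs $-\circ i\colon \hom_{\B}(D,n)\to\hom_{\B}(B,n)$ and $-\circ h\colon \hom_{\B}(D,n)\to\hom_{\B}(C,n)$, and the cospan is completed by $-\circ f$ and $-\circ g$ into $\hom_{\B}(A,n)$. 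My first step is simply to identify the correct shape of the image square so that the pullback criterion can be applied to it.

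Next I would check the pullback condition in $\fin$ verbatim. Take $\beta\in\hom_{\B}(B,n)$ and $\gamma\in\hom_{\B}(C,n)$ that are sent to the same element of $\hom_{\B}(A,n)$; explicitly this means $\beta\circ f=\gamma\circ g$ as morphisms $A\to n$ in $\B$. I must produce a unique $\xi\in\hom_{\B}(D,n)$ with $\xi\circ i=\beta$ and $\xi\circ h=\gamma$. But this is precisely the universal property of the pushout $D$ applied to the compatible pair $(\beta,\gamma)$: it yields a unique $\xi\colon D\to n$ satisfying exactly these two equations. Thus existence and uniqueness of the filler in $\fin$ are nothing but the existence and uniqueness clauses of the pushout universal property, and the two conditions match up identically once the arrows are dualised.

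The only remaining point is that the functor lands in $\fin$ at all, which is where local finiteness is used: $\hom_{\B}(a,n)$ is finite for every $a$, so the square genuinely lives in $\fin$ and the elementary pullback criterion applies. I should also note that the square obtained by applying $\hom_{\B}(-,n)$ is automatically commutative, since $(-\circ f)\circ(-\circ i)=-\circ(i\circ f)=-\circ(h\circ g)=(-\circ g)\circ(-\circ h)$ using the commutativity $i\circ f=h\circ g$ of the original pushout square.

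I do not anticipate a genuine obstacle here: the proof is a direct unwinding of definitions, and the main content is simply recognising that the element-wise pullback criterion in $\fin$ is the verbatim translation of the pushout universal property in $\B$. The only care needed is bookkeeping around the contravariance of the hom-functor and correctly matching the ``if $i(b)=h(c)$'' hypothesis of the $\fin$-pullback criterion with the compatibility condition $\beta\circ f=\gamma\circ g$ on the pair $(\beta,\gamma)$. Note that this lemma, unlike Lemma~\ref{l:factorisation-properties}, uses no properties of the factorisation system and holds for \emph{all} pushout squares, not only $\E$-pushout squares; the restriction to $\E$-pushouts will only matter later when invoking the functor's behaviour in conjunction with inclusion--exclusion.
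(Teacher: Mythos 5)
Your proposal is correct and matches the paper's own proof: the paper likewise applies $\hom_{\B}(-,n)$ to the pushout square and observes that the element-wise pullback criterion in $\fin$ (existence and uniqueness of the mediating element) is exactly the universal property of the pushout applied to a compatible pair $\alpha\circ f=\beta\circ g$. Your additional remarks on commutativity of the image square and on local finiteness guaranteeing that the hom-sets lie in $\fin$ are just bookkeeping the paper leaves implicit.
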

\begin{proof}
Just observe that, given a pushout square in $\B$ and the corresponding diagram in $\fin$, as displayed below,
\begin{center}
\begin{tikzcd}
a \arrow{d}[swap]{g} \arrow{r}{f} & b \arrow{d}{i} \\
c \arrow{r}{h} & d
\end{tikzcd}
\ \ \ \ \ \ \ \ \ 
\begin{tikzcd}
\hom_{\B}(d,n) \arrow{r}{-\circ i} \arrow{d}[swap]{-\circ h} & \hom_{\B}(b,n) \arrow{d}{-\circ f} \\
\hom_{\B}(c,n) \arrow{r}{-\circ g} & \hom_{\B}(a,n)
\end{tikzcd}
\end{center}
if $\alpha\in \hom_{\B}(b,n)$ and $\beta\in\hom_{\B}(c,n)$ are such that $\alpha\circ f=\beta\circ g$ then, by the universal property of the pushout, there is a unique $\gamma\in \hom_{\B}(d,n)$ satisfying $\gamma\circ i=\alpha$ and $\gamma\circ h=\beta$.
\end{proof}

\begin{lemma}\label{l:E-morph-to-injections}
If a functor $F\colon \A^\op\to \fin$ sends $\E$-pushout squares in $\A$ to pullbacks, then it sends $\E$-morphisms to injections.
\end{lemma}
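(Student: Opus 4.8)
The plan is to exploit the characterisation of $\E$-morphisms via the pushout square they generate with themselves. The key observation is that for any $\E$-morphism $e\colon a \epi b$, one can form the pushout of $e$ along itself, yielding a cospan $b \rightrightarrows d$ via two maps $i, j\colon b \to d$ with $i \circ e = j \circ e$. By Lemma~\ref{l:factorisation-properties}\ref{pushouts}, the two legs $i$ and $j$ of this pushout are again $\E$-morphisms, so this is an $\E$-pushout square. The point is that $e$ being an epimorphism forces $i = j$: indeed, from $i\circ e = j \circ e$ and the fact that $e$ is epi, we conclude $i = j$. Thus the pushout square for $e$ along itself has the shape where the two maps out of $b$ coincide.

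Next I would apply the hypothesis on $F$ to this particular $\E$-pushout square. Writing the square with vertices $a$ (top-left), $b$ (top-right and bottom-left, since $e$ appears twice), and $d$ (bottom-right), and applying the contravariant functor $F\colon \A^\op \to \fin$, the square is sent to a pullback in $\fin$. Concretely, $Fe\colon Fb \to Fa$ appears along both lower edges of the resulting pullback square, and the two images $Fi, Fj\colon Fd \to Fb$ satisfy $Fi = Fj$ because $i = j$. I would then invoke the explicit description of pullbacks in $\fin$ recalled just before Lemma~\ref{l:hom-lf-limits}: the square
\[
\begin{tikzcd}
Fd \arrow{r}{Fi} \arrow{d}[swap]{Fj} & Fb \arrow{d}{Fe} \\
Fb \arrow{r}{Fe} & Fa
\end{tikzcd}
\]
being a pullback means that whenever $Fe(x) = Fe(y)$ for $x,y \in Fb$, there exists a \emph{unique} element of $Fd$ mapping to $x$ under $Fi$ and to $y$ under $Fj$.

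The conclusion that $Fe$ is injective then follows from the uniqueness clause in the pullback condition. Suppose $x, y \in Fb$ satisfy $Fe(x) = Fe(y)$. The pullback property produces a unique $z \in Fd$ with $Fi(z) = x$ and $Fj(z) = y$. But $Fi = Fj$, so $x = Fi(z) = Fj(z) = y$. Hence $Fe$ is injective, as required.

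I expect the main subtlety to be the verification that the self-pushout of $e$ yields a genuine $\E$-pushout square to which the hypothesis applies; this rests on two facts, namely that pushouts of $\E$-morphisms are $\E$-morphisms (Lemma~\ref{l:factorisation-properties}\ref{pushouts}), and that the epimorphism property of $e$ collapses the two legs of the pushout to a single map. Both are routine once the self-pushout construction is isolated as the right idea, so the heart of the argument is recognising that the uniqueness built into pullbacks in $\fin$, combined with the coincidence $Fi = Fj$, is exactly what forces injectivity.
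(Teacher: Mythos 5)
Your proof is correct and takes essentially the same route as the paper's: both apply $F$ to the pushout of an $\E$-morphism $e$ along itself and read off injectivity of $F(e)$ from the explicit description of pullbacks in $\fin$. The paper merely streamlines the argument by observing that, since $e$ is an epimorphism, this self-pushout \emph{is} the square formed by two copies of $e$ and two identities (so its existence never needs to be assumed, and identities lie in $\E$ as isomorphisms do), which is precisely your observation that the two legs satisfy $i=j$, in disguise.
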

\begin{proof}
Let $e\colon n\epi m$ be an $\E$-morphism in the category $\A$. Because $e$ is an epimorphism, it follows directly that the square on the left-hand side below is a pushout square in $\A$.
\begin{center}
\begin{tikzcd}
n \arrow[twoheadrightarrow]{d}[swap]{e} \arrow[twoheadrightarrow]{r}{e} & m \arrow{d}{\id} \\
m \arrow{r}{\id} & m
\end{tikzcd}
\ \ \ \ \ \ \ \ \ 
\begin{tikzcd}
F(m) \arrow{r}{\id} \arrow{d}[swap]{\id} & F(m) \arrow{d}{F(e)}  \\
F(m) \arrow{r}{F(e)} & F(n)  
\end{tikzcd}
\end{center}
Since identities are $\E$-morphisms, the square on the right-hand side above is a pullback in $\fin$. In turn, this is equivalent to $F(e)$ being an injection.
\end{proof}

\subsubsection*{Generic and degenerate elements}
The key step in the proof of the classical Lov\'asz' theorem consists in showing that, if $|\hom_{\Si_f}(C,A)|=|\hom_{\Si_f}(C,B)|$ for all $C$ in $\Si_f$, then also $|\operatorname{inj}_{\Si_f}(C,A)|=|\operatorname{inj}_{\Si_f}(C,B)|$ for all $C$ in $\Si_f$, where $\operatorname{inj}_{\Si_f}(C,A)$ is the set of all injective homomorphisms $C \hookrightarrow A$, and similarly for $\operatorname{inj}_{\Si_f}(C,B)$. We aim to show the analogous fact in our setting, where injective homomorphisms are replaced by $\M$-morphisms.

The intuition behind the following definition is that a $\sigma$-homomorphism $f\colon C\to A$ is non-injective precisely when there exist an onto non-injective $\sigma$-homomorphism $e\colon C \epi C'$ and a $\sigma$-homomorphism $g\colon C' \to A$ such that $f = g \circ e$. In terms of the hom-functor $E \coloneqq \hom_{\Si_f}(-,A)$, $f\in E(C)$ is non-injective if, and only if, $f = E(e)(g)$ for some onto non-injective $\sigma$-homomorphism $e\colon C \epi C'$ and $g\in E(C')$.

\begin{definition}
A \emph{strict quotient} in $\A$ is an $\E$-morphism which is not an isomorphism. (Equivalently, by Lemma~\ref{l:factorisation-properties}\ref{isos}, a strict quotient is an arrow in $\E\setminus \M$.)

Further, consider a functor $E\colon \A^\op\to\fin$. For every $k\in\A$, we say that $s\in E(k)$ is \emph{degenerate} if there exist a strict quotient $f\colon k\epi l$ and $t\in E(l)$ such that ${E(f)(t)=s}$. Otherwise, $s$ is called \emph{generic}. The subset of $E(k)$ consisting of the generic elements is denoted by $\St{E}{k}$.
\end{definition}

The next lemma shows that this definition matches our intuition for hom-functors on $\A$.
\begin{lemma}\label{l:generic-of-hom}
    Let $E = \hom_\A(-,n)$ for some $n\in \A$. For any $k\in\A$, $\St{E}{k}$ is the set of all $\M$-morphisms $k\to n$.
\end{lemma}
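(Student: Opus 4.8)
The plan is to prove the equality of the two sets by establishing both inclusions, after first unwinding the definitions in the special case $E = \hom_\A(-,n)$. Here, for a morphism $f\colon k\to l$, the function $E(f)$ is precomposition, $t\mapsto t\circ f$. Hence an element $s\in E(k)=\hom_\A(k,n)$ is \emph{degenerate} precisely when it factors as $s=t\circ f$ for some strict quotient $f\colon k\epi l$ (that is, $f\in\E\setminus\M$ by Lemma~\ref{l:factorisation-properties}\ref{isos}) and some $t\colon l\to n$; and $s$ is \emph{generic} exactly when no such factorisation exists. So the goal becomes: $s\in\M$ if, and only if, $s$ admits no factorisation through a strict quotient.

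For the inclusion of $\M$-morphisms into the generic elements, I would argue by contradiction. Suppose $s\colon k\to n$ lies in $\M$ but is degenerate, say $s=t\circ f$ with $f\colon k\epi l$ a strict quotient. Then $t\circ f=s\in\M$, so the cancellation property of Lemma~\ref{l:factorisation-properties}\ref{cancellation-m} forces $f\in\M$, contradicting $f\in\E\setminus\M$. Hence every $\M$-morphism $k\to n$ is generic.

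For the reverse inclusion I would use the factorisation system to manufacture a witness of degeneracy whenever $s\notin\M$. Given any $s\colon k\to n$, factor it as $s=m\circ e$ with $e\in\E$ and $m\in\M$, say $k\xrightarrow{e} l\xrightarrow{m} n$, and split on whether $e$ is invertible. If $e$ is an isomorphism, then $e\in\M$ by Lemma~\ref{l:factorisation-properties}\ref{isos}, and so $s=m\circ e\in\M$ by closure under composition (Lemma~\ref{l:factorisation-properties}\ref{compositions}). If $e$ is not an isomorphism, then $e\in\E\setminus\M$ is a strict quotient and $s=m\circ e=E(e)(m)$ exhibits $s$ as degenerate. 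Reading this contrapositively, a generic $s$ can only fall into the first case, whence $s\in\M$. Combining the two inclusions yields $\St{E}{k}$ equal to the set of all $\M$-morphisms $k\to n$.

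The argument is short, and the only genuine choice point is the case split on whether the $\E$-component $e$ of the $(\E,\M)$-factorisation of $s$ is an isomorphism; I expect this to be the main (albeit mild) obstacle, since it is exactly the place where properness of the factorisation system is used — via $\E\cap\M=\{\text{isomorphisms}\}$ together with the cancellation law — to align the abstract notion of genericity with membership in $\M$.
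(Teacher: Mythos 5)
Your proof is correct and follows essentially the same route as the paper's: one direction uses the cancellation property (Lemma~\ref{l:factorisation-properties}\ref{cancellation-m}) to show $\M$-morphisms are generic, and the other uses the $(\E,\M)$-factorisation together with $\E\cap\M=\{\text{isomorphisms}\}$ and closure under composition to show generic elements lie in $\M$. The only differences are cosmetic (arguing by contradiction/contraposition where the paper argues directly).
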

\begin{proof}
    Let $f$ be an arbitrary element of $E(k)=\hom_{\A}(k,n)$. Assume $f$ is generic, and take its $(\E,\M)$ factorisation:
    \[\begin{tikzcd}
        k \arrow[rr, relay arrow=2ex, "f"] \arrow[r, twoheadrightarrow, "g"] & l \arrow[r, rightarrowtail, "h"] & n
    \end{tikzcd}\]
    Then $E(g)(h)=f$, and so $g$ must be an $\M$-morphism. By Lemma~\ref{l:factorisation-properties}\ref{compositions}, $f=h\circ g$ is also an $\M$-morphism. 

    Conversely, suppose $f$ is an $\M$-morphism and pick an $\E$-morphism $g\colon k\epi l$ and $h\in E(l)$ such that $E(g)(h)=f$, i.e.\ $h\circ g=f$. By Lemma~\ref{l:factorisation-properties}\ref{cancellation-m}, $g$ is an $\M$-morphism, so it is not a strict quotient. That is, $f$ is generic.
\end{proof}

The following is the main technical lemma of this section, and it ultimately relies on an application of the inclusion-exclusion principle.
\begin{lemma}\label{l:cores-isomorphic}
Assume $\A$ has pushouts, and let $E$ and $F$ be functors $\A^\op\to \fin$ sending $\E$-pushout squares in $\A$ to pullbacks. If $E(k)\cong F(k)$ for all $k\in \A$, then $\St{E}{k}\cong \St{F}{k}$ for all $k\in\A$.
\end{lemma}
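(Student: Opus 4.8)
The plan is to reduce the claim to a statement about cardinalities and then to invert a ``zeta-type'' summation indexed by quotients. Since $\St{E}{k}$ and $\St{F}{k}$ are subsets of the finite sets $E(k)$ and $F(k)$, it suffices to prove $|\St{E}{k}|=|\St{F}{k}|$ for every $k$. The engine will be a canonical decomposition
\[
|E(k)| \;=\; \sum_{[e\colon k\epi l]} |\St{E}{l}|,
\]
the sum ranging over isomorphism classes of $\E$-quotients of $k$, with the key feature that this index set and the combinatorics linking the two sides depend only on $\A$, not on $E$. Granting the analogous identity for $F$ and using $|E(l)|=|F(l)|$ for all $l$, an induction will then yield $|\St{E}{k}|=|\St{F}{k}|$.

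First I would record the categorical input. By Lemma~\ref{l:E-morph-to-injections}, both $E$ and $F$ send $\E$-morphisms to injections, so for any $\E$-quotient $e\colon k\epi l$ the image $\im E(e)$ is a subset of $E(k)$ of size $|E(l)|$, and the degenerate elements of $E(k)$ are exactly $\bigcup_{e\text{ strict}}\im E(e)$. The heart of the argument is the identity $\im E(e_1)\cap \im E(e_2)=\im E(e_1\vee e_2)$ for $\E$-quotients $e_1,e_2$, where $e_1\vee e_2$ denotes their pushout. Indeed, forming the pushout of $e_1$ and $e_2$ (which exists since $\A$ has pushouts) yields a square all of whose edges lie in $\E$, by Lemma~\ref{l:factorisation-properties}\ref{pushouts}; hence it is an $\E$-pushout square, which $E$ turns into a pullback in $\fin$. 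As every map in sight is injective, this pullback is computed as the intersection of the two images, giving the identity. This ``join $=$ intersection'' fact is precisely what makes an inclusion--exclusion computation over quotients go through.

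Next I would establish the decomposition, i.e.\ that every $s\in E(k)$ equals $E(e)(t)$ for an essentially unique $\E$-quotient $e\colon k\epi l$ with $t\in\St{E}{l}$ generic. For existence I pass to the \emph{most collapsed} quotient through which $s$ factors: the family $\{\im E(e)\mid s\in\im E(e)\}$ consists of subsets of the finite set $E(k)$ and is closed under intersection by the identity above, hence has a least member $\im E(e^*)$; the unique preimage $t^*$ of $s$ under the injection $E(e^*)$ is then generic, since otherwise it would factor through a further strict quotient, contradicting the maximality of $e^*$. For uniqueness, if $s=E(e_1)(t_1)=E(e_2)(t_2)$ with both $t_i$ generic, I pass to $e_1\vee e_2$ with its comparison maps $u_1,u_2$; injectivity forces $t_1=E(u_1)(t)$, so genericity of $t_1$ makes $u_1$ a non-strict $\E$-morphism, i.e.\ an isomorphism by Lemma~\ref{l:factorisation-properties}\ref{isos}, and likewise for $u_2$. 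Thus $e_1\cong e_2$ as quotients and $t_1,t_2$ correspond, giving $E(k)=\bigsqcup_{[e]} E(e)(\St{E}{l})$ and the displayed summation, with only finitely many nonzero terms since $|E(k)|$ is finite.

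Finally I would invert. Writing the summation as $|\St{E}{k}|=|E(k)|-\sum_{[e]\text{ strict}}|\St{E}{l}|$, I induct on $|E(k)|=|F(k)|$. For a strict quotient $e\colon k\epi l$ the injection $E(e)$ gives $|E(l)|\le|E(k)|$; if equality holds for some strict $e$ then $\im E(e)=E(k)$, so every element is degenerate and $\St{E}{k}=\emptyset=\St{F}{k}$, the same quotient forcing emptiness for $F$ since $|F(l)|=|F(k)|$. Otherwise every strict codomain $l$ satisfies $|E(l)|<|E(k)|$, the induction hypothesis gives $|\St{E}{l}|=|\St{F}{l}|$, and since the index set of strict quotients is the same for $E$ and $F$ the two expressions agree termwise, yielding $|\St{E}{k}|=|\St{F}{k}|$. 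The hard part will be making the canonical factorization genuinely well defined: the ``join $=$ intersection'' identity controls finite intersections, but one must ensure that the most collapsed quotient $e^*$ really exists and that the sum has finitely many nonzero terms. Both rest on the finiteness of $E(k)$ (a descending chain of images in a finite set stabilises) together with local finiteness of $\A$, and this is the single place where the hypotheses must be used with care; the remainder is bookkeeping around the pushout-to-pullback property.
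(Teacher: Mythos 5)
Your preparatory material is sound and overlaps with the paper: the ``join $=$ intersection'' identity $\im E(e_1)\cap\im E(e_2)=\im E(e_1\vee e_2)$, obtained by applying the pushout-to-pullback hypothesis to a square of $\E$-morphisms (Lemmas~\ref{l:factorisation-properties} and~\ref{l:E-morph-to-injections}), is exactly the mechanism the paper uses, and your uniqueness argument for the canonical factorisation is correct. The gap is in the \emph{existence} step, and hence in the engine of your whole proof, the decomposition $|E(k)|=\sum_{[e\colon k\epi l]}|\St{E}{l}|$: this identity is false under the stated hypotheses. Minimality of $\im E(e^*)$ among the images containing $s$ does not make the preimage $t^*$ generic: if $t^*=E(f)(r)$ for a strict quotient $f$, you only get $\im E(f\circ e^*)\subseteq\im E(e^*)$, and minimality forces these sets to be \emph{equal}---no contradiction, since a strict quotient need not shrink the image. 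Your proposed repair (``a descending chain of images in a finite set stabilises'') does not close this: the images can stabilise while the chain of strict quotients descends forever, and then no generic preimage ever appears. Concretely, let $\A$ be the poset $(\mathbb{Z},\geq)$ viewed as a category (a unique morphism $m\to n$ iff $m\geq n$). It is locally finite, has pushouts (given by $\min$), and $(\E,\M)=(\text{all morphisms},\text{isomorphisms})$ is a proper factorisation system, so all hypotheses of the lemma, and indeed of Theorem~\ref{th:Lovasz-categories}, hold; every non-identity morphism is a strict quotient. Take $E$ to be the constant functor with value a non-empty finite set $S$: it sends every pushout square to a pullback (a square of identities), yet every element of $E(k)=S$ is degenerate, since it lies in $\im E(f)=S$ for the strict quotient $f\colon k\epi k-1$. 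Thus $\St{E}{l}=\emptyset$ for all $l$ and your identity reads $|S|=0$.

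This is precisely why the paper's proof takes a different route. Instead of partitioning $E(k)$ via a canonical ``most collapsed'' factorisation---the classical Lov\'asz/Pultr-style counting argument, which genuinely needs extra hypotheses (Pultr assumes finite well-poweredness; alternatively one could demand that the strict-quotient order below $k$ be well-founded, as it is in $\Si_f$) to guarantee existence---the paper computes the number of \emph{degenerate} elements directly by inclusion--exclusion: finiteness of $E(k)$ and $F(k)$ yields a single finite set $\mathcal S$ of strict quotients whose images cover both $\Dg{E}{k}$ and $\Dg{F}{k}$; each intersection $\bigcap_{f\in J}\im E(f)$ for $\emptyset\neq J\subseteq\mathcal S$ is identified with $E(p_J)$, where $p_J$ is the wide pushout of $J$ (your intersection identity, iterated); and $E(p_J)\cong F(p_J)$ finishes the count. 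No element-wise canonical factorisation is ever invoked, which is what makes the lemma true at this level of generality---the paper's Related Work section explicitly flags this as its point of departure from Pultr's argument. To salvage your proof you would have to add a hypothesis ensuring the factorisation process terminates, which the intended applications satisfy but the lemma, as stated, does not assume.
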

\begin{proof}
Let $E,F$ be as in the statement, and suppose that $E(k)\cong F(k)$ for all $k\in \A$. We must show that, for every $k\in \A$, $\St{E}{k}\cong \St{F}{k}$. Equivalently, we show that the set $\Dg{E}{k}$ of degenerate elements of $E(k)$ is in bijection with the set $\Dg{F}{k}$ of degenerate elements of $F(k)$. 

Observe that $\Dg{E}{k}$ coincides with
\[
\bigcup{\{\operatorname{Im}(E(f))\mid f\colon k \epi l \ \text{is a strict quotient in $\A$}\}}\subseteq E(k),
\]
and similarly for $\Dg{F}{k}$.
    Since $E(k)$ and $F(k)$ are finite, there are finite sets $\mathcal{S}_1, \mathcal{S}_2$ of strict quotients of $k$  such that
\begin{align*} 
\Dg{E}{k}&=\bigcup{\{\operatorname{Im}(E(f))\mid f\in\mathcal{S}_1\}}, \\
\Dg{F}{k}&=\bigcup{\{\operatorname{Im}(F(f))\mid f\in\mathcal{S}_2\}}.
\end{align*}
Let $\mathcal{S}\coloneqq\mathcal{S}_1\cup \mathcal{S}_2$. By the inclusion-exclusion principle,
    \[ |\Dg{E}{k}| = \sum_{J\subseteq \mathcal S, J\neq \emptyset} (-1)^{|J|+1}\enspace \big|\!\bigcap_{f\in J} \operatorname{Im}(E(f))\big| \]
    and similarly for $|\Dg{F}{k}|$. Therefore, to prove $\Dg{E}{k}\cong \Dg{F}{k}$ it suffices to show that $\big|\!\bigcap_{f\in J} \operatorname{Im}(E(f))\big|=\big|\!\bigcap_{f\in J} \operatorname{Im}(F(f))\big|$ for any non-empty $J\subseteq \mathcal S$. 
    To this end, fix such a $J$ and consider the \emph{wide pushout} in $\A$ of the strict quotients in $J$, i.e.\ the diagram obtained by taking consecutive pushouts of the elements of $J$, as shown in the diagram on the left-hand side below. 
 \begin{center}
 \begin{tikzcd}[column sep=0.4em, row sep=0.3em]
 k \arrow[twoheadrightarrow]{ddr} \arrow[twoheadrightarrow]{drr} \arrow[twoheadrightarrow]{rrr} \arrow[twoheadrightarrow]{ddd} & & & l_u \arrow[twoheadrightarrow]{ddd} \\
 {} &   & l_3 \arrow[twoheadrightarrow]{ddr} \arrow[ur, phantom, "\Ddots"] & \\ 
 {} & l_2 \arrow[twoheadrightarrow]{drr} &   & \\ 
 l_1 \arrow[twoheadrightarrow]{rrr} & & & p
 \end{tikzcd}
 \hspace{4em}
  \begin{tikzcd}[column sep=-1em, row sep=0.3em]
 E(k) \arrow[hookleftarrow]{ddr} \arrow[hookleftarrow]{drr} \arrow[hookleftarrow]{rrr} \arrow[hookleftarrow]{ddd} & & & E(l_u) \arrow[hookleftarrow]{ddd} \\
 {} &   & E(l_3) \arrow[hookleftarrow]{ddr} \arrow[ur, phantom, "\Ddots"] & \\
 {} & E(l_2) \arrow[hookleftarrow]{drr} &   & \\
 E(l_1) \arrow[hookleftarrow]{rrr} & & & E(p)
 \end{tikzcd}
 \end{center}
By Lemma~\ref{l:factorisation-properties}\ref{pushouts}, all arrows in this diagram are $\E$-morphisms. Hence, by Lemma~\ref{l:E-morph-to-injections}, the diagram obtained by applying the functor $E$, depicted on the right-hand side above, consists of injections. Since $E$ sends $\E$-pushouts to pullbacks, the diagram on the right is a wide pullback of injections in $\fin$, and so $E(p) \cong \bigcap_{f\in J} \operatorname{Im}(E(f))$. Similarly,  $F(p) \cong \bigcap_{f\in J} \operatorname{Im}(F(f))$. 
As $E(p)\cong F(p)$ by assumption, we get $\big|\!\bigcap_{f\in J} \operatorname{Im}(E(f))\big|=\big|\!\bigcap_{f\in J} \operatorname{Im}(F(f))\big|$, thus concluding the proof.
\end{proof}

We are now in a position to prove the main result of this section:
\begin{proof}[Proof of Theorem~\ref{th:Lovasz-categories}]
Fix arbitrary objects $m,n\in\A$. 
For the non-trivial direction, assume $|\hom_{\A}(k,m)| = |\hom_{\A}(k,n)|$ for all $k\in\A$.
Let $E$ and $F$ denote, respectively, the functors $\hom_{\A}(-,m)\colon \A^\op\to\fin$ and $\hom_{\A}(-,n)\colon \A^\op\to\fin$. These functors send $\E$-pushout squares in $\A$ to pullbacks by Lemma~\ref{l:hom-lf-limits}. Thus, by Lemma~\ref{l:cores-isomorphic}, $\St{E}{k}\cong \St{F}{k}$ for all $k\in \A$. According to Lemma~\ref{l:generic-of-hom}, there is a bijection
\begin{equation*}
\{\text{$\M$-morphisms} \ k\mono m\} \ee\cong \{\text{$\M$-morphisms} \ k\mono n\}.
\end{equation*}
    Setting $k=m$, the existence of an $\M$-morphism $m\mono m$ (namely the identity) entails the existence of an $\M$-morphism $i\colon m\mono n$. Similarly, there exists an $\M$-morphism $j\colon n\mono m$. A standard argument then shows that $m \cong n$ (see e.g.~\cite{pultr1973isomorphism}). We briefly sketch a proof. The set $L$ of all $\M$-morphisms $m\mono m$ is a monoid with respect to composition and contains $j\circ i$. Since all $\M$-morphisms are monos, $L$ satisfies the left cancellation law $a b=a c \Rightarrow b=c$. But every finite monoid satisfying the left cancellation law is a group, hence $j\circ i$ has an inverse. It follows from Lemma~\ref{l:factorisation-properties}\ref{isos},\ref{cancellation-e} that $j$ is an isomorphism.
\end{proof}

\begin{remark}
Direct inspection of the preceding proofs shows that Theorem~\ref{th:Lovasz-categories} can be slightly strengthened by requiring only the existence of pushouts of $\E$-morphisms along $\E$-morphisms.
\end{remark}

\subsection{Specialising to finite coalgebras}
To conclude this section, we show that Theorem~\ref{th:Lovasz-categories} can be used to lift Lov{\'a}sz' homomorphism counting result from $\Si_f$ to categories of finite coalgebras for comonads on $\Si$. This result is then applied in Sections~\ref{s:applications-fmt}--\ref{s:applications-modal} to obtain homomorphism counting results in finite model theory and modal logic, respectively.

Given a comonad $\CC$ on $\Si$, we say that a coalgebra $(A,\alpha)$ for $\CC$ is \emph{finite} if $A$ is a finite $\sigma$-structure. The full subcategory of $\EM(\CC)$ defined by the finite coalgebras is denoted $\EM_f(\CC)$.
\begin{corollary}\label{cor:combinatorial-EM-f}
Let $\CC$ be any comonad on $\Si$. Then $\EM_f(\CC)$, the category of finite coalgebras for $\CC$, is combinatorial.
\end{corollary}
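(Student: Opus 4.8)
The plan is to verify that $\EM_f(\CC)$ satisfies the three hypotheses of Theorem~\ref{th:Lovasz-categories}---local finiteness, existence of pushouts, and a proper factorisation system---and then invoke that theorem directly. Local finiteness is immediate: a morphism $(A,\alpha)\to(B,\beta)$ in $\EM_f(\CC)$ is in particular a $\sigma$-homomorphism between the finite structures $A$ and $B$, of which there are only finitely many, so $\hom_{\EM_f(\CC)}((A,\alpha),(B,\beta))$ is finite.

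For pushouts I would use the standard fact that the forgetful functor $U^{\CC}\colon \EM(\CC)\to\Si$ from the coalgebras of a comonad \emph{creates} all colimits that exist in $\Si$ (this is dual to the familiar fact that the forgetful functor from the algebras of a monad creates limits). Hence $\EM(\CC)$ has all pushouts, and the underlying $\sigma$-structure of the pushout of a span of coalgebras is exactly the pushout of the underlying span computed in $\Si$. As noted in the excerpt, the pushout in $\Si$ of finite $\sigma$-structures is again finite, so the pushout of finite coalgebras is a finite coalgebra; since $\EM_f(\CC)$ is a full subcategory of $\EM(\CC)$ containing this pushout, it inherits the pushout.

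The substantial step is producing a proper factorisation system. Fix one of the two proper factorisation systems $(\E,\M)$ on $\Si_f$ described above, and define $\E_\CC$ (respectively $\M_\CC$) to be the class of coalgebra morphisms whose underlying homomorphism lies in $\E$ (respectively $\M$). Because $U^{\CC}$ is faithful, the orthogonality, closure, cancellation and epi/mono conditions all transfer for free: a commutative square in $\EM_f(\CC)$ with an $\E_\CC$-morphism on one side and an $\M_\CC$-morphism on the other maps to such a square in $\Si_f$, where a unique diagonal filler exists (properness implies orthogonality), and one checks by uniqueness of that filler that it is again a coalgebra morphism; likewise a faithful functor reflects monos and epis, so the properness of $(\E_\CC,\M_\CC)$ is read off from that of $(\E,\M)$. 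What is \emph{not} automatic is the existence of factorisations. Given $h\colon(A,\alpha)\to(B,\beta)$, factor its underlying homomorphism as $A\overset{e}{\epi}\tilde A\overset{m}{\mono}B$ in $\Si_f$ and attempt to equip $\tilde A$ with a coalgebra structure $\zeta\colon\tilde A\to\CC(\tilde A)$ making $e$ and $m$ coalgebra morphisms. The natural route is a diagonal filler: the square with top edge $e$, left edge $\CC(e)\circ\alpha$, bottom edge $\CC(m)$ and right edge $\beta\circ m$ commutes (both composites equal $\CC(h)\circ\alpha=\beta\circ h$ precomposed with $e$), and a filler $\zeta$ with $\zeta\circ e=\CC(e)\circ\alpha$ and $\CC(m)\circ\zeta=\beta\circ m$ exists as soon as $\CC(m)\in\M$. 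Granting $\zeta$, the counit and coassociativity laws for $\zeta$, together with the statement that $e$ and $m$ are coalgebra morphisms, all follow by precomposing with the epimorphism $e$ and invoking the corresponding laws for $\alpha$ and $\beta$.

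Thus the whole argument reduces to the key lemma that $\CC$ \emph{preserves the right class $\M$}, so that $\CC(m)\in\M$ above; this is the main obstacle, and it is precisely the point where the choice of factorisation system on $\Si$ matters. Once it is established, the image $\tilde A$ is finite and so a finite coalgebra, the $\Si_f$-factorisation $h=m\circ e$ lifts to a factorisation in $\EM_f(\CC)$ with $e\in\E_\CC$ and $m\in\M_\CC$, and $(\E_\CC,\M_\CC)$ is a proper factorisation system on $\EM_f(\CC)$. All three hypotheses of Theorem~\ref{th:Lovasz-categories} then hold, and applying that theorem yields that $\EM_f(\CC)$ is combinatorial.
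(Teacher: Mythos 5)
Your verification of local finiteness and of pushouts is correct and essentially identical to the paper's (both rest on the facts that $U^{\CC}\colon \EM(\CC)\to\Si$ is faithful and creates colimits, and that finite structures are closed under pushouts in $\Si$). The transfer of lifting properties, properness, and the coalgebra-morphism checks by precomposing with epimorphisms are also fine. The genuine gap is exactly where you flag it: the existence of $(\E_{\CC},\M_{\CC})$-factorisations hinges on the ``key lemma'' that $\CC$ sends $\M$-morphisms to $\M$-morphisms, so that the diagonal filler $\zeta$ equipping $\tilde A$ with a coalgebra structure exists --- and this lemma is never proved. Worse, it is not provable in the stated generality: the corollary is for an \emph{arbitrary} comonad on $\Si$. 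The functor $\CC=U^{\CC}\circ F^{\CC}$ is a right adjoint ($F^{\CC}$, which does preserve monomorphisms) followed by a left adjoint ($U^{\CC}$, which preserves epimorphisms and colimits but not monomorphisms), so there is no general reason why $\CC$ should send injective homomorphisms or induced embeddings into $\M$; and since monomorphisms in $\Si$ need not be split, one cannot fall back on the fact that every functor preserves split monos. At best your argument establishes the corollary under the extra hypothesis $\CC(\M)\subseteq\M$ --- which happens to hold for the concrete game comonads, but is strictly weaker than the statement claimed.

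The paper sidesteps this entirely by never attempting to lift the concrete factorisation system of $\Si_f$ along $U^{\CC}$. Instead it produces a proper factorisation system on $\EM(\CC)$ abstractly: $\EM(\CC)$ is cocomplete (by creation of colimits) and well-copowered (inherited from $\Si$ using that $U^{\CC}$ is faithful, preserves epimorphisms, and creates isomorphisms), and every cocomplete well-copowered category admits a proper factorisation system. The classes so obtained live on $\EM(\CC)$ and need not be detected by $U^{\CC}$, so no preservation property of the functor $\CC$ is ever invoked. This system is then restricted to $\EM_f(\CC)$: for a morphism of finite coalgebras, the $\E$-part of its factorisation has surjective underlying homomorphism (because $U^{\CC}$ preserves epimorphisms), so the middle object is again a finite coalgebra, and a full subcategory closed under the middle objects of factorisations inherits the proper factorisation system. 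If you want to repair your proof, this is the needed move: construct the factorisation system on the coalgebra category itself rather than asking $\CC$ to preserve one on $\Si_f$.
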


In order to prove the previous result, we need the following notions.
Recall that, for any category $\A$ and object $A\in\A$, two epimorphisms $f\colon A\to B$ and $g\colon A\to C$ are \emph{equivalent}, written $f\sim g$, if there exists an isomorphism $h\colon B\to C$ such that $h\circ f=g$. It is not difficult to see that $\sim$ is an equivalence relation on the collection of epimorphisms with domain $A$. The category $\A$ is said to be \emph{well-copowered} if, for each $A\in\A$, the collection of all $\sim$-equivalence classes of epimorphisms with domain $A$ is a set (as opposed to a proper class). Any cocomplete category that is well-copowered admits a proper factorisation system, see e.g.~\cite[Propositions~4.4.2,~4.4.3]{Borceux1} and \cite[Proposition~14.7]{AHS1990}.

Moreover, recall that a functor $F\colon \A \to \B$ \emph{creates colimits} if, for every diagram $\mathcal D$ in $\A$ such that $F(\mathcal D)$ admits a colimit $B$ in $\B$, there exists a colimit $A$ of $\mathcal D$ in $\A$ with $F(A)\cong B$. In particular, if $\B$ is cocomplete and $F$ creates colimits, then $\A$ is also cocomplete and the functor $F$ preserves colimits. Further, $F$ \emph{creates isomorphisms} if, whenever $h\colon B \to F(A)$ is an isomorphism in $\B$, there exists an isomorphism $h'\colon A'\to A$ in $\A$ such that $F(h') = h$. It is a well known fact that, for any comonad $\CC$ on $\A$, the forgetful functor $U^{\CC}\colon \EM(\CC)\to \A$ creates colimits and isomorphisms (see e.g.\ \cite[Proposition~20.12]{AHS1990} for a proof of the dual statement).

\begin{proof}[Proof of Corollary~\ref{cor:combinatorial-EM-f}]
We show that the category $\EM_f(\CC)$ satisfies the hypotheses of Theorem~\ref{th:Lovasz-categories}. The forgetful functor $U^{\CC}\colon \EM(\CC)\to \Si$ is faithful and restricts to a functor $\EM_f(\CC)\to \Si_f$. Thus, for all $(A,\alpha),(B,\beta)\in\EM_f(\CC)$, there is an injection $\hom_{\EM_f(\CC)}((A,\alpha),(B,\beta))\into \hom_{\Si_f}(A, B)$. Because the latter set is finite, so is the former. Therefore, $\EM_f(\CC)$ is locally finite.

We prove next that $\EM_f(\CC)$ has all pushouts. Because the inclusion $\EM_f(\CC)\into \EM(\CC)$ is full and faithful, it suffices to prove that the pushout in $\EM(\CC)$ of finite coalgebras exists and is a finite coalgebra. 
Since $\Si$ is cocomplete (cf.\ the discussion following Definition~\ref{def:weak-f-s}) and $U^{\CC}\colon \EM(\CC)\to \Si$ creates colimits, the category $\EM(\CC)$ is cocomplete and the forgetful functor $U^{\CC}\colon \EM(\CC)\to \Si$ preserves all colimits. In particular, the pushout of finite coalgebras exists in $\EM(\CC)$ and is a finite coalgebra. 

The functor $U^{\CC}\colon \EM(\CC)\to\Si$ preserves colimits, and in particular epimorphisms. Using the fact that $U^{\CC}$ is faithful and creates isomorphisms, it is not difficult to see that $\EM(\CC)$ is well-copowered because so is $\Si$. Thus, $\EM(\CC)$ admits a proper factorisation system $(\E,\M)$.
    Now, consider a morphism $(A,\alpha)\to (B,\beta)$ in $\EM_f(\CC)$ and its $(\E,\M)$-factorisation in $\EM(\CC)$:
\[\begin{tikzcd}
(A,\alpha) \arrow{r}{e} & (\tilde{A},\tilde{\alpha}) \arrow{r}{m} & (B,\beta)
\end{tikzcd}\]
Since $U^{\CC}$ preserves epimorphisms, $U(e)\colon A\to \tilde{A}$ is an epimorphism in $\Si$, that is a surjective homomorphism. It follows that $\tilde{A}$ is a finite $\sigma$-structure and so $(\tilde{A},\tilde{\alpha})\in\EM_f(\CC)$.

The fact that $\EM_f(\CC)$ admits a proper factorisation system is then a consequence of the following easy observation: Let $\A$ be any category equipped with a proper factorisation system $(\E,\M)$, and $\B$ a full subcategory of $\A$. Assume that, whenever $A\to B$ is a morphism in $\B$ and $A\to \tilde{A}\to B$ is its $(\E,\M)$-factorisation in $\A$, also $\tilde{A}\in \B$. Then $(\E\cap \B,\M\cap \B)$ is a proper factorisation system in $\B$.
\end{proof}

\begin{remark}\label{r:pointed-structures}
If $\CC$ is a comonad on $\Si$ that restricts to a comonad $\CC'$ on $\Si_f$, then $\EM(\CC')$ is isomorphic to $\EM_f(\CC)$ and thus $\EM(\CC')$ is combinatorial by Corollary~\ref{cor:combinatorial-EM-f}.
Also, note that the same proof as for Corollary~\ref{cor:combinatorial-EM-f} applies, mutatis mutandis, if $\Si$ is replaced by the category of \emph{pointed $\sigma$-structures}, having as objects the pairs $(A,a)$ with $A\in\Si$ and $a\in A$, and as morphisms the $\sigma$-homomorphisms preserving the distinguished points. This will be needed in Section~\ref{s:applications-modal} for applications in the setting of modal logic.

More generally, it is not difficult to see that we may replace $\Si$ and $\Si_f$ by any category $\A$, and full subcategory $\A_f$ of $\A$, respectively, such that:
    \begin{enumerate}
        \item $\A$ is cocomplete and well-copowered;
        \item $\A_f$ is locally finite, closed under finite colimits in $\A$ and, if $A \to B$ is an epimorphism in $\A$ with $A\in \A_f$, then $B\in \A_f$.
    \end{enumerate}
\end{remark}

\section{Applications to Finite Model Theory}\label{s:applications-fmt}
In this section, we apply Corollary~\ref{cor:combinatorial-EM-f} to the game comonads introduced in Section~\ref{s:prelim} to give new proofs of results of Grohe and Dvo\v r\'ak connecting homomorphism counts and indistinguishability in appropriate logic fragments.
We start by establishing a general result (Theorem~\ref{t:general-statement}) and then proceed to show how Grohe's and Dvo\v r\'ak's theorems follow from it.

Before stating this general result, we illustrate the underlying idea for a generic game comonad $\CC$.
In Section~\ref{s:prelim} we have recalled how game comonads capture combinatorial parameters of $\sigma$-structures. On the other hand, they can also express equivalence with respect to appropriate logic fragments. For this, we need to extend the signature to account for equality in the logic. Let $\sigma^+\coloneqq \sigma\cup \{I\}$ be the relational signature obtained by adding a binary relation symbol $I$ to $\sigma$, and let $\Si^+$ be the category of $\sigma^+$-structures with homomorphisms. There is an \emph{embedding}, i.e.\ a full and faithful functor, 
\[ J\colon \Si\to \Si^+ \]
which sends a $\sigma$-structure $A$ to the $\sigma^+$-structure $J(A)$ where $I$ is interpreted as the identity.

As the comonad $\CC$ on $\Si$ is defined for arbitrary relational signatures, we have a corresponding comonad $\CC^+$ on $\Si^+$. Typically, it turns out that, for an appropriate logic fragment~$\mathcal L$,
\[
A\equiv_{\mathcal L} B \ \ \Longleftrightarrow \ \  F^{\CC^+}(J(A)) \cong F^{\CC^+}(J(B))
\]
for all $A,B\in \Si_f$, where $U^{\CC^+} \dashv F^{\CC^+}\colon\Si^+ \to \EM(\CC^+)$ is the adjunction associated with $\CC^+$, cf.\ Section~\ref{s:prelim-coalgebras}.

Provided that $\CC^+$ sends finite structures to finite structures, we can apply Corollary~\ref{cor:combinatorial-EM-f} to translate the isomorphism $F^{\CC^+}(J(A)) \cong F^{\CC^+}(J(B))$ into a statement about homomorphism counts for $J(A)$ and $J(B)$ in $\Si_f^+$. It then remains to go back to $\sigma$-structures to obtain a statement about homomorphism counts for $A$ and $B$ in $\Si_f$. To this end, note that the functor $J\colon \Si\to \Si^+$ has a left adjoint 
\[
H\colon \Si^+ \to \Si
\] 
sending a $\sigma^+$-structure $D$ to $D^-/{\sim}$, where $D^-$ is the $\sigma$-reduct of $D$ and $\sim$ is the equivalence relation generated by $I^D$ (for a proof of this fact, see Lemma~\ref{l:adjunction-HJ}).

We now state the general result (whose proof is deferred to Section~\ref{s:proof-of-general-statement}) from which Grohe's and Dvo\v r\'ak's theorems will be derived in Sections~\ref{s:hom-count-Grohe} and~\ref{s:tw}, respectively. Let us write $\im(U^{\CC})$ for the full subcategory of $\Si$ consisting of the objects of the form $U^\CC(A,\alpha) = A$ for $(A,\alpha)\in \EM(\CC)$, and similarly for $\im(U^{\CC^+})$.

\begin{theorem}\label{t:general-statement}
Let $\CC$ and $\CC^+$ be comonads on $\Si$ and $\Si^+$, respectively, and $\mathcal L$ a logic fragment. Suppose the following conditions are satisfied:
\begin{enumerate}
    \item\label{logic-frag} for all $A,B\in\Si_f$, $A \equiv_{\mathcal L} B$ if, and only if, $F^{\CC^+}(J(A)) \cong F^{\CC^+}(J(B))$;
     \item\label{restr-fin} $\CC^+$ sends finite $\sigma^+$-structures to finite $\sigma^+$-structures;
    \item\label{combinat} the embedding $J\colon \Si \to \Si^+$ and its left adjoint $H$ restrict to $\Si_f \cap \im(U^\CC)$ and $\Si_f^+ \cap \im(U^{\CC^+})$.
\end{enumerate}
Then, for any finite $\sigma$-structures $A$ and $B$,
\[  A \equiv_{\mathcal L} B \ \text{ if, and only if, } \ |\hom_{\Si_f}(C,A)| = |\hom_{\Si_f}(C,B)| \]
for every finite $\sigma$-structure $C$ in $\im(U^\CC)$.
\end{theorem}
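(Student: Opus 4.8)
The plan is to prove the stated biconditional by transporting the isomorphism $F^{\CC^+}(J(A)) \cong F^{\CC^+}(J(B))$ furnished by hypothesis~\ref{logic-frag} along a chain of equivalences, using the combinatoriality of $\EM_f(\CC^+)$ together with the two adjunctions $U^{\CC^+} \dashv F^{\CC^+}$ and $H \dashv J$. Throughout, $A$ and $B$ are fixed finite $\sigma$-structures, so that $J(A)$ and $J(B)$ are finite $\sigma^+$-structures; by hypothesis~\ref{restr-fin}, $\CC^+(J(A))$ and $\CC^+(J(B))$ are finite, hence $F^{\CC^+}(J(A)) = (\CC^+(J(A)), \delta)$ and $F^{\CC^+}(J(B))$ are objects of $\EM_f(\CC^+)$. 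Since isomorphism in the full subcategory $\EM_f(\CC^+)$ agrees with isomorphism in $\EM(\CC^+)$, hypothesis~\ref{logic-frag} reduces $A \equiv_{\mathcal L} B$ to an isomorphism in $\EM_f(\CC^+)$.

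The first main step is to apply Corollary~\ref{cor:combinatorial-EM-f}, which tells us $\EM_f(\CC^+)$ is combinatorial. Hence $F^{\CC^+}(J(A)) \cong F^{\CC^+}(J(B))$ if, and only if, for every finite coalgebra $(K,\kappa)$ the hom-counts $|\hom_{\EM_f(\CC^+)}((K,\kappa), F^{\CC^+}(J(A)))|$ and $|\hom_{\EM_f(\CC^+)}((K,\kappa), F^{\CC^+}(J(B)))|$ agree. I would then invoke the adjunction $U^{\CC^+} \dashv F^{\CC^+}$ to rewrite each such hom-set as $\hom_{\Si^+_f}(K, J(A))$, where $K = U^{\CC^+}(K,\kappa)$. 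Since the resulting count depends only on $K$ and not on the chosen coalgebra structure $\kappa$, the quantification over finite coalgebras collapses to a quantification over objects $K$ of $\Si^+_f \cap \im(U^{\CC^+})$. This yields: $F^{\CC^+}(J(A)) \cong F^{\CC^+}(J(B))$ iff $|\hom_{\Si^+_f}(K, J(A))| = |\hom_{\Si^+_f}(K, J(B))|$ for all $K \in \Si^+_f \cap \im(U^{\CC^+})$.

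The second main step is to transport this statement from $\Si^+$ back to $\Si$ using the adjunction $H \dashv J$, which gives natural bijections $\hom_{\Si^+_f}(K, J(A)) \cong \hom_{\Si_f}(H(K), A)$ (and likewise for $B$). So the condition becomes: $|\hom_{\Si_f}(H(K), A)| = |\hom_{\Si_f}(H(K), B)|$ for all $K \in \Si^+_f \cap \im(U^{\CC^+})$. It remains to show that, as $K$ ranges over $\Si^+_f \cap \im(U^{\CC^+})$, the object $H(K)$ ranges up to isomorphism over exactly $\Si_f \cap \im(U^\CC)$. Hypothesis~\ref{combinat} supplies that $H$ restricts to a functor $\Si^+_f \cap \im(U^{\CC^+}) \to \Si_f \cap \im(U^\CC)$, so every $H(K)$ lands in the target class; conversely, because $J$ is an embedding and hence fully faithful, the counit of $H \dashv J$ is an isomorphism $H J(C) \cong C$, and hypothesis~\ref{combinat} also gives $J(C) \in \Si^+_f \cap \im(U^{\CC^+})$ whenever $C \in \Si_f \cap \im(U^\CC)$. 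Thus every $C \in \Si_f \cap \im(U^\CC)$ is realised as $H(J(C))$ up to isomorphism, and since hom-counts are isomorphism-invariant the two quantifications coincide, completing the chain of equivalences.

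I expect the routine parts—finiteness bookkeeping and the two applications of the adjunction bijections—to be unproblematic. The main obstacle, and the step requiring the most care, is the quantifier bookkeeping in the two places where one quantification is replaced by another: first collapsing ``for all finite coalgebras $(K,\kappa)$'' to ``for all $K \in \Si^+_f \cap \im(U^{\CC^+})$'' (which relies on the hom-count being independent of $\kappa$), and then matching ``for all $K \in \Si^+_f \cap \im(U^{\CC^+})$'' with ``for all $C \in \Si_f \cap \im(U^\CC)$'' via $H$ and $J$. The latter is precisely where all the hypotheses conspire: hypothesis~\ref{combinat} guarantees the two restrictions are well defined and have matching ranges up to isomorphism, while the counit isomorphism $HJ \cong \id$, coming from full faithfulness of the embedding $J$, ensures that no object of $\Si_f \cap \im(U^\CC)$ is missed.
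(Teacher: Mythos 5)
Your proof is correct and follows essentially the same route as the paper: apply Corollary~\ref{cor:combinatorial-EM-f} to $\CC^+$, rewrite the coalgebra hom-counts along the adjunction $U^{\CC^+} \dashv F^{\CC^+}$, collapse the quantification over finite coalgebras to one over $\Si^+_f \cap \im(U^{\CC^+})$, and transfer back to $\Si_f$ via $H \dashv J$ and hypothesis~\ref{combinat}. The only cosmetic difference is that the paper packages the final transfer as Lemma~\ref{l:I-quot} (arguing on hom-sets via full faithfulness of $J$), whereas you inline it as a range-matching argument through the counit isomorphism $HJ \cong \id$ --- an equivalent formulation, since the counit of $H \dashv J$ is invertible precisely because $J$ is fully faithful.
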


\begin{remark}
The proofs of Grohe's and Dvo\v r\'ak's theorems essentially reduce to showing that the assumptions of Theorem~\ref{t:general-statement} are satisfied for the appropriate comonads. The ``combinatorial core'' of these results, which requires a specific argument for each comonad, corresponds to verifying that the functor $H$ restricts to $\Si_f^+ \cap \im(U^{\CC^+})\to \Si_f \cap \im(U^\CC)$. This amounts to saying that the operation $D\mapsto D^-/{\sim}$ does not increase the tree-depth or tree-width of $D$, and can be understood as an equality elimination result (cf.\ Section~\ref{s:normal-forms}).
\end{remark}

\subsection{Bounded tree-depth}\label{s:hom-count-Grohe}
Let $\CL$ be the extension of first-order logic obtained by adding, for each natural number $i$, a counting quantifier~$\exists_{{\geq}i}$. The semantics of these quantifiers is the following: for any structure $A$, we have $A \models \exists_{{\geq}i} x.\, \varphi$ if, and only if, $A \models \varphi[a/x]$ holds for at least $i$ distinct elements $a\in A$. Let $\CL_n$ be the fragment of $\CL$ consisting of formulas of quantifier depth ${\leq \,}n$. 

The aim of this section is to use Theorem~\ref{t:general-statement} to give a new proof of a recent result of Grohe~\cite{grohe2020counting}:\footnote{Grohe proved this result for undirected vertex-coloured graphs, but pointed out that his proof can be extended to arbitrary $\sigma$-structures.}
\begin{theorem}\label{thm:Grohe}
    For any $A,B$ finite $\sigma$-structures,
    \[ A\equiv_{\CL_n} B \qtq{if, and only if,} |\hom_{\Si_f}(C,A)|= |\hom_{\Si_f}(C,B)|\]
    for every finite $\sigma$-structure $C$ with tree-depth at most $n$.
\end{theorem}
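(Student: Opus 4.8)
The plan is to derive Theorem~\ref{thm:Grohe} as a direct instance of Theorem~\ref{t:general-statement}, taking $\CC$ to be the {\ef} comonad $\En$ on $\Si$, taking $\CC^+$ to be the corresponding comonad $\En^+$ on $\Si^+$, and taking $\mathcal L$ to be $\CL_n$. The advantage of this choice is visible immediately on the conclusion side: by Proposition~\ref{p:En-coalgebras} a finite $\sigma$-structure lies in $\im(U^{\En})$ precisely when it has tree-depth at most $n$, so the class of structures $C$ occurring in the conclusion of Theorem~\ref{t:general-statement} is \emph{exactly} the class of finite $\sigma$-structures of tree-depth at most $n$. Hence, once the three hypotheses of Theorem~\ref{t:general-statement} are checked for this data, the stated equivalence follows verbatim.

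Two of the hypotheses do not concern tree-depth. Condition~\ref{restr-fin} is immediate, since the universe of $\En^+(D)$ is the set of non-empty sequences over $D$ of length at most $n$, which is finite whenever $D$ is. Condition~\ref{logic-frag} is the logical core of the statement: it asserts that, for finite $A,B$, one has $A\equiv_{\CL_n}B$ if, and only if, $F^{\En^+}(J(A))\cong F^{\En^+}(J(B))$. I would obtain this from the game-comonad analysis of the {\ef} comonad in~\cite{AbramskyShah2018}: isomorphism of the cofree coalgebras $F^{\En^+}(J(A))$ and $F^{\En^+}(J(B))$ corresponds to a winning Duplicator strategy in the \emph{bijective} $n$-round {\ef} game, which is the classical combinatorial characterisation of equivalence in first-order logic with counting and quantifier depth $n$. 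The auxiliary relation $I$, interpreted as equality by $J$, is present precisely so that homomorphisms into $\En^+$-structures can detect equality, forcing the comonadic notion of isomorphism to match $\equiv_{\CL_n}$ rather than a coarser, purely homomorphism-based equivalence.

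This leaves condition~\ref{combinat}, which is where the genuine work lies and which I expect to be the main obstacle. That $J$ restricts is routine: since $J$ interprets $I$ as the diagonal and the Gaifman graph ignores loops, we have $\G_{J(A)}=\G_A$, so $J$ preserves tree-depth exactly. The substantive point --- the ``combinatorial core'' --- is that the left adjoint $H$, sending $D\mapsto D^-/{\sim}$ where $\sim$ is the equivalence relation generated by $I^D$, does not increase tree-depth. I would settle this by a minor argument. Each $\sim$-class is connected in $\G_D$: if $a\neq a'$ lie in one class they are joined by a path of $I^D$-edges, and every such edge between distinct elements is an edge of $\G_D$ because $I\in\sigma^+$. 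Contracting each of these connected classes to a single point and discarding the $I$-edges --- which, as $\sim$ is generated by $I^D$, all lie \emph{inside} classes --- produces precisely $\G_{H(D)}$, whose edges between distinct classes arise only from the $\sigma$-relations. Thus $\G_{H(D)}$ is a minor of $\G_D$, and since tree-depth is minor-monotone, the tree-depth of $\G_{H(D)}$ is at most that of $\G_D$, hence at most $n$. Consequently $H$ restricts to $\Si_f^+\cap\im(U^{\En^+})\to\Si_f\cap\im(U^{\En})$ as required. With all three hypotheses verified, Theorem~\ref{t:general-statement} yields the claim.
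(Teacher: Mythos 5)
Your proposal is correct, and at the top level it is the same proof as the paper's: instantiate Theorem~\ref{t:general-statement} with $\CC=\En$, $\CC^+=\En^+$, $\mathcal L=\CL_n$, identify $\Si_f\cap\im(U^{\En})$ with the finite structures of tree-depth at most $n$ via Proposition~\ref{p:En-coalgebras}, defer condition~\ref{logic-frag} to the Abramsky--Shah result (the paper records this as Proposition~\ref{pr:logic-EM-iso}), and observe that condition~\ref{restr-fin} is immediate from finiteness of $\En^+(D)$ for finite $D$. The genuine divergence is in what the paper calls the combinatorial core, namely condition~\ref{combinat} for the left adjoint $H$. The paper (Proposition~\ref{pr:adjunction-HJ-td-k}) argues directly with forest covers: given an injective forest cover $f\colon D\to F$ of height ${\leq\,}n$, it sets $g([d])\coloneqq\min\{f(d')\mid d'\in[d]\}$ --- the minimum existing because each $\sim$-class is connected --- and verifies by hand that $g$ is an injective forest cover of $\G_{H(D)}$ of the same height. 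You instead note that $\G_{H(D)}$ is obtained from $\G_D$ by contracting the connected $\sim$-classes (all $I$-edges being intra-class), hence is a minor of $\G_D$, and appeal to minor-monotonicity of tree-depth. Both arguments are sound, and your identification of the contracted graph with $\G_{H(D)}$ is accurate. What each buys: your route is shorter and more modular, but it imports a standard graph-theoretic fact that the paper neither states nor proves; the paper's argument is self-contained and works directly with the forest-cover definition of tree-depth it adopts --- and indeed its min-based construction is, in substance, precisely a proof of the instance of minor-monotonicity you invoke (contraction of connected branch sets, realised by sending each class to the minimum of its image in the forest). So the underlying combinatorics coincide; you trade self-containedness for brevity by using a correct but unproved black box.
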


Let $\En^+$ be the {\ef} comonad on $\Si^+$, and $U^{\En^+}\dashv F^{\En^+}\colon \Si^+\to \EM(\En^+)$ the associated adjunction (cf.\ Section~\ref{s:prelim-coalgebras}). 
By Proposition~\ref{p:En-coalgebras}, a $\sigma$-structure has tree-depth at most $n$ precisely when it belongs to $\im(U^{\En})$. Thus, to prove Theorem~\ref{thm:Grohe}, it suffices to show that the assumptions of Theorem~\ref{t:general-statement} are satisfied for $\CC=\En$, $\CC^+=\En^+$, and $\mathcal{L}=\CL_n$.

Item~\ref{logic-frag} in Theorem~\ref{t:general-statement} is a consequence of~\cite[Theorem~12]{AbramskyShah2018}:
\begin{proposition}\label{pr:logic-EM-iso}
For any two $\sigma$-structures $A,B$, we have  $A\equiv_{\CL_n} B$ if, and only if, $F^{\En^+}(J(A)) \cong F^{\En^+}(J(B))$.
\end{proposition}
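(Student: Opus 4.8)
The plan is to deduce the statement from the Abramsky--Shah characterisation \cite[Theorem~12]{AbramskyShah2018}, which relates isomorphism of cofree coalgebras for the {\ef} comonad to indistinguishability in counting logic of bounded quantifier depth. Almost all of the substance is imported from that result; the only point that genuinely requires attention on our side is the treatment of equality, and this is precisely what the passage to the extended signature $\sigma^+$ and the functor $J\colon\Si\to\Si^+$ is designed to handle.

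First I would recall the content of \cite[Theorem~12]{AbramskyShah2018} in the present notation: for any relational signature $\tau$ and any two $\tau$-structures $X,Y$, one has $F^{\mathbb{E}_n}(X)\cong F^{\mathbb{E}_n}(Y)$ in $\EM(\mathbb{E}_n)$ if, and only if, $X$ and $Y$ satisfy the same counting-logic sentences of quantifier depth at most $n$ over $\tau$. The heart of that theorem is the observation that an isomorphism of cofree coalgebras unwinds, level by level, into the family of bijections witnessing a winning strategy for Duplicator in the $n$-round bijective {\ef} game; since we invoke the theorem, this combinatorial core is inherited and need not be reproved.

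It then remains to transport the equivalence between the signatures $\sigma$ and $\sigma^+$. Applying the above to $\tau=\sigma^+$ and the comonad $\En^+$ gives that $F^{\En^+}(J(A))\cong F^{\En^+}(J(B))$ is equivalent to $J(A)$ and $J(B)$ being indistinguishable by $\CL_n$-sentences over $\sigma^+$. To close the argument I would show that this last condition is equivalent to $A\equiv_{\CL_n}B$. The translation is the expected one: since $I$ is interpreted as the genuine identity on every structure of the form $J(A)$, an atomic formula $x=y$ over $\sigma$ is replaced by $I(x,y)$ over $\sigma^+$ and conversely. This substitution is quantifier-free, hence preserves quantifier depth, and one verifies by a routine induction on formulas that $A\models\varphi$ iff $J(A)\models\varphi^+$ for the resulting translation $\varphi\mapsto\varphi^+$. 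Chaining the two equivalences yields $A\equiv_{\CL_n}B \Leftrightarrow J(A)$ and $J(B)$ are $\CL_n$-equivalent over $\sigma^+ \Leftrightarrow F^{\En^+}(J(A))\cong F^{\En^+}(J(B))$, as required.

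I expect the only real subtlety to be bookkeeping rather than mathematics: one must check that the quantifier-depth bound $n$ is respected under the equality-elimination translation, and that counting logic over $\sigma^+$, evaluated on the image of $J$, has exactly the expressive power of $\CL_n$ with equality over $\sigma$ (here it is convenient that $I$ and logical equality coincide on $J(A)$, so any use of one can be rewritten in terms of the other). The genuinely difficult step---the passage from an Eilenberg--Moore isomorphism to the round-by-round bijections of the counting game---is entirely contained in \cite[Theorem~12]{AbramskyShah2018}, so from our standpoint the proof is essentially an equality-elimination argument layered on top of that result.
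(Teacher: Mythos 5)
Your high-level plan (cite Abramsky--Shah, then transport along $J$) is exactly the paper's: the paper's entire proof of Proposition~\ref{pr:logic-EM-iso} is the citation of \cite[Theorem~12]{AbramskyShah2018}. But there is a genuine gap in what you take that theorem to say. You invoke it in the form: \emph{for any relational signature $\tau$ and any $\tau$-structures $X,Y$, $F^{\En}(X)\cong F^{\En}(Y)$ iff $X$ and $Y$ agree on all counting-logic sentences of quantifier depth at most $n$ over $\tau$}. That is not the cited result, and it cannot be obtained by ``inheriting its combinatorial core''. Over a bare signature, a coKleisli morphism $\En(X)\to Y$ is a Duplicator strategy required only to preserve the relations of $\tau$ along comparable plays; nothing forces it to respect equality---Spoiler repeating an element may receive two different responses, and distinct elements may receive equal responses. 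Hence an isomorphism $F^{\En}(X)\cong F^{\En}(Y)$ yields, by the Abramsky--Shah argument, bijections preserving and reflecting the $\tau$-relations, i.e.\ equivalence in counting logic \emph{without} equality; it does not yield partial isomorphisms. The device that repairs this is precisely the relation $I$: the cited theorem concerns structures in which $I$ is interpreted as the identity (i.e.\ it is already a statement about $J(A)$ and $J(B)$), and it is preservation of $I$ by the strategy and by its Kleisli inverse that forces functionality and injectivity, hence adequacy for Hella's bijective game with equality. So the equality issue you dismiss as bookkeeping is not located in your $=\,\leftrightarrow I$ translation (which is indeed routine); it is located inside the citation, and your restatement of the citation strips it out.

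Two symptoms confirm the problem. First, if your general premise were available, the whole $\Si^+/J$ apparatus would be pointless: taking $\tau=\sigma$ would give $A\equiv_{\CL_n}B$ iff $F^{\En}(A)\cong F^{\En}(B)$ directly, contradicting the paper's explicit remark that the signature is extended precisely ``to account for equality in the logic''. Second, your premise (for finite structures, with equality, over an arbitrary signature) is in fact \emph{true}, but the only route to it is an equality-elimination result: in this paper that is Theorem~\ref{th:equality-elim}, which is proved downstream of Proposition~\ref{pr:logic-EM-iso} via Theorem~\ref{thm:Grohe}, so assuming it here is circular. The repair is small: quote \cite[Theorem~12]{AbramskyShah2018} in its actual form---a characterisation of $\equiv_{\CL_n}$ for $\sigma$-structures via (Kleisli) isomorphism of their $I$-expansions $J(A),J(B)$---and add the standard observation that Kleisli isomorphism of $J(A)$ and $J(B)$ is the same as isomorphism of the cofree coalgebras $F^{\En^+}(J(A))$ and $F^{\En^+}(J(B))$ in $\EM(\En^+)$, since the Kleisli category embeds fully faithfully into the Eilenberg--Moore category. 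Your translation step then becomes redundant, as that conversion is already performed by the cited statement.
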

Item~\ref{restr-fin} holds because the comonad $\En^+$ restricts to finite $\sigma^+$-structures, and item~\ref{combinat} is a consequence of the following proposition---which concludes the proof of Grohe's theorem.
\begin{proposition}\label{pr:adjunction-HJ-td-k}
    The embedding $J\colon \Si \to \Si^+$ and its left adjoint $H$ restrict to the full subcategories consisting of finite structures with tree-depth at most $n$.
\end{proposition}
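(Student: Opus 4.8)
\subsection*{Proof proposal}

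The statement has two halves---that $J$ restricts and that $H$ restricts---and the first is immediate, so the plan is to dispose of it first. Since $I$ is interpreted in $J(A)$ as the identity relation, every $I$-tuple of $J(A)$ has the form $(a,a)$ and therefore contributes no edge to the Gaifman graph, as an edge requires two \emph{distinct} elements. The $\sigma$-relations of $J(A)$ coincide with those of $A$, so $\G_{J(A)} = \G_A$ on the nose; thus $A$ and $J(A)$ have literally the same forest covers and hence the same tree-depth, and in particular $J$ sends finite structures of tree-depth $\leq n$ to such structures.

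The substance is the claim for $H$. I would fix a finite $\sigma^+$-structure $D$ of tree-depth $\leq n$ together with a forest cover $(D,\leq)$ of $\G_D$ of height $\leq n$ (normalising so that the underlying set of the forest is $D$ and the injection is the identity, as discussed in Section~\ref{s:prelim-coalgebras}). Recall $H(D) = D^-/{\sim}$, with $\sim$ the equivalence relation generated by $I^D$ and $q\colon D \to H(D)$ the quotient map. The key lemma is that \emph{every $\sim$-class $C$ has a least element in $(D,\leq)$}. Since $I\in\sigma^+$, any edge of $I^D$ between distinct elements is an edge of $\G_D$, hence connects $\leq$-comparable elements; consequently any two elements of $C$ are joined by a path $d_0,\dots,d_m$ inside $C$ whose consecutive terms are comparable (and all lie in one tree of the forest, so meets exist). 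Running along such a path and taking successive meets $z_i = d_0\wedge\cdots\wedge d_i$, I would show by induction that each $z_i$ lies in $C$: the step splits into the cases $d_i\leq d_{i+1}$ and $d_{i+1}\leq d_i$, the latter using that $z_i$ and $d_{i+1}$ both lie in the chain $\down d_i$ and are therefore comparable, so their meet is one of them. This produces a common lower bound in $C$ for any two elements of $C$; as $C$ is finite and thus downward directed, it has a minimum $\min C$.

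With the lemma in hand, define $f\colon H(D)\to D$ by $f([c]) = \min [c]$. It is injective since distinct classes have distinct minima, and its image $M$, with the order inherited from $(D,\leq)$, is again a forest of height $\leq n$ because each principal down-set $\down_M x$ sits inside the chain $\down_D x$. It remains to verify the forest-cover condition: if $[c]\adj[c']$ in $\G_{H(D)}$, then $f([c])\up f([c'])$. Unwinding the definition, such an edge is witnessed by a tuple in a $\sigma$-relation of $D$, yielding $e\in[c]$ and $e'\in[c']$ with $e\adj e'$ in $\G_D$; hence $e\up e'$, say $e\leq e'$. Then $\min[c]\leq e\leq e'$ and $\min[c']\leq e'$, so both minima lie in the chain $\down e'$ and are therefore comparable. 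This exhibits $(M,\leq)$ together with $f$ as a forest cover of $\G_{H(D)}$ of height $\leq n$, so $H(D)$ has tree-depth $\leq n$; finiteness is automatic, $H(D)$ being a quotient of $D$. I expect the only genuine obstacle to be the least-element lemma---specifically, checking that the running meets remain inside the class---since, once comparability of the two minima is extracted from the single witnessing edge, the rest is bookkeeping about Gaifman graphs and principal down-sets.
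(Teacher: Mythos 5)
Your proposal is correct and takes essentially the same route as the paper's proof: the paper also sends each $\sim$-class $[d]$ to $\xi(d)\coloneqq\min\{f(d')\mid d'\in[d]\}$, justifies existence of this minimum by the connectedness of the class's image in the forest (your running-meets induction is a detailed rendering of exactly this point), and verifies the forest-cover condition by the same argument that both minima lie in a common principal down-set, which is a chain. The only differences are cosmetic: you normalise the forest cover so that $f=\id$ and corestrict to the image $M$, while the paper keeps a general $(F,\leq)$ and reuses it directly.
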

\begin{proof}
Clearly, if $A$ is a (finite) $\sigma$-structure with tree-depth at most $n$, then $J(A)$ has tree-depth at most $n$. It remains to prove that, for any $D\in\Si_f^+$, if $D$ has tree-depth at most $n$ then so does $H(D)$.
To improve readability, set $C\coloneqq H(D)$. 

Suppose $(F,\leq)$ is a forest of height ${\leq \,} n$ and $f\colon D\to F$ an injective map such that $f(d_1)\up f(d_2)$ whenever $d_1\adj d_2$ in the Gaifman graph $\G_D$. Recall from the definition of $H$ that, at the level of sets, $C=D/{\sim}$ where $\sim$ is the equivalence relation generated by $I^D$. If $d\in D$, let $[d]$ denote its $\sim$-equivalence class. For each $d\in D$, set
\[
\xi(d)\coloneqq\min{\{f(d')\mid d'\in [d]\}}.
\]
This minimum exists because $\{f(d')\mid d'\in [d]\}$ is a connected subset of $F$. (Alternatively, note that $\xi(d)$ coincides with the minimum of $\{f(d')\mid d'\in [d]\}\cap \down  f(d)$, which is a finite non-empty totally ordered set.)
Define 
\[
g\colon C\to F, \ \ g([d])\coloneqq \xi(d). 
\]
Clearly, $g$ is well-defined and it is injective because so is $f$.
It remains to prove that, for all $d_1,d_2\in D$, $[d_1]\adj [d_2]$ in $\G_C$ entails $g([d_1])\up g([d_2])$, for then it follows that $(F,\leq)$, along with the map $g$, is a forest cover of $\G_C$.

If $[d_1]\adj [d_2]$, there are $d'_1\in [d_1]$ and $d'_2\in [d_2]$ satisfying $d'_1\adj d'_2$, and so $f(d'_1)\up f(d'_2)$. Assume $f(d'_1)\leq f(d'_2)$. Since $g([d_1])\leq f(d'_1)$ and $g([d_2])\leq f(d'_2)$, it follows that both $g([d_1])$ and $g([d_2])$ belong to $\down f(d'_2)$, which is totally ordered. Therefore, $g([d_1])\up g([d_2])$. Reasoning in a similar manner, we see that $f(d'_2)\leq f(d'_1)$ implies $g([d_1])\up g([d_2])$.
\end{proof}

\subsection{Bounded tree-width}\label{s:tw}
Let $\CL^k$ be the $k$-variable fragment of $\CL$---first-order logic with counting quantifiers.
In this section, we show how to derive from Theorem~\ref{t:general-statement} the following variant of Dvo\v r\'ak's theorem~\cite{dvovrak2010recognizing}:\footnote{Dvo\v r\'ak's result is for undirected graphs without loops. For a discussion of how his result can be recovered in our framework, see Remark~\ref{rm:counting-tree-width-subclasses}.}
\begin{theorem}\label{thm:Dvorak}
    For any $A,B$ finite $\sigma$-structures,
    \[ A\equiv_{\CL^k} B \qtq{if, and only if,} |\hom_{\Si_f}(C,A)|= |\hom_{\Si_f}(C,B)|\]
    for every finite $\sigma$-structure $C$ with tree-width at most $k-1$.
\end{theorem}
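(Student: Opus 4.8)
The plan is to verify that the assumptions of Theorem~\ref{t:general-statement} hold for $\CC = \Pk$, $\CC^+ = \Pk^+$, and $\mathcal L = \CL^k$, exactly mirroring the structure of the proof of Grohe's theorem in Section~\ref{s:hom-count-Grohe}. By Proposition~\ref{p:Pk-coalgebras}, a $\sigma$-structure has tree-width at most $k-1$ precisely when it lies in $\im(U^{\Pk})$, so the class of test structures $C$ in the conclusion of Theorem~\ref{t:general-statement} is exactly the class appearing in the statement of Dvo\v r\'ak's theorem. Thus it suffices to discharge conditions~\ref{logic-frag}--\ref{combinat}.

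For condition~\ref{logic-frag}, I would cite the analogue of Proposition~\ref{pr:logic-EM-iso} for the pebbling comonad: by the result of Abramsky, Dawar and Wang~\cite{Abramsky2017b} identifying Kleisli (equivalently Eilenberg--Moore) isomorphism for $\Pk$ with $\CL^k$-equivalence, we have $A \equiv_{\CL^k} B$ if, and only if, $F^{\Pk^+}(J(A)) \cong F^{\Pk^+}(J(B))$ for all $A,B\in\Si_f$. Condition~\ref{restr-fin} is immediate since $\Pk^+$ visibly sends a finite $\sigma^+$-structure to a finite one (the universe $(\k \times A)^{\leq n}$ is finite when $A$ is, and here one works with the unbounded pebbling comonad whose coalgebras nonetheless have finite carrier when $A$ is finite---more precisely the relevant finiteness is of the structures $C$ being counted, which have finite tree-width, so I would use the version of the argument that stays within finite coalgebras as in Corollary~\ref{cor:combinatorial-EM-f}). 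These two conditions are routine and parallel the tree-depth case.

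The substantive step, exactly as flagged in the Remark following Theorem~\ref{t:general-statement}, is condition~\ref{combinat}: showing that the left adjoint $H\colon \Si^+ \to \Si$, $D \mapsto D^-/{\sim}$, restricts to finite structures of tree-width at most $k-1$. This is the tree-width analogue of Proposition~\ref{pr:adjunction-HJ-td-k}, and it is where I expect the main obstacle to lie. The plan is to take a $k$-pebble forest cover $(V,\leq)$ of $\G_D$ with pebbling function $p\colon V \to \k$ (writing $V = D$), and to build a $k$-pebble forest cover of $\G_C$ where $C = H(D) = D/{\sim}$. As in the tree-depth proof I would define $g([d]) \coloneqq \min\{f(d') \mid d' \in [d]\}$ using the same forest order, and argue that adjacency $[d_1]\adj[d_2]$ in $\G_C$ forces comparability $g([d_1])\up g([d_2])$. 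The extra difficulty over the tree-depth case is that one must also produce a pebbling function $\bar p$ on the quotient satisfying the pebble condition: whenever $\bar v \adj \bar v'$ with $\bar v \leq \bar v'$, we need $\bar p(\bar v) \neq \bar p(\bar w)$ for all $\bar v < \bar w \leq \bar v'$. The natural choice is to set $\bar p([d]) \coloneqq p(\xi(d))$ where $\xi(d)$ realises the minimum above, and then verify the pebble constraint along the (totally ordered) path in the forest between the representatives of two adjacent classes.

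Concretely, I anticipate the crux is a careful bookkeeping argument: collapsing $\sim$-equivalent elements to their forest-minimal representatives must not create a pebble-index collision along any branch. I would handle this by exploiting that each $\sim$-class maps to a totally ordered (connected) subset of the forest, so its minimum $\xi(d)$ dominates the whole class in the branch, and by transporting the original pebbling condition from the chosen witnesses in $D$ through the quotient map, using that any edge in $\G_C$ lifts to an edge in $\G_D$ whose endpoints sit below the corresponding minima. This is the tree-width ``equality elimination'' result; once it is in place, Theorem~\ref{t:general-statement} applies verbatim and yields the stated characterisation, completing the derivation of Dvo\v r\'ak's theorem.
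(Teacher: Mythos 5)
Your overall strategy---discharge the hypotheses of Theorem~\ref{t:general-statement} and read off the result---is indeed the paper's strategy, but your instantiation $\CC^+=\Pk^+$ breaks down at condition~\ref{restr-fin}, and this is a genuine gap, not a routine step. The universe of $\Pk^+(A)$ is $(\k\times A)^+$, the set of \emph{all} non-empty sequences of moves with no bound on their length; it is infinite for every non-empty finite $A$. (The set $(\k\times A)^{\leq n}$ you invoke in your justification is the universe of the \emph{graded} comonad $\Pkn^+$, not of $\Pk^+$.) This matters because the proof of Theorem~\ref{t:general-statement} needs $F^{\CC^+}(J(A))$ and $F^{\CC^+}(J(B))$ to be objects of $\EM_f(\CC^+)$, the category that Corollary~\ref{cor:combinatorial-EM-f} shows to be combinatorial; if $\Pk^+(J(A))$ is infinite, then $F^{\Pk^+}(J(A))$ is not a finite coalgebra and the counting equivalence simply does not apply to it. The paper's proof of Theorem~\ref{thm:Dvorak} is structured precisely to avoid this obstacle: it applies Theorem~\ref{t:general-statement} to the comonads $\Pkn^+$ (which do preserve finiteness) with $\mathcal L=\CL^k_n$, obtaining a counting characterisation of $\equiv_{\CL^k_n}$ by structures admitting a $k$-pebble forest cover of height ${\leq\,}n$, and then recovers the statement for $\CL^k$ by letting $n$ vary, using that $A\equiv_{\CL^k}B$ iff $A\equiv_{\CL^k_n}B$ for all $n$, and that a finite structure has tree-width at most $k-1$ iff its Gaifman graph admits a $k$-pebble forest cover of height ${\leq\,}n$ for some $n$.

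A second, independent problem lies in your plan for condition~\ref{combinat}. Defining the quotient pebbling function by $\bar p([d])\coloneqq p(\xi(d))$, i.e.\ transporting $p$ along the forest-minimal representatives, does not work. Suppose a class $\{u,v\}$ with $u<v$ is collapsed to position $u$, and some $w>v$ is adjacent to $v$ in the original structure. In the quotient, the edge $[u]\adj[w]$ requires $\bar p(z)\neq p(u)$ for \emph{every} position $z$ with $u<z\leq w$; but the original cover only forbids the index $p(v)$ on the segment above $v$ (from $v\adj w$) and the index $p(u)$ on the segment between $u$ and $v$ (from $u\adj v$), so an element above $v$ carrying pebble index $p(u)$ creates a collision. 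This is exactly why the paper proceeds by one-step quotients $\mathbb A/_{u\sim v}$ whose pebbling function $p'$ \emph{swaps} the indices $p(u)$ and $p(v)$ on part of the subtree above $v$, governed by the relation $\sees$ (Lemma~\ref{l:about-sees-relation}), and then iterates (Proposition~\ref{p:td-and-tw-preserved}). You correctly identify where the difficulty lies, but the construction you propose would fail on it.
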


The $k$-variable counting logic $\CL^k$ appearing in Theorem~\ref{thm:Dvorak} is captured by the pebbling comonad $\Pk^+$ on $\Si^+$. In fact, it follows directly from~\cite[Theorem~18]{Abramsky2017b} that, whenever $A,B$ are finite $\sigma$-structures, $F^{\Pk^+}(J(A)) \cong F^{\Pk^+}(J(B))$ if, and only if, $A\equiv_{\CL^k} B$.
However, the comonad $\Pk^+$ does not satisfy item~\ref{restr-fin} in Theorem~\ref{t:general-statement}, because $\Pk^+(A)$ is infinite even when $A\in\Si^+_f$.

To circumvent this problem, we consider the comonads $\Pkn^+$, which do restrict to finite $\sigma^+$-structures. Again by (the proof of) \cite[Theorem~18]{Abramsky2017b}, for any finite $\sigma$-structures $A,B$,
\[
A\equiv_{\CL^k_n} B \ \ \Longleftrightarrow \ \  F^{\Pkn^+}(J(A)) \cong F^{\Pkn^+}(J(B))
\]
where $\CL_n^k$ consists of the formulas that are simultaneously in $\CL_n$ and $\CL^k$. Thus, items~\ref{logic-frag} and~\ref{restr-fin} in Theorem~\ref{t:general-statement} are satisfied for $\CC=\Pkn$, $\CC^+=\Pkn^+$, and $\mathcal{L}=\CL_n^k$.

Assume for a moment that item~\ref{combinat} is also satisfied. Then, combining Theorem~\ref{t:general-statement} with Proposition~\ref{p:Pkn-coalgebras}, we obtain that 
\[
A\equiv_{\CL^k_n} B \ \text{ if, and only if, } \ |\hom_{\Si_f}(C,A)| = |\hom_{\Si_f}(C,B)|
\]
for all finite $\sigma$-structures $C$ whose Gaifman graph $\G_C$ admits a $k$-pebble forest cover of height ${\leq \,} n$. Observe next that (i) ${A\equiv_{\CL^k} B}$ precisely when $A\equiv_{\CL^k_n} B$ for all $n$, and (ii) a finite $\sigma$-structure $C$ has tree-width at most $k-1$ if, and only if, $\G_C$ admits a $k$-pebble forest cover of height ${\leq \,} n$ for some $n$. Therefore, $A\equiv_{\CL^k} B$ if, and only if, 
\[
|\hom_{\Si_f}(C,A)|= |\hom_{\Si_f}(C,B)| 
\]
for all finite $\sigma$-structures $C$ with tree-width at most $k-1$, thus settling Theorem~\ref{thm:Dvorak}.

In the remainder of this section, we prove that item~\ref{combinat} in Theorem~\ref{t:general-statement} is indeed satisfied. As mentioned already, $\Si_f\cap \im(U^{\Pkn})$ consists of the finite $\sigma$-structures that admit a $k$-pebble forest cover of height ${\leq \,}n$, and similarly for $\Si^+_f\cap \im(U^{\Pkn^+})$. To improve readability, let us denote these categories by $\Tkn$ and $\Tkn^+$, respectively.
Note that $J\colon \Si_f \to \Si_f^+$ restricts to a functor $\Tkn \to \Tkn^+$. It remains to show that its left adjoint $H$ restricts to $\Tkn^+ \to \Tkn$.

Recall from Section~\ref{s:prelim-coalgebras} that a forest cover of a $\sigma$-structure $A$ can be identified with a forest order $\leq$ on $A$ that is \emph{compatible}, in the sense that $a\up a'$ whenever $a\adj a'$ in $\G_A$. Suppose that $A$ is equipped with a forest order $\leq$ and let $p\colon A\to \k$ be any function. We define a relation $\sees$ on $A$ as follows: $a\sees a'$ (read as \emph{$a$ sees $a'$}) if $a\up a'$ and 
\[
\min(a,a')<z\leq \max(a,a') \ \Longrightarrow \ p(z)\neq p(\min(a,a'))
\]
for all $z\in A$. With this notation, a $k$-pebble forest cover of $A$ can be identified with a triple $(A,\leq,p)$ where $\leq$ is a forest order on $A$ and $p\colon A\to \k$ is such that $a\adj a'$ implies $a\sees a'$ for all $a,a'\in A$. Just observe that $\leq$ is a compatible forest order since $a\sees a'$ entails $a\up a'$.

Saying that $H$ restricts to $\Tkn^+ \to \Tkn$ means that, whenever a finite $\sigma^+$-structure $A$ admits a $k$-pebble forest cover of height ${\leq \,}n$, then so does the quotient $A^-/{\sim}$ where $A^-$ is the $\sigma$-reduct of $A$ and $\sim$ is the least equivalence relation containing $I^A$. The difficulty in showing this consists in defining the pebbling function on $A^-/{\sim}$. We do this by considering consecutive one-step quotients (inspired by equality elimination from Section~\ref{s:normal-forms}), where at each step a new pair of distinct elements in $I^A$ is identified. As $A$ is finite, this construction terminates after finitely many steps and yields the quotient $A^-/{\sim}$. 

Let $\mathbb A\coloneqq (A,\leq,p)$ be a finite $\sigma^+$-structure together with a $k$-pebble forest cover of height ${\leq \,}n$. Suppose that there is a pair $(u,v) \in I^A$ such that $u \neq v$. Then $u\up v$, and we can assume without loss of generality that $u < v$. The one-step quotient $\mathbb A/_{u \sim v} \coloneqq (A', \leq', p')$ of $\mathbb A$ is defined as follows:

\begin{itemize}
    \item the carrier of $\mathbb A/_{u \sim v}$ is the set $A' \coloneqq A \setminus \{v\}$;
    \item the forest order $\leq'$ is the restriction of $\leq$ to $A'$;
    \item for each $R\in\sigma\cup\{I\}$, the relation $R^{A'}$ is obtained by replacing all occurrences of $v$ by $u$ in each tuple of $R^A$. 
    \item the pebbling function $p'\colon A' \to \k$ is defined by
        \[ w \mapsto
    \begin{cases}
        p(u) & \text{if } v \leq w,\ p(w) = p(v),\ \text{and } u \not\sees w_{\min} \\
        p(v) & \text{if } v \leq w,\ p(w) = p(u),\ \text{and } v \sees w_{\min} \\
        p(w) & \text{otherwise}
    \end{cases}
    \]
    where $w_{\min} \coloneqq \min \{w' \in A \mid v < w' \leq w \text{ and } p(w') = p(w)\}$.
\end{itemize}
Clearly, $\leq'$ is a forest order of height ${\leq \,}n$ on $A'$. Further, $a\adj a'$ implies $a\sees a'$ for all $a,a'\in A'$. This follows from the next lemma, which is proved by a careful case analysis. 
\begin{lemma}\label{l:about-sees-relation}
    Let $w, w' \in A \setminus \{v\}$. Then the following hold:
    \begin{enumerate}
        \item $v \sees w$ in $\mathbb A$ implies $u \sees w$ in $\mathbb A/_{u \sim v}$;
        \item $w \sees w'$ in $\mathbb A$ implies $w \sees w'$ in $\mathbb A/_{u \sim v}$.
    \end{enumerate}
\end{lemma}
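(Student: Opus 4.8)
The plan is to exploit two standing facts about the ambient cover $\mathbb A=(A,\le,p)$. First, since $(u,v)\in I^A$ with $u\neq v$, the pair $u,v$ is an edge of the Gaifman graph $\G_A$; as $\mathbb A$ is a valid $k$-pebble forest cover we therefore have $u\sees v$, which, together with $u<v$, means $p(z)\neq p(u)$ for all $z$ with $u<z\le v$ (in particular $p(u)\neq p(v)$). Second, I would record the shape of $p'$: it agrees with $p$ off the up-set $\{z\mid v\le z\}$, and on that up-set it only ever \emph{exchanges} the two colours $p(u)$ and $p(v)$, leaving every other value untouched. Moreover the decision to swap $z$ depends only on $w_{\min}(z)$ — the first element of colour $p(z)$ strictly above $v$ on the branch to $z$ — so that $p'(z)=p'(w_{\min}(z))$. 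All the case analyses below run along these lines.

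For part (1) I would split on the position of $w$ relative to $v$. Since $v\sees w$ forces $v\up w$ and $w\neq v$, either $v<w$ or $w<v$. In the sub-cases $w<v$ (so $u,w$ lie below $v$ and are comparable) the relevant interval stays below $v$, where $p'=p$, and the conclusion is immediate from $u\sees v$ and $v\sees w$. The main sub-case is $v<w$, where I would verify $u\sees' w$ by scanning $\{z\mid u<z\le w\}\setminus\{v\}$: below $v$ one has $p'(z)=p(z)\neq p(u)$ by $u\sees v$; above $v$, any $z$ with $p(z)=p(u)$ gets swapped to $p(v)$ because its swap-condition $v\sees w_{\min}(z)$ is implied by $v\sees w$ (as $w_{\min}(z)\le w$), while colour $p(v)$ never recurs in $(v,w]$ again by $v\sees w$; hence $p'(z)\neq p(u)=p'(u)$ throughout.

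For part (2), assume $w\le w'$. If $v\not\le w'$ the whole interval avoids the up-set of $v$ and misses $v$ itself, so $p'=p$ there and the statement is immediate. When $v\le w'$ I would split according to whether $w$ lies above $v$ or below it; the case $w<v\le w'$ is dispatched by cutting the interval at the deleted point $v$ and combining the two previous analyses, using $u\sees v$ and $w\sees w'$ to exclude the problematic colours (indeed $p(w)=p(v)$ is impossible because $v\in(w,w']$, and $p(w)=p(u)$ forces $u=w$). The genuinely delicate case is $v\le w\le w'$, where $p'$ may recolour both endpoints; here a spurious collision $p'(z)=p'(w)$ with $p(z)\neq p(w)$ can arise only when $\{p(w),p(z)\}=\{p(u),p(v)\}$ and exactly one of $w,z$ is swapped.

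This synchronisation is the main obstacle, and the point is that both swap-conditions reduce to the \emph{same} comparison. Writing $a$ for the first colour-$p(u)$ element and $b$ for the first colour-$p(v)$ element strictly above $v$ on the common branch through $w$ and $z$, one shows — using $u\sees v$ to push the search for colour $p(u)$ above $v$, and the minimality defining $w_{\min}$ — that $w$ is swapped iff $a<b$ and that $z$ is swapped iff $a<b$ as well. Hence $w$ and $z$ are recoloured in lockstep, no new collision appears, and $w\sees' w'$ follows; the symmetric colouring $p(w)=p(v),\,p(z)=p(u)$ is handled by interchanging the roles of $u$ and $v$. Assembling the cases yields both implications of the lemma.
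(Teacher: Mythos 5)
Your overall strategy is the same as the paper's: the same preliminary facts (that $u \sees v$ because $(u,v)\in I^A$, and that $p'$ agrees with $p$ off the up-set of $v$ and only exchanges the colours $p(u)$ and $p(v)$ on it), the same case split in part (1), and, in the delicate case $v<w\leq w'$ of part (2), your ``swapped iff $a<b$'' criterion is precisely the paper's dichotomy between the minima $\o{u}$ of $U$ and $\o{v}$ of $V$ (everything of colour $p(u)$ or $p(v)$ in $(v,w']$ is recoloured when $\o{u}<\o{v}$, nothing is when $\o{v}<\o{u}$), repackaged as a single statement. Those parts of your proof are correct.

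The gap is in the case $w<v<w'$. You claim that $u\sees v$ and $w\sees w'$ ``exclude the problematic colours'', on the grounds that $p(w)=p(v)$ is impossible and that $p(w)=p(u)$ forces $u=w$. The first point is fine, and the implication $p(w)=p(u)\Rightarrow u=w$ is true, but it is not an exclusion: $u=w$ is a perfectly realisable configuration (it occurs whenever $u$ itself sees some $w'$ above $v$), and in that configuration you still owe a proof that $u\sees w'$ holds in $\mathbb A/_{u\sim v}$. Part (1) does not supply it, since there you assume $v\sees w'$, whereas here you only know $u\sees w'$, and $v\sees w'$ can genuinely fail (e.g.\ a chain $u<v<c<w'$ with $p(c)=p(v)$). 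The danger to rule out is that some $z\in(v,w']$ with $p(z)=p(v)$ gets recoloured to $p(u)=p'(w)$. The paper closes exactly this subcase by contradiction: such a $z$ requires $u\not\sees z_{\min}$, which together with $u\sees v$ produces an $x$ with $v<x\leq z_{\min}\leq w'$ and $p(x)=p(u)=p(w)$, contradicting $w\sees w'$ in $\mathbb A$. Your own lockstep criterion would also do the job---$w\sees w'$ with $p(w)=p(u)$ means there is no colour-$p(u)$ element in $(v,w']$, so $a$ does not exist below $b$ and no colour-$p(v)$ element is swapped---but as written your proof treats this subcase as vacuous rather than proving it, so the argument is incomplete at this point.
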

\begin{proof}
    (1) Suppose $w\in A\setminus \{v\}$ and $v\sees w$ in $\mathbb A$. There are two possibilities, either $v < w$ or $w < v$. In both cases we have that $w \up u$ and so we only need to check the condition on the pebbling function $p'$. Assume that $v < w$. As $(u,v) \in I^A$, we know that $u\adj v$ and so $u \sees v$ in $\mathbb A$. Because $p'(z)=p(z)$ whenever $v\not\leq z$, to show that $u \sees w$ in $\mathbb A/_{u \sim v}$ we only need to check that
    \begin{equation}\label{eq:pebbling-cond-A'}
    \forall z\in A, \ \ v < z \leq w \ \Longrightarrow \ p'(z) \not= p'(u).
    \end{equation}
    Observe that $p'(u) = p(u)$. Also, by definition of $p'$, for any $z$ such that $v < z \leq w$ the equality $p'(z) = p(u)$ is satisfied if, and only if, either $p(z) = p(v)$ and $u \not\sees z_{\min}$, or $p(z) = p(u)$ and $v \not\sees z_{\min}$. However, none of these two scenarios can occur because $v \sees w$ implies $p(z) \neq p(v)$ and $v \sees z_{\min}$. Hence, the condition in~\eqref{eq:pebbling-cond-A'} is satisfied.

    On the other hand, assume that $w < v$. Then $u\up w$. If $u < w$ then $u \sees v$ implies $u \sees w$ in $\mathbb A$. Similarly, if $w \leq u$ then $v \sees w$ entails $u \sees w$ in $\mathbb A$. Therefore, $u \sees w$ in $\mathbb A/_{u \sim v}$ because $p(z) = p'(z)$ for every $z$ such that $v \not\leq z$.

    (2) Let $w,w'\in A\setminus \{v\}$ be such that $w\sees w'$ in $\mathbb A$. We can assume without loss of generality that $w \leq w'$. If $v \not\leq w'$ then $p(z) = p'(z)$ for every $z$ such that $w \leq z \leq w'$, and so $w\sees w'$ in $\mathbb A/_{u \sim v}$. 
    Thus, suppose $v \leq w'$. It follows that $v\up w$ and so either $v < w$ or $w < v$. 
    
    First we assume $v<w \leq w'$. Note that, for any $z$ such that $w<z\leq w'$, if $p'(z)=p'(w)$ and $p(z)\notin \{p(u),p(v)\}$, then by definition of $p'$ we must have $p(z)=p(w)$, contradicting the fact that $w\sees w'$ in $\mathbb{A}$. So, we only have to take care of the case where either $p(z)=p(u)$ or $p(z)=p(v)$. Set
    \begin{gather*}
    U\coloneqq \{x\in A\mid v < x \leq w' \text{ and } p(x)=p(u)\} \\
    V\coloneqq \{x\in A\mid v < x \leq w' \text{ and } p(x)=p(v)\}.
    \end{gather*}
    Suppose $U$ and $V$ are non-empty and so they admit minimal elements $\o{u}$ and $\o{v}$, respectively. Then, either $\o{u}<\o{v}$ or $\o{v}<\o{u}$. If $\o{u}<\o{v}$ then any $z\geq \o{u}$ such that $p(z)=p(u)$ satisfies $v\sees z_{\min}$ (because $z_{\min}=\o{u}$) and so $p'(z)=p(v)$, and similarly any $z\geq \o{u}$ such that $p(z)=p(v)$ satisfies $u\not\sees z_{\min}$ (because $z_{\min}=\o{v}$) and so $p'(z)=p(u)$.
  On the other hand, if $\o{v}<\o{u}$ then $p'(z)=p(z)$ for any $z\geq \o{v}$. Either way, for any $z$ between $w$ and $w'$, $p'(z)=p'(w)$ implies $p(z)=p(w)$. Hence, $w\sees w'$ in $\mathbb A$ entails $w \sees w'$ in $\mathbb A/_{u \sim v}$. A similar reasoning applies, mutatis mutandis, when $U=\emptyset$ or $V=\emptyset$.

Suppose next that $w < v < w'$. Then $p'(w)=p(w)$ and, as $w \sees w'$ in $\mathbb A$, it follows that $p(w) \not= p(v)$. If $p(w) \not= p(u)$ then we conclude that $w \sees w'$ in $\mathbb A/_{u \sim v}$ because $p'(z)=p'(w)$ entails $p(z)=p(w)$ for any element $z$ between $w$ and $w'$. 
To conclude the proof, suppose towards a contradiction that $p(w) = p(u)$ and there exists a $z\in A\setminus\{v\}$ such that $w < z \leq w'$ and $p'(z) = p'(w)$. Then, $p'(z)=p(u)$. Since $w \sees w'$ in $\mathbb A$, we get $p(z) \not= p(u)$ and so $v< z$, $p(z)=p(v)$ and $u \not\sees z_{\min}$ in $\mathbb A$.
As $u \sees v$ in $\mathbb A$, there must exist an $x$ such that
    \[ w < v < x \leq z_{\min} \leq z \leq w'\]
    and $p(x) = p(u)$, contradicting the fact that $w \sees w'$ in $\mathbb A$.
\end{proof}

Applying this construction iteratively, we obtain the following analogue of Proposition~\ref{pr:adjunction-HJ-td-k}, which concludes the proof of Dvo\v r\'ak's theorem.
\begin{proposition}\label{p:td-and-tw-preserved}
   The functor $H$ restricts to $\Tkn^+ \to \Tkn$.
\end{proposition}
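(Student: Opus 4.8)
The plan is to prove Proposition~\ref{p:td-and-tw-preserved} by combining the one-step quotient construction with an induction on the size of the equivalence classes being collapsed. The statement asserts that if a finite $\sigma^+$-structure $A$ admits a $k$-pebble forest cover of height ${\leq}\,n$, then so does $H(A) = A^-/{\sim}$, where $\sim$ is the least equivalence relation containing $I^A$. Since $A$ is finite, the equivalence relation $\sim$ is generated by finitely many pairs in $I^A$, and collapsing all of them can be achieved by iterating the one-step quotient $\mathbb A \mapsto \mathbb A/_{u\sim v}$, each time identifying one pair $(u,v)\in I^A$ with $u\neq v$ (and $u<v$, without loss of generality, using $u\up v$ which follows from $u\adj v$ in $\G_A$).

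First I would verify that each one-step quotient $\mathbb A/_{u\sim v} = (A',\leq',p')$ is again a legitimate $k$-pebble forest cover of height ${\leq}\,n$. The carrier shrinks by removing $v$, the order $\leq'$ is the restriction of $\leq$ (which is clearly still a forest order of height ${\leq}\,n$), and the relations are obtained by substituting $u$ for $v$ in every tuple. The crux is checking the pebble condition: for all $a,a'\in A'$, $a\adj a'$ in $\G_{A'}$ must imply $a\sees a'$ with respect to $(\leq',p')$. Since any adjacency in $\G_{A'}$ arises from an adjacency in $\G_A$ after substituting $u$ for $v$, this reduces precisely to the two implications recorded in Lemma~\ref{l:about-sees-relation}: the first handles adjacencies that originally involved $v$ (now routed through $u$), and the second handles adjacencies among elements distinct from $v$. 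Thus Lemma~\ref{l:about-sees-relation} is exactly what guarantees that the pebbling function $p'$, despite its delicate three-case definition, preserves the defining property of a $k$-pebble forest cover.

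Next I would argue that iterating this construction terminates and yields $H(A)$. At each step the carrier strictly decreases in size, so after finitely many steps no pair $(u,v)\in I^{A^{(i)}}$ with $u\neq v$ remains; the resulting structure has its $I$-relation contained in the diagonal, and identifying out the pairs of $I^A$ one at a time realises exactly the quotient by the generated equivalence relation $\sim$. At this point the $\sigma$-reduct of the terminal structure is $A^-/{\sim} = H(A)$, and the accumulated forest cover data $(\leq^{(\text{fin})}, p^{(\text{fin})})$ exhibits a $k$-pebble forest cover of $\G_{H(A)}$ of height ${\leq}\,n$. Finally, the reverse restriction is trivial: $J\colon \Si_f\to\Si_f^+$ sends a structure of tree-width ${\leq}\,k-1$ to one with the identity interpretation of $I$, which adds no Gaifman edges and hence preserves the forest cover, as already noted before the statement.

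The main obstacle is entirely contained in Lemma~\ref{l:about-sees-relation}, which the surrounding text has already stated and proved; the proposition itself is then a clean bookkeeping argument. The subtle point deserving care is that the pebbling function $p'$ must be defined so that collapsing $u$ and $v$ does not create a ``pebble clash'' along any branch—that is, two elements with the same pebble index appearing between comparable adjacent vertices. The case split in the definition of $p'$ (swapping the roles of $p(u)$ and $p(v)$ on the part of the forest above $v$, governed by the auxiliary minimum $w_{\min}$ and the $\sees$ relation) is engineered precisely to reroute pebble indices so that the $\sees$ conditions are maintained; verifying this is what Lemma~\ref{l:about-sees-relation} accomplishes, and invoking it makes the induction go through.
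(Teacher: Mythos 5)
Your proof is correct and takes essentially the same route as the paper's: iterate the one-step quotient construction, invoking Lemma~\ref{l:about-sees-relation} at each step to see that the $k$-pebble forest cover of height ${\leq\,}n$ is preserved, and use finiteness to conclude that the process terminates in a structure whose $\sigma$-reduct is $H(A)$. Your explanation of how the two parts of the lemma cover the two kinds of adjacencies (those rerouted through $u$ and those among elements distinct from $v$) is precisely the bookkeeping the paper's terser proof leaves implicit.
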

\begin{proof}
Let $\mathbb A_0=(A,\leq,p)$ be a finite $\sigma^+$-structure together with a $k$-pebble forest cover of height ${\leq \,}n$. Then we have a sequence $\mathbb A_0,\mathbb A_1,\mathbb A_2,\dots$ such that $\mathbb A_{i+1}$ is defined as $\mathbb A_i/_{u\sim v}$ for some pair of distinct elements $(u,v)\in I^{A_i}$, where $A_i$ is the $\sigma^+$-structure underlying $\mathbb A_i$. This is iterated as long as such a pair $(u,v)$ can be found. Let $\mathbb A_n$ be the last element of the sequence. Note that, by construction, the $\sigma$-reduct of $A_n$ is isomorphic to $H(A)$. Because $A_n$ admits a $k$-pebble forest cover of height ${\leq \,}n$ witnessed by $\mathbb A_n$---by repeatedly applying Lemma~\ref{l:about-sees-relation}---so does $H(A)$.
\end{proof}

\begin{remark}\label{rm:counting-tree-width-subclasses}
An advantage of the categorical approach to homomorphism counting is that it specialises to any full subcategory $\A$ of $\Si$, provided the game comonad in question restricts to $\A$. For example, let $\sigma$ consist of a single binary relation symbol, so that $\Si$ is the category of (directed) graphs and graph homomorphisms. As the pebbling comonad $\Pk$ restricts to the full subcategory of $\Si$ defined by undirected graphs without loops, we obtain an analogue of Theorem~\ref{thm:Dvorak} where $A,B,C$ range over the class of finite undirected graphs without loops. A similar observation applies to the {\ef} comonad $\En$, yielding a variant of Theorem~\ref{thm:Grohe} for undirected graphs without loops.
\end{remark}

\subsection{A proof of Theorem~\ref{t:general-statement}}\label{s:proof-of-general-statement}

We start by showing that the functor $H\colon \Si^+\to \Si$ sending $D$ to $D^-/{\sim}$, where $D^-$ is the $\sigma$-reduct of $D$ and $\sim$ is the equivalence relation generated by $I^D$, is left adjoint to $J$.

\begin{lemma}\label{l:adjunction-HJ}
$H\colon \Si^+\to \Si$ is left adjoint to the embedding $J$.
\end{lemma}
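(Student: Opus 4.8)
The plan is to exhibit, for every $D\in\Si^+$ and $A\in\Si$, a bijection
\[
\Phi_{D,A}\colon \hom_{\Si}(H(D),A) \to \hom_{\Si^+}(D,J(A))
\]
natural in both arguments. Write $q\colon D^-\to H(D)=D^-/{\sim}$ for the canonical quotient map, and recall from the description of quotients in $\Si$ (following Definition~\ref{def:weak-f-s}) that $H(D)$ carries the smallest $\sigma$-structure making $q$ a homomorphism; concretely, $R^{H(D)}$ is the $q$-image of $R^{D^-}=R^D$, for each $R\in\sigma$.

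The first step is the key observation that a function $\phi\colon D\to A$ underlies a $\sigma^+$-homomorphism $D\to J(A)$ if, and only if, (i) it underlies a $\sigma$-homomorphism $D^-\to A$, and (ii) $(d,d')\in I^D$ implies $\phi(d)=\phi(d')$. Indeed, since the $\sigma$-reduct of $J(A)$ is $A$ and $I^{J(A)}$ is the diagonal, preservation of the $\sigma$-relations is exactly (i) while preservation of $I$ is exactly (ii). Because $\sim$ is the equivalence relation generated by $I^D$, condition (ii) is equivalent to $\phi$ being constant on $\sim$-classes.

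Given this, I define $\Phi_{D,A}$ by precomposition with $q$: to $\psi\in\hom_{\Si}(H(D),A)$ it assigns $\psi\circ q\colon D\to A$. This composite is a $\sigma$-homomorphism $D^-\to A$ (as $q$ and $\psi$ are) and identifies $I^D$-related elements (as $q$ does), so by the observation it lies in $\hom_{\Si^+}(D,J(A))$. For the inverse, given $\phi\in\hom_{\Si^+}(D,J(A))$, conditions (i)--(ii) say precisely that $\phi$ is a $\sigma$-homomorphism $D^-\to A$ that is constant on $\sim$-classes, hence factors as $\phi=\bar\phi\circ q$ for a unique function $\bar\phi\colon H(D)\to A$; this $\bar\phi$ is automatically a $\sigma$-homomorphism because $R^{H(D)}$ is the $q$-image of $R^D$. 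Since $q$ is surjective, $\psi\mapsto\psi\circ q$ and $\phi\mapsto\bar\phi$ are mutually inverse.

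Finally, naturality in $A$ follows because, for $h\colon A\to A'$, the maps $J(h)$ and $h$ have the same underlying function, so postcomposing either before or after applying $\Phi$ yields the same result; naturality in $D$ follows from the identity $H(f)\circ q_D=q_{D'}\circ f$, which holds since $f$ and its reduct share an underlying function, this being exactly how $H$ acts on a morphism $f\colon D\to D'$. The only point requiring genuine care---and the crux of the argument---is the use of the quotient structure in the third step: one must invoke that $H(D)$ carries precisely the image structure, so that the set-theoretic factorisation $\bar\phi$ is a $\sigma$-homomorphism with no further relations to verify. Everything else is a routine unwinding of definitions.
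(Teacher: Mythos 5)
Your proof is correct and follows essentially the same route as the paper's: both rest on the observation that a $\sigma^+$-homomorphism $D\to J(A)$ is exactly a $\sigma$-homomorphism $D^-\to A$ that is constant on $\sim$-classes, both factor such a map through the quotient $q\colon D^-\twoheadrightarrow H(D)$, and both use the image structure on $H(D)$ together with surjectivity of $q$ to conclude that the factorisation is a $\sigma$-homomorphism and is unique. The only difference is packaging: you exhibit the natural hom-set bijection and check naturality by hand, whereas the paper verifies the universal property of the unit $e_D\colon D\to JH(D)$, which makes the naturality checks unnecessary.
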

\begin{proof}
It suffices to show that for every $D\in\Si^+$, the unique $\sigma^+$-homomorphism $e_D \colon D\to JH(D)$ extending the quotient map $D^-\twoheadrightarrow H(D)$ satisfies the following universal property: For every $\sigma^+$-homomorphism $j\colon D\to J(C)$, there exists a unique $\sigma$-homomorphism $h\colon H(D)\to C$ making the following diagram commute.
\[\begin{tikzcd}
D \arrow{r}{e_D} \arrow{dr}[swap]{j} &JH(D) \arrow{d}{J(h)} \\
{} & J(C)
\end{tikzcd}\]

Pick an arbitrary $\sigma^+$-homomorphism $j\colon D\to J(C)$. Since $j$ preserves the relation $I$, $d_1{\sim} d_2$ entails $j(d_1)=j(d_2)$ for all $d_1,d_2\in D$. Hence, the following map is well defined 
\[
h\colon H(D)\to C, \ \ h([d])\coloneqq j(d)
\] 
where $[d]$ denotes the $\sim$-equivalence class of $d$. Moreover, $h$ is a $\sigma$-homomorphism. Just observe that, for any relation symbol $R\in \sigma$ of arity $n$, if $([d_1],\ldots,[d_n])\in R^{H(D)}$ then there are $d'_1\sim d_1,\ldots, d'_n\sim d_n$ such that $(d'_1,\ldots, d'_n)\in R^{D}$. Thus, $(h([d_1]),\ldots,h([d_n]))=(j(d'_1),\ldots, j(d'_n))\in R^{C}$. Further, $j=J(h)\circ e_D$ by construction. To conclude, note that $h$ is unique with this property because $e_D$ is surjective.
\end{proof}
\begin{remark}
Since both $J$ and $H$ send finite structures to finite structures, they restrict to an adjunction $H\dashv J\colon \Si_f\to\Si_f^+$.
\end{remark}

Before proceeding to the proof of Theorem~\ref{t:general-statement}, we make the following easy observation.
\begin{lemma}\label{l:I-quot}
Assume that $J$ and $H$ restrict to full subcategories $\A \subseteq \Si_f$ and $\A^+ \subseteq \Si_f^+$. Then, for any $A,B\in\Si_f$, the following are equivalent:
\begin{enumerate}
\item\label{hom-bij-sigma} $\hom_{\Si_f}(C,A)\cong \hom_{\Si_f}(C,B) \ \ \forall C\in \A$;
\item\label{hom-bij-sigmap} $\hom_{\Si_f^+}(C',J(A))\cong\hom_{\Si_f^+}(C',J(B)) \ \ \forall C'\in\A^+$.
\end{enumerate}
\end{lemma}
\begin{proof}
Assume item~\ref{hom-bij-sigma} holds. For any $C'\in \A^+$,
\begin{align*}
\hom_{\Si_f^+}(C',J(A)) &\cong \hom_{\Si_f}(H(C'),A) \tag{$H\dashv J$}  \\
& \cong \hom_{\Si_f}(H(C'),B) \tag{$H(C')\in \A$} \\
&\cong \hom_{\Si_f^+}(C',J(B)). \tag{$H\dashv J$}
\end{align*}

 Conversely, suppose item~\ref{hom-bij-sigmap} holds. Then, for any $C\in\A$, 
\begin{align*}
\hom_{\Si_f}(C,A) &\cong \hom_{\Si_f^+}(J(C),J(A)) \tag{$J$ embedding}  \\
& \cong \hom_{\Si_f^+}(J(C),J(B)) \tag{$J(C)\in \A^+$} \\
&\cong \hom_{\Si_f}(C,B). \tag{$J$ embedding}
\end{align*}
This concludes the proof.
\end{proof}

\begin{proof}[Proof of Theorem~\ref{t:general-statement}]
For convenience of notation, let $\EM_f^+\coloneqq \EM_f(\CC^+)$.
Combining item~\ref{logic-frag} with Corollary~\ref{cor:combinatorial-EM-f} applied to the comonad $\CC^+$, we see that, for all $A,B\in\Si_f$, $A\equiv_{\mathcal L} B$ if, and only if,
\begin{equation}\label{eq:hom-bij-co-free-gen}
|\hom_{\EM_f^+}(D,F^{\CC^+}(J(A)))|=|\hom_{\EM_f^+}(D,F^{\CC^+}(J(B)))| 
\end{equation}
for all finite coalgebras $D\in\EM_f^+$. Note that here we used the fact that, by item~\ref{restr-fin}, $F^{\CC^+}(D')$ is a finite coalgebra whenever $D'$ is a finite $\sigma^+$-structure. In view of the adjunction $U^{\CC^+}\dashv F^{\CC^+}$, the condition in equation~\eqref{eq:hom-bij-co-free-gen} is equivalent to
\[
|\hom_{\Si^+_f}(U^{\CC^+}(D),J(A))|=|\hom_{\Si^+_f}(U^{\CC^+}(D),J(B))| 
\]
for all $D\in\EM_f^+$. Finally, by item~\ref{combinat}, an application of Lemma~\ref{l:I-quot} with $\A\coloneqq \Si_f \cap \im(U^\CC)$ and $\A^+\coloneqq \Si_f^+ \cap \im(U^{\CC^+})$ yields the desired statement.
\end{proof}

\section{Homomorphism Counting in Modal Logic}\label{s:applications-modal}

In this section we prove a new homomorphism counting result for (multi-)modal logic. This is derived by applying the categorical framework from Section~\ref{s:categorical-Lovasz} to a game comonad for modal logic defined on the category of pointed Kripke structures. Since there is no equality in the logic, this time we can dispense with the extra relation $I$. This considerably simplifies the proofs, compared to the previous section.

Let $\sigma$ be a signature consisting of relation symbols of arity at most $2$. Each unary relation symbol $P$ yields a propositional variable $p$, and each binary relation symbol $R_{\alpha}$ yields modalities $\Box_{\alpha}$ and $\Diamond_{\alpha}$.
We can think of a $\sigma$-structure $A$ as a Kripke structure for this multi-modal logic, where $P^A$ gives the valuation for the propositional variable $p$, and $R_{\alpha}^A$ gives the accessibility relation for the modalities $\Box_{\alpha}$ and $\Diamond_{\alpha}$.

Each formula $\phi$ in this modal logic admits a translation into a first-order formula $\tr{\phi}_x$ in one free variable $x$; this is known as the \emph{standard translation}, see e.g.~\cite[\S 2.4]{BdRV2001}. We let $\tr{p}_x\coloneqq P(x)$ and let $\tr{\ }_x$ commute with Boolean connectives. Further, set $\tr{\Box_{\alpha}\phi}_x\coloneqq \forall y. \, R_{\alpha}(x,y) \to \tr{\phi}_y$ and $\tr{\Diamond_{\alpha}\phi}_x\coloneqq \exists y. \, R_{\alpha}(x,y) \wedge \tr{\phi}_y$ where $y$ is a fresh variable. Consider a \emph{pointed Kripke structure}, that is a pair $(A,a)$ where $A$ is a $\sigma$-structure and $a\in A$. Then $A,a\models \phi$ according to Kripke semantics if, and only if, $A\models \tr{\phi}_x [a/x]$ in the standard model-theoretic sense.

Henceforth, we assume all Kripke structures under consideration are pointed. Let $\kri$ be the category of Kripke structures and $\sigma$-homomorphisms preserving the distinguished element.
For every natural number $k$, there is a \emph{modal comonad} $\Mk$ on $\kri$, see~\cite{AbramskyShah2018}. For any Kripke structure $(A,a)$, the carrier of $\Mk(A,a)$ is the set of all paths of length ${\leq \,}k$ starting from $a$:
\[ a \xrightarrow{R_1} a_1 \xrightarrow{R_2} a_2 \to \cdots \xrightarrow{R_n} a_n \]
where $R_1, \dots, R_n$ are binary relations from $\sigma$.  
The distinguished element of the Kripke structure $\Mk(A,a)$ is the trivial path $(a)$ of length~$0$. If $P\in\sigma$ is unary then $P^{\Mk(A,a)}$ is defined as the set of paths $p$ such that the last element $a_n$ of $p$ belongs to $P^{A}$. For a binary relation symbol $R\in\sigma$, let $R^{\Mk(A,a)}$ be the set of pairs of paths $(p,p')$ such that $p'$ is obtained by extending $p$ by one step along $R$. The morphism $\epsilon_{(A,a)}\colon \Mk(A,a) \to (A,a)$ sends a path to its last element; for the definition of the coextension operation, see~\cite[\S 2.3]{AbramskyShah2018}.

The comonad $\Mk$ captures a well known combinatorial parameter of Kripke structures, as we now recall. Let us say that a Kripke structure $(A,a)$ is \emph{rooted} if, for any $a'\in A$, there is a path from $a$ to $a'$. If we further require that this path be unique, then $(A,a)$ is a \emph{synchronization tree}. A synchronization tree $(A,a)$ has \emph{height ${\leq \,}k$} if, for each $a'\in A$, the length of the unique path from $a$ to $a'$ is at most $k$. For a proof of the following result, see \cite[Proposition 6.6]{AbramskyShah-ext}.

\begin{proposition}\label{p:Mk-structure}
    A rooted Kripke structure $(A,a)$ admits a coalgebra structure $(A,a) \to \Mk(A,a)$ if, and only if, $(A,a)$ is a synchronization tree of height ${\leq \,}k$.
\end{proposition}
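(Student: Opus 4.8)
The plan is to characterise coalgebra structures $(A,a)\to \Mk(A,a)$ concretely and show they correspond exactly to synchronization-tree structure of height ${\leq \,}k$. A coalgebra structure $\alpha\colon (A,a)\to\Mk(A,a)$ assigns to each element $a'\in A$ a path $\alpha(a')$ of length ${\leq \,}k$ starting at the root $a$, and the counit law $\epsilon_{(A,a)}\circ\alpha=\id$ forces the last element of the path $\alpha(a')$ to be $a'$ itself. So the data of a coalgebra is precisely a choice, for each $a'$, of a distinguished path from $a$ to $a'$ of length ${\leq \,}k$. First I would note that the existence of such a path for every $a'$ immediately forces $(A,a)$ to be rooted (indeed it is assumed rooted), and bounds the length of these distinguished paths by $k$.

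Next I would exploit the fact that $\alpha$ is a $\sigma$-homomorphism preserving the distinguished point, together with the comonad identities, to show that the assignment $a'\mapsto\alpha(a')$ is coherent: the path $\alpha(a')$ must pass through the images under $\alpha$ of its intermediate vertices. Concretely, the comultiplication law (the second coalgebra diagram) says $\id_{\Mk(A,a)}^*\circ\alpha=\Mk(\alpha)\circ\alpha$, and unwinding the definition of the coextension on $\Mk$ forces each prefix of the path $\alpha(a')$ to agree with $\alpha$ applied to the corresponding intermediate vertex. This is the standard mechanism by which coalgebra structures for these game comonads encode a tree order. From this coherence one deduces that the map sending $a'$ to the length of $\alpha(a')$ makes the accessibility relations point strictly ``downward'' along the chosen paths, so that the chosen paths are forced to be \emph{unique}, giving a genuine tree order rather than merely a rooted structure. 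The homomorphism condition on $\alpha$ with respect to each binary $R\in\sigma$ is what guarantees that $\alpha(a')$ extends $\alpha(a'')$ by one $R$-step whenever $a''\xrightarrow{R}a'$, pinning the tree down.

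For the converse, suppose $(A,a)$ is a synchronization tree of height ${\leq \,}k$. Then for each $a'$ there is a unique path $\pi(a')$ from the root $a$ to $a'$, of length ${\leq \,}k$, so I would simply define $\alpha(a')\coloneqq\pi(a')$. I would then verify directly that $\alpha$ is a $\sigma$-homomorphism preserving the distinguished point (using that $P^{\Mk(A,a)}$ tracks the last element and $R^{\Mk(A,a)}$ tracks one-step extensions, both of which are respected by the unique-path assignment) and that it satisfies the two coalgebra laws. The counit law is immediate since $\pi(a')$ ends at $a'$; the comultiplication law follows because the unique path to an intermediate vertex is the corresponding prefix of $\pi(a')$, by uniqueness of paths in a tree.

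The main obstacle I expect is the comultiplication (coassociativity) verification in the forward direction: one must carefully unwind the definition of the coextension operation on $\Mk$ (deferred in the excerpt to~\cite[\S 2.3]{AbramskyShah2018}) to see that the coalgebra laws genuinely force \emph{uniqueness} of paths, i.e.\ that a rooted structure carrying a coalgebra must be a tree and not merely rooted. This is precisely the step distinguishing ``synchronization tree'' from ``rooted,'' and it is where the homomorphism and comonad conditions interact most delicately. Since the statement is quoted as a known result with a reference, I would in practice appeal to \cite[Proposition 6.6]{AbramskyShah-ext} for this verification rather than reproducing the full path-uniqueness argument.
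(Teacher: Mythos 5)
Your proposal is correct, and it is worth noting that the paper itself offers no proof of this proposition: it is stated as a known result with the reader referred to \cite[Proposition~6.6]{AbramskyShah-ext}, which is exactly the fallback you adopt in your last paragraph. So your concluding appeal to that reference coincides with the paper's approach, while your direct sketch is genuine extra content that the paper does not contain. The sketch itself is sound, but with one refinement worth making: in the forward direction, uniqueness of paths is forced already by point-preservation, the counit law, and the homomorphism condition, without invoking comultiplication at all. Indeed, $\alpha(a)=(a)$ by point-preservation, and along any path $a=b_0\xrightarrow{R_1}b_1\to\cdots\xrightarrow{R_m}b_m=a'$ the homomorphism condition together with the counit law forces, by induction on $i$, that $\alpha(b_i)$ is exactly the path $a\xrightarrow{R_1}b_1\to\cdots\xrightarrow{R_i}b_i$; hence every path from $a$ to $a'$ coincides with $\alpha(a')$, so it is unique and of length at most $k$. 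The comultiplication law plays its real role in the \emph{converse} direction, where one must check that the unique-path assignment $\alpha(a')\coloneqq\pi(a')$ is coassociative; this holds precisely because, by uniqueness, each prefix of $\pi(a')$ is the unique path to the corresponding intermediate vertex---the coherence property you describe. So the ``main obstacle'' you flag sits on the other side of the equivalence from where you place it: coassociativity is a hypothesis (unused) in the forward direction and a verification obligation (easily discharged) in the backward one. With that role-reversal noted, your argument goes through as stated and constitutes a complete elementary proof, making the appeal to the reference unnecessary.
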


On the other hand, the modal comonad $\Mk$ characterises logical equivalence with respect to an appropriate modal logic. Let us extend the multi-modal logic introduced above by adding, for every positive integer $n$ and binary relation symbol $R_{\alpha}\in\sigma$, \emph{graded modalities} $\Box_{\alpha}^n$ and $\Diamond_{\alpha}^n$ defining $A,a\models \Diamond_{\alpha}^n\phi$ if there are at least $n$ distinct $R_{\alpha}$-successors of $a$ that satisfy~$\phi$ (and $\Box_{\alpha}^n\phi=\neg \Diamond_{\alpha}^n\neg\phi$). A formula in this graded modal logic has \emph{modal depth ${\leq \,}k$} if it contains at most $k$ nested modalities. Let $\CMLk$ be the image of the standard translation of formulas with graded modalities that have modal depth ${\leq \,}k$. 
The following result is a special case of~\cite[Theorem~16]{AbramskyShah2018}. 
\begin{proposition}\label{p:Mk-logic}
    For finite Kripke structures $(A,a)$ and $(B,b)$,
    \[ (A,a) \equiv_{\CMLk} (B,b) \quad\text{iff}\quad F^{\Mk}(A,a) \,\cong\, F^{\Mk}(B,b). \]
\end{proposition}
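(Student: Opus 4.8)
The plan is to reduce the statement to a graded Hennessy--Milner theorem for the depth-$k$ tree unfoldings. First recall that the underlying Kripke structure of the cofree coalgebra $F^{\Mk}(A,a)$ is $\Mk(A,a)$, the synchronization tree of height ${\leq\,}k$ whose points are the paths of length ${\leq\,}k$ out of $a$ and whose root is the trivial path $(a)$ (cf.\ Proposition~\ref{p:Mk-structure}); this is precisely the depth-$k$ unfolding of $(A,a)$. Using the standard fact that the Kleisli category of a comonad embeds into its Eilenberg--Moore category as the full subcategory of cofree coalgebras, an isomorphism $F^{\Mk}(A,a)\cong F^{\Mk}(B,b)$ in $\EM(\Mk)$ is the same as an isomorphism in the Kleisli category, which unwinds, via the coextension operation, to an isomorphism of the two depth-$k$ tree unfoldings as pointed Kripke structures (the coalgebra structure on a synchronization tree being determined by its tree order). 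So it suffices to prove that $(A,a)\equiv_{\CMLk}(B,b)$ if, and only if, the depth-$k$ unfoldings of $(A,a)$ and $(B,b)$ are isomorphic.

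For the easy direction I would observe that a graded modal formula of modal depth ${\leq\,}k$ inspects only the first $k$ steps from the evaluation point, and that the counit $\epsilon_{(A,a)}\colon\Mk(A,a)\to(A,a)$ restricts to a bijection between the $R_\alpha$-successors of the root and the $R_\alpha$-successors of $a$, compatibly at every level up to $k$. Hence $A,a\models\phi$ iff $\Mk(A,a),(a)\models\phi$ for every such $\phi$, and since satisfaction is invariant under isomorphism, isomorphic unfoldings yield $(A,a)\equiv_{\CMLk}(B,b)$.

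The substantial direction is the converse, proved by induction on $k$. For $k=0$ the unfoldings are single labelled points and $\CML_0$-equivalence fixes the valuation, giving an isomorphism. For the inductive step I would use finiteness of $A$ and $B$: there are only finitely many isomorphism types of depth-$(k-1)$ unfoldings occurring among the successors, and by the inductive hypothesis each such type $\tau$ is characterized by a single graded modal formula $\psi_\tau$ of modal depth ${\leq\,}k-1$ (a finite conjunction, since the types are finite in number). The graded modalities $\Diamond_\alpha^n\psi_\tau$, as $n$ ranges over the positive integers, pin down the \emph{exact} number of $R_\alpha$-successors of the root whose subtree has type $\tau$; thus $\equiv_{\CMLk}$ forces these counts to agree in $(A,a)$ and $(B,b)$ for every $\alpha$ and every $\tau$. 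Choosing, for each $\alpha$ and $\tau$, a bijection between the matching successors together with isomorphisms of their depth-$(k-1)$ subtrees (from the inductive hypothesis), and mapping the roots to each other, assembles an isomorphism of the depth-$k$ unfoldings.

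The main obstacle is this last inductive step, and specifically the passage from logical equivalence to \emph{exact} successor counts: it is the presence of the graded (counting) modalities $\Diamond_\alpha^n$ for all $n$, together with finiteness, that upgrades the usual bisimulation argument to a genuine isomorphism of unfoldings. Once the counts match, the level-by-level assembly of the tree isomorphism along the path order is routine bookkeeping. Alternatively, the whole proposition can be obtained by specializing the general game-comonad correspondence of \cite[Theorem~16]{AbramskyShah2018}, which packages exactly this back-and-forth-with-counting argument for the graded modal comonad.
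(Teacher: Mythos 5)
Your proposal is correct, but it takes a genuinely different route from the paper: the paper gives no argument for this proposition at all, simply declaring it a special case of \cite[Theorem~16]{AbramskyShah2018} --- the ``alternative'' you mention in your closing sentence \emph{is} the paper's entire proof. Your direct argument decomposes the statement into two lemmas that the paper leaves packaged inside that citation: (i) an isomorphism $F^{\Mk}(A,a)\cong F^{\Mk}(B,b)$ in $\EM(\Mk)$ is the same thing as an isomorphism of the underlying depth-$k$ unfoldings $\Mk(A,a)\cong\Mk(B,b)$ in $\kri$, and (ii) a depth-$k$ graded Hennessy--Milner theorem for finite structures. For (i), the delicate direction is recovering an $\EM(\Mk)$-isomorphism from a mere isomorphism of Kripke structures; your parenthetical remark is exactly the right key --- a synchronization tree of height ${\leq\,}k$ carries a \emph{unique} coalgebra structure (the counit law, pointedness, and the homomorphism condition force each node to be sent to its root path, inductively along the tree), so any Kripke isomorphism between unfoldings automatically commutes with the comultiplications --- but this deserves to be spelled out, since for game comonads in general coalgebra structures are far from unique (for $\En$ they correspond to arbitrary forest covers). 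For (ii), your induction is sound: finiteness gives finitely many isomorphism types among the successors, the contrapositive of the inductive hypothesis yields a finite separating conjunction $\psi_\tau$ characterising each type \emph{within that finite set}, and the modalities $\Diamond_\alpha^n\psi_\tau$ transfer the exact multiplicities, which is precisely what upgrades bisimulation-style back-and-forth to isomorphism of unfoldings. As for what each approach buys: the paper's citation is short and uniform with how it treats the analogous facts (Propositions~\ref{pr:logic-EM-iso} and~\ref{p:Mk-structure} are likewise imported), whereas your proof is self-contained and makes visible exactly where finiteness and the counting modalities are used.
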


We aim to exploit the two preceding propositions to obtain a characterisation of the equivalence relation $\equiv_{\CMLk}$ over finite Kripke structures in terms of homomorphism counts from finite synchronization trees of height ${\leq\,}k$ (Theorem~\ref{thm:modal-counting} below).

To this end, let $\kri_f$ and $\rkri_f$ be the full subcategories of $\kri$ consisting, respectively, of the finite Kripke structures, and finite rooted Kripke structures. Note that $\Mk$ restricts to a comonad on $\kri_f$, that we denote again by $\Mk$.
Further, as $\Mk(A,a)$ is a finite synchronization tree whenever $(A,a)$ is a finite Kripke structure, $\Mk$ restricts to a comonad $\Mk^*$ on the category $\rkri_f$. 
\begin{lemma}\label{l:rooting}
The following statements hold:
    \begin{enumerate}
        \item the inclusion $\rkri_f \hookrightarrow \kri_f$ has a right adjoint $R$;
        \item $R$ lifts to a functor $\overline R\colon \EM(\Mk)\to \EM(\Mk^*)$  making the following diagram commute.
        \[
            \begin{tikzcd}
                \EM(\Mk)\rar{\overline R} & \EM(\Mk^*) \\
                \kri_f\ar{u}{F^{\Mk}}\rar{R} & \rkri_f\ar[swap]{u}{F^{\Mk^*}}
            \end{tikzcd}
        \]
    \end{enumerate}
\end{lemma}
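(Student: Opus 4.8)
The plan is to construct the right adjoint $R$ explicitly as the ``rooting'' or ``connected component'' operation, and then verify that it interacts well with the modal comonad. For part (1), given a finite pointed Kripke structure $(A,a)$, I would define $R(A,a)$ to be the substructure of $A$ consisting of all elements reachable from $a$ by a (possibly empty) directed path along the relations $R_\alpha^A$, with the same distinguished point $a$ and the induced $\sigma$-structure. This is manifestly rooted. To see it is a right adjoint to the inclusion $\rkri_f\hookrightarrow\kri_f$, I would exhibit the counit as the inclusion $R(A,a)\hookrightarrow (A,a)$ and check the universal property: for any finite \emph{rooted} structure $(S,s)$ and any homomorphism $h\colon (S,s)\to (A,a)$ preserving the basepoint, the image of $h$ must lie inside the reachable part of $A$. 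This last fact is the crux of part (1), and it holds because $h$ sends $s$ to $a$ and preserves the accessibility relations, so any point of $S$ reachable from $s$ (which is all of them, by rootedness) must map to a point reachable from $a$. Hence $h$ factors uniquely through the inclusion $R(A,a)\hookrightarrow (A,a)$, giving the required natural bijection $\hom_{\rkri_f}((S,s),R(A,a))\cong\hom_{\kri_f}((S,s),(A,a))$.

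\textbf{For part (2),} the commutativity $F^{\Mk^*}\circ R = \overline R\circ F^{\Mk}$ at the level of underlying structures should be almost immediate: $F^{\Mk}(A,a)=(\Mk(A,a),\delta)$ has carrier the set of paths of length ${\leq\,}k$ out of $a$, and \emph{every} such path already lies within the reachable part of $A$, so applying $R$ to $\Mk(A,a)$ changes nothing---that is, $R(\Mk(A,a))=\Mk(R(A,a))=\Mk^*(R(A,a))$ as pointed structures. I would define $\overline R$ on a coalgebra $(X,\gamma)\in\EM(\Mk)$ by applying $R$ to the carrier $X$ and checking that the coalgebra structure $\gamma\colon X\to\Mk(X)$ restricts to a coalgebra structure $R(X)\to\Mk^*(R(X))=\Mk(R(X))$. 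The key point making this well-defined is that a coalgebra structure on $(X,x)$ forces $(X,x)$ to be a synchronization tree (by Proposition~\ref{p:Mk-structure}), hence already rooted, so in fact $R(X)=X$ for any object carrying an $\Mk$-coalgebra. This observation trivialises $\overline R$ on \emph{coalgebras} while keeping the cofree objects in the image of $R$.

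\textbf{The main obstacle} I anticipate is less in the definitions and more in getting the naturality and the precise compatibility of the coalgebra structures exactly right: one must confirm that $\overline R$ is genuinely functorial (it sends a coalgebra morphism $h\colon(X,\gamma)\to(Y,\gamma')$ to its restriction, which makes sense because morphisms of pointed structures preserve reachability, so $h$ maps $R(X)$ into $R(Y)$) and that the restricted coextension operation on paths agrees on the nose. The diagram in the statement then commutes because both composites send $(A,a)$ to the cofree coalgebra on the reachable part, and these computations reduce to the identity $R(\Mk(A,a))=\Mk^*(R(A,a))$ noted above. Throughout, finiteness is used only to stay inside $\kri_f$ and $\rkri_f$; since reachable substructures and path structures of finite structures are finite, this presents no difficulty. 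I would present part (2) after part (1) so that the counit/unit triangle identities established for the adjunction can be invoked when verifying that $\overline R$ preserves the comonad structure.
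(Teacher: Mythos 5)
Your proposal is correct and takes essentially the same approach as the paper: $R$ is the reachable-substructure functor with counit the substructure inclusion, $\overline R$ acts on coalgebras by restriction, and the square commutes because every path of length ${\leq\,}k$ out of $a$ already lies in the reachable part, so $\Mk(A,a)=\Mk^*(R(A,a))$. Your additional observation that any $\Mk$-coalgebra carrier is automatically rooted (so $R(X)=X$ on coalgebras) is true, but it should be justified by the counit law $\epsilon_{(A,a)}\circ\alpha=\id$ (each $\alpha(x)$ is a path from $a$ ending at $x$, witnessing reachability) rather than by Proposition~\ref{p:Mk-structure}, whose hypothesis already presupposes rootedness.
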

\begin{proof}
    (1) For any finite Kripke structure $(A,a)$, let $R(A,a)$ be the substructure of $(A,a)$ consisting of the elements accessible from $a$. To show that this assignment extends to a functor $R$ that is a right adjoint to the inclusion $\rkri_f \hookrightarrow \kri_f$, it is enough to observe that, for any $(A,a)\in \kri_f$, $(B,b)\in \rkri_f$, and homomorphism $h\colon (B,b) \to (A,a)$, there exists a unique homomorphism $h'\colon (B,b)\to R(A,a)$ such that $h = e \circ h'$, where $e\colon R(A,a) \to (A,a)$ is the substructure embedding.

    (2) Let $\overline R$ be the functor sending a coalgebra $\alpha\colon (A,a) \to \Mk (A,a)$ to its restriction $\alpha_{\restriction R(A,a)} \colon R(A,a)\to \Mk^*R(A,a)$. Then $\overline R F^{\Mk}(A,a)$ coincides with $F^{\Mk^*}R(A,a)$, for any $(A,a)$ in $\kri_f$, because the carrier of $F^{\Mk}(A,a) = (\Mk(A,a), \delta_A)$ is rooted and consists only of paths of length ${\leq\,}k$ that start at~$a$. The equality on morphisms holds for the same reason.
\end{proof}

We are now ready to prove a Lov\'asz-type theorem for graded modal logic.
\begin{theorem}\label{thm:modal-counting}
    Let $(A,a)$ and $(B,b)$ be finite (pointed) Kripke structures. Then $(A,a) \equiv_{\CMLk} (B,b)$ if, and only if,
    \[ |\hom_{\kri_f}((C,c),(A,a))| \,=\, |\hom_{\kri_f}((C,c),(B,b))| \]
    for every finite synchronization tree $(C,c)$ of height ${\leq \,} k$.
\end{theorem}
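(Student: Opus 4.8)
The plan is to chain together the two characterisations furnished by Propositions~\ref{p:Mk-logic} and~\ref{p:Mk-structure} with the abstract counting result, exactly as in the proof of Theorem~\ref{t:general-statement}, but with the rooting adjunction of Lemma~\ref{l:rooting} playing the role that the equality-elimination machinery played there. First, by Proposition~\ref{p:Mk-logic} the equivalence $(A,a)\equiv_{\CMLk}(B,b)$ is the same as the isomorphism $F^{\Mk}(A,a)\cong F^{\Mk}(B,b)$ in $\EM(\Mk)$. Since $\kri$ is a category of pointed $\sigma$-structures and $\Mk$ restricts to $\kri_f$, Remark~\ref{r:pointed-structures} applies, so $\EM(\Mk)$ (equivalently $\EM_f(\Mk)$) is combinatorial by Corollary~\ref{cor:combinatorial-EM-f}. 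Note that $F^{\Mk}(A,a)$ is a finite coalgebra because $\Mk$ preserves finiteness. Hence the isomorphism above is equivalent to the equality $|\hom_{\EM(\Mk)}(D,F^{\Mk}(A,a))|=|\hom_{\EM(\Mk)}(D,F^{\Mk}(B,b))|$ ranging over all finite coalgebras $D$.

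Next I would push this equality down to $\kri_f$ via the adjunction $U^{\Mk}\dashv F^{\Mk}$: the hom-set $\hom_{\EM(\Mk)}(D,F^{\Mk}(A,a))$ is naturally isomorphic to $\hom_{\kri_f}(U^{\Mk}(D),(A,a))$, and likewise for $(B,b)$. Thus the condition becomes equality of $|\hom_{\kri_f}(C',(A,a))|$ and $|\hom_{\kri_f}(C',(B,b))|$ as $C'$ ranges over the objects $U^{\Mk}(D)$, i.e.\ over $\im(U^{\Mk})$. It then remains to identify $\im(U^{\Mk})$ with the finite synchronization trees of height ${\leq\,}k$: the counit law $\epsilon\circ\alpha=\id$ forces any structure carrying a $\Mk$-coalgebra to be rooted, since $\alpha(a')$ is necessarily a path from the root ending at $a'$, so Proposition~\ref{p:Mk-structure} shows that $\im(U^{\Mk})$ consists precisely of these trees. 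Since the comparison objects are exactly synchronization trees of height ${\leq\,}k$ and the counts are taken in $\kri_f$, this already matches the statement to be proved.

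The one step I expect to require genuine care is reconciling the rooted and non-rooted worlds: the comparison objects (synchronization trees) are rooted, whereas $(A,a)$ and $(B,b)$ need not be. This is where Lemma~\ref{l:rooting} is the clean tool. The commuting square $\overline R\,F^{\Mk}=F^{\Mk^*}R$ of part~(2) lets one transport the cofree-coalgebra isomorphism between $\EM(\Mk)$ and $\EM(\Mk^*)$ (the carriers agree, as a path from $a$ only visits the rooted part $R(A,a)$, so the isomorphism in one category corresponds to one in the other), while the adjunction $\iota\dashv R$ of part~(1) identifies $\hom_{\rkri_f}((C,c),R(A,a))$ with $\hom_{\kri_f}((C,c),(A,a))$ for any synchronization tree $(C,c)$. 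Running the whole argument inside $\rkri_f$ with the comonad $\Mk^*$ has the advantage that $\im(U^{\Mk^*})$ is visibly the class of synchronization trees of height ${\leq\,}k$, so no separate rootedness observation is needed; the price is that one must first check that Corollary~\ref{cor:combinatorial-EM-f} applies to $\rkri_f$, i.e.\ that $\rkri_f$ is closed under finite colimits and epimorphic quotients inside a cocomplete, well-copowered $\rkri$. Either packaging closes the proof, and no new combinatorial input beyond the two modal propositions and the abstract Corollary~\ref{cor:combinatorial-EM-f} is required.
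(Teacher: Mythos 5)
Your proposal is correct, and your primary packaging (the first two paragraphs) takes a genuinely different route from the paper's proof. The paper works in the rooted world: it converts the homomorphism counts into counts from coalgebras of $\Mk^*$ over $\rkri_f$ (using Proposition~\ref{p:Mk-structure}, the adjunction between the inclusion $\rkri_f\into\kri_f$ and $R$, and $U^{\Mk^*}\dashv F^{\Mk^*}$), applies combinatoriality in $\EM(\Mk^*)$ to obtain $F^{\Mk^*}(R(A,a))\cong F^{\Mk^*}(R(B,b))$, and then needs Lemma~\ref{l:rooting} together with the fact that $\overline{R}$ is full and faithful on the image of $F^{\Mk}$ in order to pass to $F^{\Mk}(A,a)\cong F^{\Mk}(B,b)$ before invoking Proposition~\ref{p:Mk-logic}. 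You instead stay in $\kri_f$ throughout: Proposition~\ref{p:Mk-logic}, combinatoriality of $\EM_f(\Mk)$ (legitimately obtained from Corollary~\ref{cor:combinatorial-EM-f} via Remark~\ref{r:pointed-structures}, since $\kri$ is a category of pointed structures), the adjunction $U^{\Mk}\dashv F^{\Mk}$, and finally the identification of $\kri_f\cap\im(U^{\Mk})$ with the finite synchronization trees of height ${\leq\,}k$. That last identification is the only delicate point, since Proposition~\ref{p:Mk-structure} is stated for \emph{rooted} structures only, and your counit-law argument (a coalgebra structure sends each $a'$ to a path from $a$ ending at $a'$, so the carrier is automatically rooted) supplies exactly what is missing. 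With this in place, Lemma~\ref{l:rooting}, the comonad $\Mk^*$ and the category $\rkri_f$ are not needed at all, so the rooted/non-rooted tension your third paragraph worries about never arises in your packaging: after applying $U^{\Mk}\dashv F^{\Mk}$, the homs land on $(A,a)$ and $(B,b)$ themselves, not on their rooted parts. Comparing what each route buys: yours is more economical, replacing the rooting apparatus by one small combinatorial observation; the paper's keeps Proposition~\ref{p:Mk-structure} applied verbatim and isolates the rooting adjunction as a reusable tool, but it carries the obligation (which you correctly flag for your second packaging) of justifying that $\EM(\Mk^*)$ is combinatorial---most easily done via the isomorphism $\EM(\Mk^*)\cong\EM_f(\Mk)$, which rests on the very same rootedness observation you made.
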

\begin{proof}
    Let $\mathbb A \coloneqq (A,a)$ and $\mathbb B \coloneqq (B,b)$. For the purpose of this proof, write $V\colon \rkri_f \into \kri_f$ for the inclusion functor and let $\EM^*\coloneqq\EM(\Mk^*)$. By Proposition~\ref{p:Mk-structure}, the statement about homomorphism counts is equivalent to
    \[ |\hom_{\kri_f}(V(U^{\Mk^*}(\mathbb D)),\mathbb A)| \,=\, |\hom_{\kri_f}(V(U^{\Mk^*}(\mathbb D)), \mathbb B)| \]
    for all coalgebras $\mathbb D$ in $\EM^*$. In turn, by virtue of the adjunctions $V \dashv R$ and $U^{\Mk^*} \dashv F^{\Mk^*}$, this is equivalent to
    \[ |\hom_{\EM^*}(\mathbb D, F^{\Mk^*}(R(\mathbb A)))| \,=\, |\hom_{\EM^*}(\mathbb D, F^{\Mk^*}(R(\mathbb B)))| \]
    for all $\mathbb D\in\EM^*$. Since $\EM^*$ is combinatorial by Corollary~\ref{cor:combinatorial-EM-f} and Remark~\ref{r:pointed-structures}, the last statement is equivalent to $F^{\Mk^*}(R(\mathbb A)) \cong F^{\Mk^*}(R(\mathbb B))$ and thus, by Lemma~\ref{l:rooting}, to $\overline{R}(F^{\Mk}(\mathbb A))\cong \overline{R}(F^{\Mk}(\mathbb B))$. Note that the restriction of $\overline{R}$ to the image of $F^{\Mk}$ is  full and faithful. Hence, $\overline{R}(F^{\Mk}(\mathbb A))\cong \overline{R}(F^{\Mk}(\mathbb B))$ if, and only if, $F^{\Mk}(\mathbb A) \cong F^{\Mk}(\mathbb B)$ which, by Proposition~\ref{p:Mk-logic}, is equivalent to $\mathbb A \equiv_{\CMLk} \mathbb B$.
\end{proof}

\section{Logical Normal Forms}\label{s:normal-forms}
Let $\CL^k_{\infty}$ denote the closure of $\CL^k$ under infinitary conjunctions and disjunctions.  This infinitary logic has been much studied in finite model theory as a means of delimiting the expressive power of fixed-point logics with counting.  In this logic we can express all, and only, the properties of finite structures that are invariant under the equivalence relation~$\equiv_{\CL^k}$.  In fact, since it is known that each finite structure is characterised up to $\equiv_{\CL^k}$ by a single sentence of $\CL^k$, it follows that every class of finite structures that is invariant under $\equiv_{\CL^k}$ is defined by a single infinitary disjunction of $\CL^k$ sentences (see~\cite{Otto1997} for details).  An entirely analogous normal form holds for $\CL_{n,\infty}$---the closure of $\CL_n$ under infinitary conjunctions and disjunctions (see~\cite[Chapter~8]{Lib2004}).

The theorems of Dvo{\v{r}}{\'a}k and Grohe provide a route to alternative normal forms for these infinitary logics.  To see this, note that for any finite structure $A$, we can write a primitive positive sentence $\gamma_A$ of first-order logic such that for any $B$, $B \models \gamma_A$ if, and only if, there is a homomorphism from $A$ to $B$.  Here, \emph{primitive positive} means that the sentence is built up from atomic formulas using only conjunctions and existential quantification.  The sentence $\gamma_A$ is called the \emph{canonical conjunctive query of $A$}  (see~\cite[Chapter~6]{Graedel2007FMT}).  It is known that if $A$ has tree-width strictly less than $k$, then $\gamma_A$ can be written using no more than $k$ variables~\cite[Lemma~5.2]{KolaitisVardi2000}.  Similarly, if $A$ has tree-depth at most $n$, then $\gamma_A$ can be chosen to have quantifier depth at most $n$~\cite[Lemma~2.14]{Rossman2008}.

Given a positive integer $t$, we can transform $\gamma_A$ by standard means into a sentence $\gamma_A^t$ of $\CL$ which asserts that there are at least $t$ distinct homomorphisms from $A$.  That is, for any $B$, $B \models \gamma^t_A$ if, and only if, $|\hom(A,B)| \geq  t$.  To show this, we establish something more general.

Let $\gamma(\bar{x})$ be a primitive positive formula with free variables among $\bar{x}$.
For any structure $B$ and any interpretation $\iota$ taking the variables in $\bar{x}$ to elements of $B$, we write $B \models \gamma[\iota]$ to indicate that the formula $\gamma$ is satisfied in $B$ when the free variables  $\bar{x}$ are interpreted according to $\iota$.  Since $\gamma$ is primitive positive, we have $B \models \gamma[\iota]$ precisely if there is a function $\kappa$ mapping the existential quantifiers in $\gamma$ to elements of $B$ so that every atomic formula $R(\bar{y})$ occurring in $\gamma$ is satisfied in $B$ when each variable $y$ in $\bar{y}$ is interpreted by $\iota(y)$, if this occurrence of $y$ is free in $\gamma$, and by $\kappa$ applied to the quantifier binding $y$ otherwise.  We write $|B \models \gamma[\iota]|$ to denote the number of distinct functions $\kappa$ witnessing  $B \models \gamma[\iota]$ and observe that $|B \models \gamma_A| = |\hom(A,B)|$.

We now define by induction on the structure of $\gamma$ a formula $\gamma^t(\bar{x})$ of $\CL$  with the property that for any structure $B$ and interpretation $\iota$ of $\bar{x}$ in $B$, we have $B \models \gamma^t[\iota]$ if, and only if, $|B \models \gamma[\iota]| \geq t$.  For atomic $\gamma$, $\gamma^1$ is just $\gamma$ and $\gamma^t$ is \texttt{false} if $t > 1$.  For $\gamma$  a conjunction $\gamma_1 \land \gamma_2$, let $T \coloneqq \{ (t_1,t_2) \in \N^2 \mid t_1\cdot t_2 \geq t\}$.  Then 
\[
\gamma^t \coloneqq  \bigvee_{(t_1,t_2) \in T} (\gamma^{t_1}_1 \land \gamma^{t_2}_2).
\]
When $\gamma$ is $\exists x. \gamma_1$, let $F$ be the collection of all finite partial functions $f$ on $\N$ such that $\sum_{s \in \mathrm{dom}(f)}sf(s) \geq t$.  Then
\[
\gamma^t \coloneqq \bigvee_{f \in F}\bigwedge_{s \in \mathrm{dom}(f)}\exists_{\geq f(s)}x. \gamma_1^s.
\]

Note that  both the quantifier rank and the total number of variables that appear in $\gamma^t_A$ are the same as for $\gamma_A$, and  $\gamma^t_A$ still has no negation symbols.  It then follows from Dvo{\v{r}}{\'a}k's theorem that any sentence $\phi$ of $\CL^k_{\infty}$ is equivalent in the finite (i.e., over finite structures) to an infinite Boolean combination of sentences of $\CL^k$ of the form~$\gamma^t_A$. Indeed, $\phi$ is equivalent to
\[ \bigvee_{B\in \mathcal M} \ \bigwedge_{A\in \mathcal{T}} \left( \gamma^{|\hom(A,B)|}_A
  \land \neg \gamma^{|\hom(A,B)|+1}_A\right) \]
where $\mathcal M$ consists of one representative for each $\equiv_{\CL^k}$-class of structures in $\{ B \mid B \models \phi\}$ and $\mathcal{T}$ consists of one representative for each isomorphism class of finite structures that have tree-width at most $k-1$.

Likewise, it is a consequence of Grohe's theorem that any sentence of $\CL_{n,\infty}$ is equivalent (in the finite) to an infinite Boolean combination of sentences of $\CL_n$ of the form $\gamma^t_A$; a similar normal form was also exhibited in \cite[\S 5]{grohe2020counting}.  This yields an interesting normal form for these counting logics.  In particular, there is no need of universal quantifiers or of equality, as neither of these is used in the sentences $\gamma^t_A$.  Further, if we allow dual counting quantifiers $\exists_{{\leq}i}$ with the semantics that $A \models \exists_{{\leq}i} x.\, \varphi$ if, and only if, $A \models \varphi[a/x]$ holds for at most $i$ distinct elements $a\in A$, then we do not need negation at all in our formulas.

From our point of view, the most interesting aspect of this is the elimination of equality symbols.  Consider for example the simple first-order sentence in the language of graphs: $\exists x\exists y.(x \neq y \land E(x,y))$ asserting the existence of an edge between two distinct vertices.  At first sight, it does not seem possible to express this property without using the equality symbol.  However, it is possible to do so with infinitary Boolean connectives and counting quantifiers as we illustrate.

First, consider the formula $\exists_{\geq t}xy. E(x,y)$ asserting that there exist at least $t$ pairs $x,y$ for which $E(x,y)$.  Note this is just $\gamma^t$ when $\gamma$ is the sentence $\exists x \exists y. E(x,y)$.   Now, the sentence $\exists x\exists y.(x \neq y \land E(x,y))$ is equivalent to
\[
\bigvee_{i \in \N} \left(\exists_{\geq i}xy. E(x,y) \land \exists_{\leq i-1} x. E(x,x) \right).
\]

We now give a direct proof, in our categorical framework, that we can eliminate equality from $\CL^k_{\infty}$ and $\CL_{n,\infty}$.

\begin{theorem}\label{th:equality-elim}
Every sentence of $\CL^k_{\infty}$ is equivalent in the finite to one without equality, and every sentence of $\CL_{n,\infty}$ is equivalent in the finite to one without equality.
\end{theorem}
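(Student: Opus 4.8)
The plan is to reduce the equality-elimination statement to the homomorphism-counting characterisations already established in our categorical framework, namely Dvo\v r\'ak's theorem (Theorem~\ref{thm:Dvorak}) for $\CL^k$ and Grohe's theorem (Theorem~\ref{thm:Grohe}) for $\CL_n$, exploiting the crucial fact that the canonical conjunctive query $\gamma_A$ and all of its counting variants $\gamma_A^t$ are free of equality (indeed, free of negation). I treat the two cases uniformly and describe only the $\CL^k_{\infty}$ case; the $\CL_{n,\infty}$ case is verbatim the same with ``tree-width at most $k-1$'' replaced by ``tree-depth at most $n$'' and Theorem~\ref{thm:Dvorak} replaced by Theorem~\ref{thm:Grohe}.

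First I would fix a sentence $\phi$ of $\CL^k_{\infty}$ and recall that, since every sentence of $\CL^k_{\infty}$ is invariant under $\equiv_{\CL^k}$, the class $\{B \mid B \models \phi\}$ of finite models of $\phi$ is a union of $\equiv_{\CL^k}$-equivalence classes. By Dvo\v r\'ak's theorem, two finite structures are $\equiv_{\CL^k}$-equivalent precisely when $|\hom(C,A)| = |\hom(C,B)|$ for every finite $C$ of tree-width at most $k-1$; thus the $\equiv_{\CL^k}$-class of a finite structure is completely determined by the integer vector $(|\hom(C,B)|)_C$ indexed by such $C$.

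The conceptual heart of the argument is then that the \emph{equality-free} counting sentences already separate finite structures exactly up to $\equiv_{\CL^k}$, so that passing to the equality-free fragment loses no distinguishing power. Concretely, for each finite $C$ of tree-width at most $k-1$ the canonical conjunctive query $\gamma_C$ can be written with at most $k$ variables by \cite[Lemma~5.2]{KolaitisVardi2000}, and the counting variant $\gamma_C^t$ constructed above---which satisfies $B \models \gamma_C^t$ if, and only if, $|\hom(C,B)| \geq t$---inherits the same variable bound, so $\gamma_C^t \in \CL^k$, and none of the $\gamma_C^t$ uses equality. I would then express $\phi$ as the infinitary disjunction, over a system $\mathcal M$ of representatives of the $\equiv_{\CL^k}$-classes contained in $\{B \mid B \models \phi\}$, of the equality-free descriptions $\bigwedge_{C \in \mathcal{T}} (\gamma_C^{|\hom(C,B)|} \wedge \neg\,\gamma_C^{|\hom(C,B)|+1})$, where $\mathcal{T}$ is a set of representatives of the isomorphism types of finite structures of tree-width at most $k-1$; this is exactly the normal form displayed before the statement. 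Finally, to obtain a sentence with no equality (and, if one wishes, no negation), I would replace each $\neg\,\gamma_C^{|\hom(C,B)|+1}$ by the equivalent dual counting assertion built from $\exists_{{\leq}i}$, following the remark above.

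The argument is essentially bookkeeping once the comonadic homomorphism-counting theorems are in hand: the only genuinely load-bearing inputs are Theorems~\ref{thm:Dvorak} and~\ref{thm:Grohe} together with the realisability of $\gamma_A$ within the appropriate fragment (\cite[Lemma~5.2]{KolaitisVardi2000} for variables, \cite[Lemma~2.14]{Rossman2008} for quantifier depth). The step I expect to require the most care is verifying that the construction $\gamma_A \mapsto \gamma_A^t$ preserves the relevant resource bound---number of variables in the tree-width case, quantifier depth in the tree-depth case---without reintroducing equality; that is, checking that the inductive clauses for conjunction and existential quantification do not inflate these parameters. This is a routine induction on the structure of $\gamma_A$ and does not depend on the categorical machinery, so I would relegate it to the explicit construction already given before the theorem.
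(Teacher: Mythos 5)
Your proposal is correct, but it takes a genuinely different route from the paper's own proof of Theorem~\ref{th:equality-elim}. You follow the syntactic normal-form argument: invariance of $\CL^k_{\infty}$-sentences under $\equiv_{\CL^k}$, Dvo\v{r}\'ak's theorem (Theorem~\ref{thm:Dvorak}) to identify $\equiv_{\CL^k}$-classes with homomorphism-count vectors over structures of tree-width at most $k-1$, and the equality-free sentences $\gamma_C^t$ (with the resource bounds of Kolaitis--Vardi and Rossman, and the preservation of those bounds under $\gamma \mapsto \gamma^t$) to express those vectors inside the logic. This is precisely the normal form the paper displays in the discussion \emph{preceding} the theorem, but it is not what the paper uses as the proof: the paper instead gives the advertised ``direct proof in the categorical framework''. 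That proof reduces the claim to showing that indistinguishability in equality-free $\CL_n$ implies $\equiv_{\CL_n}$; it then observes that agreement on equality-free $\CL_n$ yields $F^{\En}(A) \cong F^{\En}(B)$ (the comonad taken over $\sigma$ itself, with no added relation $I$, captures the equality-free fragment), transports this isomorphism along the adjunction $U^{\En} \dashv F^{\En}$ to get $|\hom(C,A)| = |\hom(C,B)|$ for every finite $C$ of tree-depth at most $n$ (via Proposition~\ref{p:En-coalgebras}), and concludes $A \equiv_{\CL_n} B$ by Theorem~\ref{thm:Grohe}. Both arguments have Grohe's and Dvo\v{r}\'ak's theorems as the essential load-bearing input, but they differ in their auxiliary machinery and in what they deliver: your route needs the canonical-conjunctive-query apparatus and the correctness of the $\gamma \mapsto \gamma^t$ construction, and in exchange produces an explicit equality-free (and, with dual counting quantifiers, negation-free) normal form witnessing the equivalence; the paper's route avoids syntax entirely, needing only the comonadic characterisation of the equality-free fragment, but proves a pure indistinguishability transfer and exhibits no explicit normal form. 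One small point of care in your version: the inductive clauses defining $\gamma^t$ as written range over infinite index sets $T$ and $F$, so strictly speaking they produce infinitary disjunctions unless one prunes to the (finitely many) minimal witnesses; this is harmless here, since even the unpruned $\gamma_C^t$ lies in the equality-free part of $\CL^k_{\infty}$ (respectively $\CL_{n,\infty}$), which is all your argument requires.
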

\begin{proof}
We give the proof only for $\CL_{n,\infty}$ as the other is entirely analogous.  It suffices to prove that if two finite structures $A,B$ are not distinguished by any sentence of $\CL_n$ without equality, then they are not distinguished by any sentence of $\CL_n$.

Suppose then that $A$ and $B$ are not distinguished by any sentence of $\CL_n$ without equality.  We thus have $F^{\En}(A) \cong F^{\En}(B)$ in $\EM(\En)$ and so $\hom(D,F^{\En}(A)) \cong \hom(D, F^{\En}(B))$ for every $D$ in $\EM_f(\En)$.  By the adjunction $U^{\En} \dashv F^{\En}$, we have $\hom(U^{\En}(D),A) \cong \hom(U^{\En}(D),B)$ for every $D$ in $\EM_f(\En)$.  Since a finite structure has tree-depth at most $n$ if, and only if, it admits an $\En$-coalgebra structure, this implies $\hom(C,A) \cong \hom(C,B)$ for every finite structure $C$ with tree-depth at most $n$ and hence, by Theorem~\ref{thm:Grohe}, $A \equiv_{\CL_n} B$.
\end{proof}

\section{Conclusion}
The ideas developed in this paper connect homomorphism counting results in finite model theory with the framework of game comonads recently put forward by Abramsky et al.

Our categorical generalisation of Lov\'asz' theorem allows us to give uniform proofs of results by Dvo\v r\'ak and Grohe. The only part of the proof specific to each of these results is the combinatorial argument underlying the fact that the adjunction between $\sigma$-structures and $\sigma^+$-structures restricts to structures of bounded tree-width and bounded tree-depth, respectively. We also establish a new homomorphism counting result for modal logic which characterises the equivalence relation $\equiv_{\CMLk}$ over finite Kripke structures in terms of homomorphism counts.

As a by-product of the modularity of this general framework, we are immediately able to deduce: (i) the characterisation of the equivalence relation $\equiv_{\CL^k_n}$ in terms of homomorphism counts from structures admitting a $k$-pebble forest cover of height ${\leq\,}n$, (ii) the specialisation of our results to undirected graphs without loops (cf.\ Remark~\ref{rm:counting-tree-width-subclasses}), and (iii) equality elimination results for counting logics (Theorem~\ref{th:equality-elim}).

We believe that our method lays a pathway to discovering more Lov\'asz-type theorems. In particular, any comonad on the category of $\sigma$-structures that satisfies the conditions of Theorem~\ref{t:general-statement} will yield a Lov\'asz-type theorem. The natural next step to test this theory is to apply our results to the game comonads introduced in \cite{abramsky2020comonadic} and \cite{conghaile2021game}.

On the other hand, there are many Lov\'asz-type theorems that do not immediately fit into our framework. For example,  quantum isomorphism \cite{manvcinska2020quantum} and co-spectrality are characterised by restricting the homomorphism counts to planar graphs and cycles, respectively. Are there suitable comonads (or monads) that could allow us to recover these characterisations in a uniform manner? Another line for future investigation consists in fine-tuning the general theory introduced in this paper. For example, can we refine Theorem~\ref{thm:modal-counting} to yield a homomorphism counting result for p-morphisms, the natural concept of morphism between Kripke structures?

The ideas that we have described in this paper contribute to advancing the theory of game comonads, which connects two distinct areas of logic in computer science: on the one hand finite and algorithmic model theory, and on the other hand categorical and structural methods in semantics. Game comonads have been successfully applied, e.g., to investigate equirank-variable homomorphism preservation theorems in finite model theory~\cite{paine2020pebbling}. We broaden the applicability of this theory by bringing these techniques to the area of Lov\'asz-type results in finite model theory.

\section*{Acknowledgment}
 We acknowledge useful discussions in the Cambridge-Oxford Comonads seminar. The last-named author is grateful to Andr{\'e} Joyal for explaining to him (private communication) the construction $E(\, {\text -}\, )\mapsto \St{E}{\, {\text -} \, }$ for polyadic spaces~\cite{Joyal1971}, which features in Lemma~\ref{l:cores-isomorphic}.


\bibliographystyle{plain}

\begin{thebibliography}{10}

\bibitem{Abramsky2017b}
S.~Abramsky, A.~Dawar, and P.~Wang.
\newblock The pebbling comonad in finite model theory.
\newblock In {\em 32nd Annual {ACM/IEEE} Symposium on Logic in Computer
  Science, {LICS}}, pages 1--12, 2017.

\bibitem{abramsky2020comonadic}
S.~Abramsky and D.~Marsden.
\newblock Comonadic semantics for guarded fragments.
\newblock Preprint available at
  \textnormal{\href{https://arxiv.org/abs/2008.11094}{arXiv:2008.11094}}, 2020.

\bibitem{AbramskyShah2018}
S.~Abramsky and N.~Shah.
\newblock Relating {S}tructure and {P}ower: Comonadic semantics for
  computational resources.
\newblock In {\em 27th {EACSL} Annual Conference on Computer Science Logic,
  {CSL}}, pages 2:1--2:17, 2018.

\bibitem{AbramskyShah-ext}
S.~Abramsky and N.~Shah.
\newblock Relating {S}tructure and {P}ower: Extended version.
\newblock To appear in {Journal of Logic and Computation}, preprint available
  at \textnormal{\href{https://arxiv.org/abs/2010.06496}{arXiv:2010.06496}},
  2021.

\bibitem{AHS1990}
J.~Ad\'{a}mek, H.~Herrlich, and G.~E. Strecker.
\newblock {\em Abstract and concrete categories}.
\newblock John Wiley \& Sons, Inc., New York, 1990.

\bibitem{AR1994}
J.~Ad\'{a}mek and J.~Rosick\'{y}.
\newblock {\em Locally presentable and accessible categories}, volume 189 of
  {\em London Mathematical Society Lecture Note Series}.
\newblock Cambridge University Press, Cambridge, 1994.

\bibitem{BdRV2001}
P.~Blackburn, M.~de~Rijke, and Y.~Venema.
\newblock {\em Modal logic}, volume~53 of {\em Cambridge Tracts in Theoretical
  Computer Science}.
\newblock Cambridge University Press, Cambridge, 2001.

\bibitem{Borceux1}
F.~Borceux.
\newblock {\em Handbook of categorical algebra. 1}, volume~50 of {\em
  Encyclopedia of Mathematics and its Applications}.
\newblock Cambridge University Press, Cambridge, 1994.

\bibitem{CFI92}
J.-Y. Cai, M.~F\"{u}rer, and N.~Immerman.
\newblock An optimal lower bound on the number of variables for graph
  identification.
\newblock {\em Combinatorica}, 12(4):389--410, 1992.

\bibitem{conghaile2021game}
A.~{\noopsort{Conghaile}}{\'{O} Conghaile} and A.~Dawar.
\newblock Game comonads {\&} generalised quantifiers.
\newblock In {\em 29th {EACSL} Annual Conference on Computer Science Logic,
  {CSL}}, pages 16:1--16:17, 2021.

\bibitem{dvovrak2010recognizing}
Z.~Dvo{\v{r}}{\'a}k.
\newblock On recognizing graphs by numbers of homomorphisms.
\newblock {\em Journal of Graph Theory}, 64(4):330--342, 2010.

\bibitem{freyd1972categories}
P.~J. Freyd and G.~M. Kelly.
\newblock Categories of continuous functors, {I}.
\newblock {\em Journal of Pure and Applied Algebra}, 2(3):169--191, 1972.

\bibitem{Graedel2007FMT}
E.~Gr\"{a}del, P.~G. Kolaitis, L.~Libkin, M.~Marx, J.~Spencer, M.~Y. Vardi,
  Y.~Venema, and S.~Weinstein.
\newblock {\em Finite Model Theory and Its Applications}.
\newblock Springer, 2007.

\bibitem{grohe2020counting}
M.~Grohe.
\newblock Counting bounded tree depth homomorphisms.
\newblock In {\em Proceedings of the 35th Annual ACM/IEEE Symposium on Logic in
  Computer Science}, pages 507--520, 2020.

\bibitem{isbell1991some}
J.~Isbell.
\newblock Some inequalities in hom sets.
\newblock {\em Journal of Pure and Applied Algebra}, 76(1):87--110, 1991.

\bibitem{Joyal1971}
A.~Joyal.
\newblock Polyadic spaces and elementary theories.
\newblock {\em Notices of the {AMS}}, 18(3):563, 1971.
\newblock Preliminary report.

\bibitem{Kiefer2020WL}
S.~Kiefer.
\newblock The {W}eisfeiler-{L}eman algorithm: an exploration of its power.
\newblock {\em {ACM} {SIGLOG} News}, 7(3):5--27, 2020.

\bibitem{KolaitisVardi2000}
P.~G. Kolaitis and M.~Y Vardi.
\newblock Conjunctive query containment and constraint satisfaction.
\newblock {\em Journal of Computer and System Sciences}, 61:302--332, 2000.

\bibitem{Lib2004}
L.~Libkin.
\newblock {\em Elements of Finite Model Theory}.
\newblock Springer, 2004.

\bibitem{Lovasz1967}
L.~Lov\'{a}sz.
\newblock Operations with structures.
\newblock {\em Acta Math. Acad. Sci. Hungar.}, 18:321--328, 1967.

\bibitem{Lovasz1972}
L.~Lov\'{a}sz.
\newblock Direct product in locally finite categories.
\newblock {\em Acta Math. Acad. Sci. Hungar.}, 23:319--322, 1972.

\bibitem{MacLane1998}
S.~Mac~Lane.
\newblock {\em Categories for the working mathematician}, volume~5 of {\em
  Graduate Texts in Mathematics}.
\newblock Springer-Verlag, New York, second edition, 1998.

\bibitem{manvcinska2020quantum}
L.~Man{\v{c}}inska and D.~E. Roberson.
\newblock Quantum isomorphism is equivalent to equality of homomorphism counts
  from planar graphs.
\newblock In {\em 2020 IEEE 61st Annual Symposium on Foundations of Computer
  Science (FOCS)}, pages 661--672, 2020.

\bibitem{Otto1997}
M.~Otto.
\newblock {\em Bounded Variable Logics and Counting --- A Study in Finite
  Models}, volume~9 of {\em Lecture Notes in Logic}.
\newblock Springer-Verlag, 1997.

\bibitem{paine2020pebbling}
T.~Paine.
\newblock A pebbling comonad for finite rank and variable logic, and an
  application to the equirank-variable homomorphism preservation theorem.
\newblock {\em Electronic Notes in Theoretical Computer Science}, 352:191--209,
  2020.

\bibitem{pultr1973isomorphism}
A.~Pultr.
\newblock Isomorphism types of objects in categories determined by numbers of
  morphisms.
\newblock {\em Acta Scientiarum Mathematicarum}, 35:155--160, 1973.

\bibitem{Ramana1994fractional}
M.~V. Ramana, E.~R. Scheinerman, and D.~Ullman.
\newblock Fractional isomorphism of graphs.
\newblock {\em Discrete Mathematics}, 132:247--265, 1994.

\bibitem{riehl2008factorization}
E.~Riehl.
\newblock Factorization systems.
\newblock Notes available at
  \url{http://www.math.jhu.edu/~eriehl/factorization.pdf}.

\bibitem{Rossman2008}
B.~Rossman.
\newblock Homomorphism preservation theorems.
\newblock {\em Journal of the ACM}, 55, 2008.

\end{thebibliography}

\end{document}